\tikzset{every state/.style={minimum size=30}}
\DeclareMathOperator{\dsc}{\mathrm{sc}}
\DeclareMathOperator{\lcm}{\mathrm{lcm}}
\newcommand{\eps}{\varepsilon}
\newcommand{\emp}{\emptyset}
\newcommand{\cD}{\mathcal{D}}
\newtheorem{theorem}{Theorem}
\newtheorem{corollary}[theorem]{Corollary}
\newtheorem{lemma}[theorem]{Lemma}
\newtheorem{proposition}[theorem]{Proposition}
\newtheorem{claim}[theorem]{Claim}
\newtheorem{definition}[theorem]{Definition}
\newtheorem{example}[theorem]{Example}
\begin{document}

\title{State Complexity of Multiple Concatenation\thanks{A preliminary version of this paper appeared in the proceedings  of DCFS~2020 conference~\cite{jj20}.}} 

\author{Jozef Jir\'asek\footnote{Research supported by VEGA grant 1/0350/22 and grant APVV-24-0103.}\\
	Institute of Computer Science,
	P.\,J. \v Saf\'arik University \\
	Jesenn\'a 5, 040 01 Ko\v sice, 
	Slovakia\\
	\tt jozef.jirasek@upjs.sk
	\and Galina Jir\'askov\'a\footnote{Research supported by VEGA grant 2/0096/23 and grant APVV-24-0103.}\\
	Mathematical Institute,
	Slovak Academy of Sciences\\
	Gre\v{s}\'akova 6, 040 01 Ko\v{s}ice,
	Slovakia\\
	\tt jiraskov@saske.sk}

\date{\today}	

\maketitle

\begin{abstract}
We describe witness languages meeting the upper bound 
on the state complexity of the multiple concatenation of $k$ regular languages
over an alphabet of size~$k+1$  with a significantly simpler proof
than that in the literature. 
We also consider the case where some languages may be   
recognized by two-state automata.
Then we show that one symbol can be  saved,  
and we define witnesses for the multiple concatenation 
of $k$   languages over a~$k$-letter alphabet.
This solves an open problem stated by Caron et al. [2018, 	Fundam. Inform. 160, 255--279].
We prove that for the concatenation of three languages,   
the ternary alphabet is optimal.
We also show that a trivial upper bound
on the state complexity of multiple concatenation is 
asymptotically tight for ternary languages,
and that a lower bound remains exponential in the binary case. 
Finally, we  obtain a tight upper bound
for unary cyclic languages and languages
recognized by unary automata that do not have  final states in their tails.
\end{abstract}

\section{Introduction}
\label{s:in}
Given formal languages~$L_1,L_2,\ldots,L_k$
over an alphabet~$\Sigma$,
their concatenation is the 
language $L_1L_2\cdots L_k=\{u_1u_2\cdots u_k\mid
u_i\in L_i \text{ for }i=1,2,\ldots,k\}$.
Here we consider the case where all languages
are regular and ask the question of how many states
are sufficient and necessary in the worst case
for a deterministic finite automaton 
to recognize their concatenation
assuming that each~$L_i$
is recognized by an~$n_i$-state 
deterministic finite automaton.

The first results for the concatenation
of two regular languages
were obtained by Maslov~\cite{ma70} in~1970.
In particular, he described binary witnesses meeting the upper bound $n_12^{n_2}-2^{n_2-1}$.
In~1994 Yu et al.~\cite{yzs94} proved that this upper bound
cannot be met if the first language
is recognized by a minimal deterministic
finite automaton that has more than one final state.

The concatenation of three and four regular languages
was considered by \'Esik et al.~\cite{esik09} in~2009,
where the witnesses for the concatenation of three languages
over a five-letter alphabet can be found.
The rather complicated
expression  for the  upper bounds for the concatenation of~$k$ languages,
as well as witnesses over a~$(2k-1)$-letter alphabet  were given by
Gao and Yu \cite{gy09}. 

Caron et al. \cite{clp18} presented recursive formulas for the  upper bounds,
and described witnesses over a $(k+1)$-letter alphabet using 
Brzozowski's universal automata. 
They also showed that to meet the upper bound for the concatenation of two or three languages,
the binary or ternary alphabet, respectively, is enough,
and they conjectured that $k$ symbols could be enough
to describe witnesses for the concatenation of~$k$ languages.  
 
In this paper, we study in detail the state complexity 
of   multiple   concatenation of~$k$ regular languages. 
We first describe witnesses over an alphabet consisting of~$k+1$ symbols
with a significantly simpler proof than that in~\cite{clp18}. 
Our witness automata~$A_1,A_2,\ldots,A_k$
are defined over the alphabet $\{b,a_1,\ldots,a_k\}$.
Each $a_i$
performs the circular shift in~$A_i$   and the identity in all the other automata.
These~$k$ permutation symbols are used to get the reachability of all so-called valid states
in a DFA for concatenation.
The symbol~$b$ performs a contraction in each~$A_i$
and assures the distinguishability of all valid states
almost for free. However,
the proof requires that each~$A_i$ 
has at least three states.
With a slightly more complicated proof, we also solve
the case that   includes two-state automata.
Then we describe special binary witnesses
for the concatenation of two languages.
We combine our ideas used for the~$(k+1)$-letter alphabet and those for binary  witnesses
to describe witnesses for multiple concatenation
over a $k$-letter alphabet,
which solves an open problem stated by Caron et al. \cite{clp18}.
In the case of~$k=3$, we show that the ternary alphabet is optimal.
 
We also examine multiple concatenation on binary, 
ternary, and unary  languages.
We show that in the binary case, 
the lower bounds remain exponential in $n_2,n_3,\ldots,n_k$,
and in the ternary case, 
the trivial upper bound~$n_12^{n_2+n_3+\cdots+n_k}$
can be met up to some multiplicative constant
depending on~$k$.
For unary languages, 
we use Frobenius numbers to
get a tight upper bound for cyclic languages,
or languages recognized by automata that do not have final states in their tails. We also consider
the case with final states in tails,
and provide  upper and lower bounds for
multiple concatenation in such a case.

\section{Preliminaries}
\label{sec:prelim}

We assume that the reader is familiar with basic notions
in automata and formal language theory.
For details and all unexplained notions, 
we refer the reader to \cite{si12}.  
The size of a finite set $S$ is denoted by $|S|$,
and the set of all its subsets by $2^S$.

For a finite non-empty alphabet of symbols $\Sigma$,
the set of all strings over~$\Sigma$,
including the empty string~$\eps$,
is denoted by $\Sigma^*$.
A language is any subset of~$\Sigma^*$.
The multiple concatenation of $k$ languages $L_1,L_2,\ldots,L_k$
is~$L_1 L_2\cdots L_k = 
\{ u_1 u_2\cdots u_k  \mid u_1\in L_1, u_2\in L_2,\ldots,u_k\in L_k\}$.

A \emph{deterministic finite automaton} (DFA) 
is a quintuple $A=(Q,\Sigma,\cdot,s,F)$ where
$Q$ is a non-empty finite \emph{set of states},
$\Sigma$ is a non-empty finite \emph{alphabet  of input symbols},
$\cdot\,\colon Q\times \Sigma \to Q$ is the \emph{transition function},
$s\in Q$ is the initial state,
and $F\subseteq Q$ is the set of \emph{final}  (accepting) states.
The transition function can be naturally extended to the domain $Q\times\Sigma^*$.
The \emph{language  recognized (accepted)} by the DFA~$A$ 
is the set of strings $L(A)=\{w\in\Sigma^*\mid s\cdot w\in F\}$.

All deterministic finite automata
in this paper are assumed to be complete;
that is, the transition function is a total function.

We usually omit~$\cdot$, and   write~$qa$ instead of~$q\cdot a$.
Next, for a subset $S$ of $Q$ and a string~$w$, let~$Sw=\{qw\mid q\in S\}$
and~$wS=\{q\mid qw\in S\}$.
Each input symbol~$a$ induces a transformation 
on~$Q=\{q_1,q_2,\ldots,q_n\}$
given by~$q\mapsto qa$. 
We denote by~$a\colon (q_1,q_2,\ldots,q_\ell)$ the transformation
that maps~$q_i$ to~$q_{i+1}$ for~$i=1, \ldots,\ell-1$, the state $q_\ell$ to~$q_1$,
and fixes any other state in~$Q$.
In particular,~$(q_1)$ denotes the identity.
Next, we denote by~$a\colon (q_1 \to q_2 \to \cdots \to q_\ell)$  the transformation
that maps~$q_i$ to~$q_{i+1}$ for~$i=1,2,\ldots,\ell-1$
and fixes any other state. 
Finally, we denote by~$a\colon (S \to  q_i)$  the transformation
that maps~each~$q\in S$ to~$q_{i}$  
and fixes any other state.

A state~$q\in Q$  is \emph{reachable}
in the DFA~$A$
if there is a string~$w\in\Sigma^*$
such that~$q=sw$.
Two states~$p$ and~$q$ are \emph{distinguishable}
if there is a string~$w$ such that
exactly one of the states~$pw$ and~$qw$ is final.
A~state~$q\in Q$ is a \emph{dead  state}
if~$qw\notin F$ for every string~$w\in\Sigma^*$.

A DFA is \emph{minimal} 
if all its states are reachable and pairwise distinguishable.
The \emph{state complexity} of a regular language~$L$, $\dsc(L)$,
is the number of states in the minimal DFA recognizing $L$.
The state complexity of a $k$-ary regular operation $f$
is a function from $\mathbb{N}^k$ to $\mathbb{N}$ given 
by~$(n_1, n_2, \ldots,n_k)\mapsto \max\{ \dsc(f(L_1,L_2, \ldots,L_k)) 
\mid \dsc(L_i) \le n_i \text{ for } i=1,2, \ldots, k \}$.

A \emph{nondeterministic finite automaton} (NFA)
is a quintuple~$N=(Q,\Sigma,\cdot,I,F)$
where~$Q,\Sigma$, and~$F$ are the same as for DFAs,
$I\subseteq Q$ is the set of initial states, 
and
$\cdot \,\colon Q\times(\Sigma\cup\{\eps\})\to 2^Q$
is the transition function.
A string~$w$ in~$\Sigma^*$ is \emph{accepted} by the NFA~$N$
if~$w=a_1a_2\cdots a_m$ where~$a_i\in\Sigma\cup\{\eps\}$
and 
a sequence of states~$q_0,q_1,\ldots,q_m$
exists in~$Q$
such that~$q_0\in I$, $q_{i+1}\in q_i\cdot a_{i+1}$
for~\mbox{$i=0, 1, \ldots,m-1$}, and~$q_m\in F$.
The \emph{language recognized} by the NFA~$N$ 
is~$L(N)=\{w\in\Sigma^*\mid \text{$w$ is accepted by~$N$}\}$.
For~$p,q\in Q$ and~$a\in\Sigma\cup\{\eps\}$, 
we say that a triple~$(p,a,q)$ is a \emph{transition} in $N$ if $q\in p\cdot a$.

Let~$N=(Q,\Sigma,\cdot,I,F)$ be an~NFA.
For a set~$S\subseteq Q$, let~$E(S)$ denote
\mbox{the~$\eps$-closure} of~$S$; that is, 
the set of states~$\{q\mid q$  is reached from a state in~$S$ 
through~0 or more~$\eps$-transitions$\}$.
The \emph{subset automaton} of the  NFA~$N$ 
is the~DFA~$\cD(N)=(2^Q,\Sigma,\cdot',E(I),F')$ 
 where~$F'=\{S\in 2^Q\mid S\cap F\ne\emptyset\}$  
and~$S\cdot' a = \cup_{q\in S} E(q\cdot a)$
for each~$S\in 2^Q$ and each~$a\in \Sigma$.
The subset automaton~$\cD(N)$
recognizes the language~$L(N)$. 
 
The \emph{reverse} of the NFA  $N$ is the NFA
$N^R=(Q,\Sigma,\cdot^R,F,I)$ 
where the transition function is defined by $q\cdot^R a=\{p\in Q \mid q\in p\cdot a\}$;
that is,~$N^R$ is obtained from~$N$
by swapping the roles of initial and final states,
and by reversing all transitions.

A subset $S$ of $Q$ is \emph{reachable} in $N$ 
if there is a string $w$ in $\Sigma^*$ such that~$S=I\cdot w$,
and it is \emph{co-reachable} in $N$ if it is reachable in the reverse $N^R$.
We use the following two simple observations
to prove distinguishability of states in subset automata.

\begin{lemma}
\label{le:dist}
 Let~$N=(Q,\Sigma,\cdot,I,F)$ be an NFA.
 Let~$S,T\subseteq Q$  and $q\in S\setminus T$.
 If  the  singleton set~$\{q\}$ is co-reachable in $N$,
 then $S$ and $T$ are distinguishable in
 the subset automaton~$\cD(N)$.
\end{lemma}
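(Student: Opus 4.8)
The plan is to convert the hypothesis that $\{q\}$ is co-reachable into a single ``witness'' string $w\in\Sigma^*$ that detects exactly the state $q$: from any state $p$ of $N$, reading $w$ reaches a final state if and only if $p=q$. Such a $w$ separates $S$ and $T$ immediately, because $q$ lies in $S$ but not in $T$.

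First I would unwind the definition of co-reachability. By assumption $\{q\}$ is reachable in $N^R$, so there is a string $u=a_1a_2\cdots a_m$ with $F\cdot^R u=\{q\}$. Set $w=u^R=a_m\cdots a_2a_1$. The key computation is to reverse every run: a state $p$ belongs to $F\cdot^R u$ precisely when there is a run of $N^R$ from some state of $F$ to $p$ spelling $u$, and by flipping this run and using $q'\in p'\cdot^R a \Leftrightarrow p'\in q'\cdot a$ together with the matching statement for $\eps$-transitions, this happens exactly when there is a run of $N$ from $p$ to some final state spelling $w$. In terms of the extended transition function this reads
\[
 p\cdot w\cap F\neq\emptyset \iff p\in F\cdot^R u .
\]
Since $F\cdot^R u=\{q\}$, I conclude that $p\cdot w\cap F\neq\emptyset$ holds if and only if $p=q$, which is exactly the witness property sought.

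It then remains to read off distinguishability in $\cD(N)$. By definition a subset $X\subseteq Q$ is a final state of the subset automaton after reading $w$ iff $X\cdot' w\cap F\neq\emptyset$, and $X\cdot' w\cap F\neq\emptyset$ holds iff $p\cdot w\cap F\neq\emptyset$ for some $p\in X$, that is, iff $q\in X$ by the previous paragraph. Applying this with $X=S$ and with $X=T$: since $q\in S$ the set $S\cdot' w$ is final, and since $q\notin T$ the set $T\cdot' w$ is not final. Hence $w$ separates $S$ and $T$, so they are distinguishable in $\cD(N)$.

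The only genuinely delicate point is the path-reversal bookkeeping in the displayed equivalence, in particular checking that the $\eps$-transitions and the $\eps$-closures used in the two subset automata $\cD(N)$ and $\cD(N^R)$ match up correctly under reversal; once that correspondence is pinned down, the remainder is a direct unwinding of the definition of the subset automaton.
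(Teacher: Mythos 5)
Your proposal is correct and takes essentially the same route as the paper's own proof: obtain a string $u$ sending $F$ to $\{q\}$ in $N^R$, observe that its reverse $w=u^R$ is accepted by $N$ from the state $q$ and from no other state, and conclude that $w$ is accepted by $\cD(N)$ from $S$ but rejected from $T$. The only difference is that you spell out the path-reversal and $\eps$-transition bookkeeping that the paper treats as immediate.
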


\begin{proof}
 Since the singleton set $\{q\}$ is co-reachable in $N$,
 there is a string~$w\in \Sigma^*$ which sends the set of final states~$F$
 to~$\{q\}$ in the reversed automaton~$N^R$.
 It follows that the string~$w^R$ is accepted by~$N$
 from the state~$q$, and it is rejected from any other state.
 Thus, the string~$w^R$ is accepted 
 by~$\cD(N)$ from~$S$ and rejected from~$T$. 
\end{proof}

\begin{corollary}
\label{cor}
 If for each state $q$ of an NFA $N$,
 the singleton set $\{q\}$ is co-reachable in $N$,
 then all states of the subset automaton $\cD(N)$
 are pairwise distinguishable.
 \qed
\end{corollary}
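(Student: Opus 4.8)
The plan is to apply Lemma~\ref{le:dist} directly to an arbitrary pair of distinct states of the subset automaton $\cD(N)$. Recall that the states of $\cD(N)$ are subsets of~$Q$, so it suffices to show that any two distinct subsets $S$ and $T$ of~$Q$ are distinguishable.

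First I would observe that since $S\neq T$, their symmetric difference is non-empty; hence there is a state $q\in Q$ lying in exactly one of the two sets. After possibly interchanging the roles of $S$ and $T$ — which is harmless, since distinguishability is a symmetric relation — I may assume $q\in S\setminus T$. By hypothesis the singleton set $\{q\}$ is co-reachable in~$N$, so all the hypotheses of Lemma~\ref{le:dist} are satisfied, and that lemma immediately yields that $S$ and $T$ are distinguishable in~$\cD(N)$.

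There is essentially no obstacle here: the only point requiring care is the asymmetry in the statement of Lemma~\ref{le:dist}, which pins $q$ to lie in $S\setminus T$ rather than in $T\setminus S$. This is dispatched by the symmetry remark above. Since $S$ and $T$ form an arbitrary pair of distinct states, it follows that all states of $\cD(N)$ are pairwise distinguishable.
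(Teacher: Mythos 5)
Your proof is correct and is exactly the argument the paper intends: the corollary is stated with a \qed because it follows immediately from Lemma~\ref{le:dist} applied to an arbitrary pair of distinct subsets, using the symmetric-difference and without-loss-of-generality observations you spell out. Nothing is missing.
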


\section{Multiple Concatenation: Upper Bound}
\label{sec:upper}

In this section, we recall the constructions
of~$\eps$-NFAs and NFAs
for multiple concatenation,
as well as the known upper bounds.
We also provide a simple alternative method
to get upper bounds. 
In the last part of this section,
we consider the case when some of given automata
have just one state.

For~$i=1,2,\ldots,k$,
let~$A_i=(Q_i,\Sigma,\cdot_i,s_i,F_i)$
be a DFA,
and assume that~$Q_i\cap Q_j=\emp$ if~$i\ne j$.
Then the concatenation~$L(A_1)L(A_2)\cdots L(A_k)$
is recognized by an NFA
$
 N=(Q_1\cup Q_2\cup\cdots\cup Q_k,\Sigma,\cdot,s_1,F_k),
$
where for each~$i=1,2,\ldots,k$, each~$q\in Q_i$,
and each~$a\in\Sigma$, we have~$ q\cdot a =\{q\cdot_i a\}$
and for each~$i=1,2,\ldots,k-1$ and each~$q\in F_i$,
we have~$q\cdot\eps=\{s_{i+1}\}$, that is,
the NFA~$N$ is obtained from the DFAs~$A_1,A_2,\ldots,A_k$
by adding the~$\eps$-transition from each final state of~$A_i$ to the initial state~$s_{i+1}$
of~$A_{i+1}$ for~$i=1,2,\ldots,k-1$;
the initial state of~$N$ is~$s_1$,
and its set of final states is~$F_k$.

Since~$A_1$ is a complete DFA,
in the corresponding subset automaton~$\cD(N)$,
each reachable subset is of the 
form~$\{q\}\cup S_2\cup S_3\cup\cdots\cup S_k$
where~$q\in S_1$ and~$S_i\subseteq Q_i$
for~$i=2,3,\ldots,k$.  
We represent such a set by 
the~$k$-tuple~$(\{q\},S_2,S_3,\ldots,S_k)$,
or more often by~$(q,S_2,S_3,\ldots,S_k)$,
and with this representation,
it is not necessary to have the state sets disjoint.
Nevertheless, since we sometimes use
special properties of the NFA~$N$,
we keep in mind that this~$k$-tuple
represents the union of appropriate set of states
of the corresponding DFAs.
We usually denote all transition functions by~$\cdot$,
and simply write~$(qa,S_2,S_3,\ldots,S_k)$
or~$(q,S_2a,S_3,\ldots,S_k)$; that is,
applying~$a$ to the~$i$-th component
means that we use the transition function~$\cdot_i$.

It follows from the construction of the NFA~$N$ 
that if~$S_i\cap F_i\ne\emp$ then~$s_{i+1}\in S_{i+1}$,
and if~$S_i=\emp$, then~$S_{i+1}=\emp$
in any reachable state~$(S_1,S_2,\ldots,S_k)$
of the subset automaton~$\cD(N)$.
The states satisfying the above mentioned properties
are called valid in~\cite{clp18}; 
let us summarize the three properties in the next definition.

\begin{definition}
    A state~$(S_1,S_2,\ldots,S_k)$
    of the subset automaton~$\cD(N)$ is \emph{valid} if
    \begin{enumerate}
        \item $|S_1|=1$,
        \item if~$S_i=\emp$ and~$i\le k-1$, then~$S_{i+1}=\emp$,
        \item if~$S_i\cap F_i\ne\emp$ and~$i\le k-1$,
        then~$s_{i+1}\in S_{i+1}$.
    \end{enumerate}
\end{definition}

Since each reachable state  
of~$\cD(N)$ is valid,
we have the next observation.

\begin{proposition}
\label{prop:valid}
 An upper bound on 
 $\dsc(L(A_1)L(A_2)\cdots L(A_k))$
 is given by the number of valid states
 in the subset automaton~$\cD(N)$. 
\qed
\end{proposition}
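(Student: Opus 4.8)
The plan is to combine two standard facts. First, the subset automaton~$\cD(N)$ is a complete DFA recognizing $L(N)=L(A_1)L(A_2)\cdots L(A_k)$, as recalled in the preliminaries. Second, for any regular language the minimal DFA is obtained from an arbitrary DFA recognizing it by discarding unreachable states and merging indistinguishable ones; hence $\dsc(L(A_1)L(A_2)\cdots L(A_k))$ is at most the number of \emph{reachable} states of~$\cD(N)$. It therefore suffices to prove that every reachable state of~$\cD(N)$ is valid, since then the set of reachable states is contained in the set of valid states, and so the number of valid states bounds $\dsc$ from above.

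To show that every reachable state is valid, I would argue by induction on the length of a string~$w$ with $(S_1,\ldots,S_k)=E(\{s_1\})\cdot w$. For the base case $w=\eps$, the initial state is the $\eps$-closure $E(\{s_1\})$; since the only $\eps$-transitions go from a final state of~$A_i$ to~$s_{i+1}$, this closure has the form $(\{s_1\},S_2,\ldots,S_k)$, where $s_{i+1}$ is present exactly when $s_i$ is present and $s_i\in F_i$. All three validity conditions are then immediate.

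For the inductive step, suppose $(S_1,\ldots,S_k)$ is valid and consider its image under a symbol~$a$. Because~$N$ has no $a$-transitions crossing between the component automata, applying~$a$ acts componentwise as $S_i\mapsto S_i a$ inside~$A_i$, after which the $\eps$-closure adds~$s_{i+1}$ to the $(i{+}1)$-st component whenever the updated $i$-th component meets~$F_i$, cascading forward. Writing the result as $(S_1',\ldots,S_k')$, condition~1 holds because~$A_1$ is a complete deterministic automaton, so $|S_1'|=|S_1a|=1$; condition~3 is exactly the effect of the forward $\eps$-closure; and condition~2 follows since $S_i'=\emp$ forces $S_i a=\emp$, whence $S_i=\emp$ by totality of~$A_i$, so $S_{i+1}=\emp$ by the inductive hypothesis while no~$s_{i+1}$ is added, giving $S_{i+1}'=\emp$.

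The argument is essentially bookkeeping; the only point requiring real care is the interaction between the forward $\eps$-closure cascade and condition~2, where completeness of the automata is used to rule out a nonempty component mapping to the empty set under~$a$. Once every reachable state is shown to be valid, the number of valid states bounds the reachable part of~$\cD(N)$, and hence bounds $\dsc(L(A_1)L(A_2)\cdots L(A_k))$, which is the claim.
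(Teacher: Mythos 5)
Your proof is correct and takes essentially the same approach as the paper: the paper states this proposition without a written proof, relying on the preceding observation that, by the construction of~$N$, every reachable state of the subset automaton~$\cD(N)$ is valid, so the number of valid states bounds the number of reachable states and hence the state complexity. Your induction on the length of the input string simply spells out the bookkeeping that the paper treats as immediate from the construction.
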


Notice that, to reach as many valid states as possible,
each automaton~$A_i$ with~$i\le k-1$ should have exactly one final state~$f_i$,
that is, we have~$F_i=\{f_i\}$.
Moreover, if~$A_i$ has at least two states,
then we should have~$s_i\ne f_i$.
If this is the case for all~$A_i$, 
then we 
can construct an NFA~$N$ for the 
concatenation~$L(A_1)L(A_2)\cdots L(A_k)$
from the DFAs~$A_1,A_2,\ldots,A_k$ as follows:
for each~$i=1,2,\ldots,k-1$,
each state~$q\in Q_i$, and each symbol~$a\in\Sigma$ such that~$q\cdot_i a=f_i$,
we add the transition~$(q,a,s_{i+1})$;
the initial state of~$N$ is~$s_1$,
and its unique final state is~$f_k$.

For~$k=2$, an upper bound 
on the number of valid states
is~$(n_1-1)2^{n_2}+2^{n_2-1}$~\cite{yzs94},
which is the sum of  the number of states~$(q,S_2)$
with~$q\ne f_1$ and~$S_2\subseteq Q_2$
and the number of states~$(f_1,S_2)$
with~$s_2\in S_2$.
For~$k\ge3$, we have the following inequalities.

\begin{proposition}
\label{prop:ineq}
\label{page}
 Let~$k\ge3$ and $\# \tau_k$ denote the number of valid states.
 Then
 \begin{align*}
     \frac{1}{2^{k-1}} \,n_1 \, 2^{n_2+n_3+\cdots+n_k} \le \# \tau_k \le \frac{3}{4}\, n_1\, 2^{n_2+n_3+\cdots+n_k}.  
 \end{align*}
\end{proposition}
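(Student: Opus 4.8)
The plan is to read $M = n_1\,2^{n_2+n_3+\cdots+n_k}$ as the total number of tuples $(S_1,S_2,\ldots,S_k)$ with $|S_1|=1$ and $S_i\subseteq Q_i$ for $i\ge 2$, ignoring conditions~(2) and~(3) of validity. Every valid state is one such tuple, so trivially $\#\tau_k\le M$. I would get the lower bound by exhibiting a single large, explicitly described family of tuples that are automatically valid, and the upper bound by retaining only the one validity condition that is easiest to count and discarding the rest.

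For the lower bound, first I would restrict to those tuples satisfying $s_i\in S_i$ for every $i=2,3,\ldots,k$. Any such tuple is valid: condition~(1) holds since $|S_1|=1$; condition~(2) holds vacuously because $S_i\supseteq\{s_i\}\ne\emp$ for $i\ge 2$ (and $S_1\ne\emp$); and condition~(3) holds because its conclusion $s_{i+1}\in S_{i+1}$ is met outright, regardless of whether $S_i\cap F_i\ne\emp$. Counting is then immediate: there are $n_1$ choices for $S_1$, and for each $i\ge 2$ the requirement $s_i\in S_i$ leaves $2^{n_i-1}$ choices for $S_i$. Hence $\#\tau_k\ge n_1\prod_{i=2}^{k}2^{n_i-1}=\tfrac{1}{2^{k-1}}\,n_1\,2^{n_2+n_3+\cdots+n_k}$, which is the claimed lower bound and holds for arbitrary $A_i$.

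For the upper bound, I would keep only condition~(3) at the single position $i=k-1$, since enforcing any one validity condition can only shrink the count. We may assume the concatenation is nonempty, so each $F_i\ne\emp$. Then the pairs $(S_{k-1},S_k)$ violating this condition are exactly those with $S_{k-1}\cap F_{k-1}\ne\emp$ and $s_k\notin S_k$. The number of $S_{k-1}$ meeting $F_{k-1}$ is $2^{n_{k-1}}-2^{n_{k-1}-|F_{k-1}|}\ge 2^{n_{k-1}-1}$ because $|F_{k-1}|\ge 1$, and the number of $S_k$ avoiding $s_k$ is $2^{n_k-1}$, so at least $\tfrac14\,2^{n_{k-1}+n_k}$ pairs are excluded and at most $\tfrac34\,2^{n_{k-1}+n_k}$ survive. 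Letting $S_1$ and $S_2,\ldots,S_{k-2}$ range freely over $n_1\cdot 2^{n_2+\cdots+n_{k-2}}$ possibilities (an empty middle product, equal to $1$, when $k=3$) yields $\#\tau_k\le\tfrac34\,n_1\,2^{n_2+n_3+\cdots+n_k}$.

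I expect the work here to be bookkeeping rather than difficulty: one checks that the two crude estimates really land on the constants $2^{-(k-1)}$ and $3/4$, and that the degenerate case of an empty $L(A_i)$ is set aside (there the concatenation is empty and the bounds are vacuous). The one conceptual point worth flagging is that a \emph{single} instance of condition~(3) already forces the $3/4$ factor, so one should resist the natural but unworkable urge to bound the failures of all $k-1$ coupled positions by inclusion--exclusion or a union bound: summing $\tfrac14 M$ over $k-1$ positions is far too lossy and can exceed $M$ outright.
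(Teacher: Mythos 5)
Your proposal is correct and takes essentially the same approach as the paper: the lower bound counts exactly the same family of tuples with $s_i\in S_i$ for $i=2,3,\ldots,k$, and the upper bound discards the tuples violating a single instance of condition~(3), which is precisely the paper's argument. The only (immaterial) differences are that the paper applies condition~(3) at position $i=2$ (tuples with $f_2\in S_2$ and $s_3\notin S_3$) while you apply it at $i=k-1$, and that you additionally cover the case $|F_{k-1}|>1$, whereas the paper works in its standing assumption $F_i=\{f_i\}$.
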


\begin{proof}
    Every state~$(S_1,S_2,\ldots,S_k)$ 
    with $s_i\in S_i$ for $i=2,3,\ldots,k$
    is a valid state. This gives the left inequality.
    On the other hand,
    every   state $(S_1,S_2,\ldots,S_k)$ 
    with~$f_2\in S_2$ and $s_3\notin S_3$
    is not valid,
    which gives the right inequality.
\end{proof}

\medskip

We now provide a simple alternative method
for obtaining an upper bound on the number of valid states.
To this aim let for~$i=2,3,\ldots,k$,
\begin{itemize}
    \item $U_i$ be the number of tuples $(S_i,S_{i+1},\ldots,S_k)$
such that for fixed~$S_1', \ldots,S_{i-1}'$  
with~\mbox{$f_{i-1}\notin S'_{i-1}$}
the state~$(S_1',\ldots,S_{i-1}',S_i,S_{i+1},\ldots,S_k)$ is valid,
\item $V_i$ be the number of tuples $(S_i,S_{i+1},\ldots,S_k)$
such that for a fixed~$S_1',\ldots,S_{i-1}'$  
with~$f_{i-1}\in S'_{i-1}$
the state~$(S_1',\ldots,S_{i-1}',S_i,S_{i+1},\ldots,S_k)$ is valid.
\end{itemize}
Then
we have the next result.

\begin{theorem}
\label{thm:upper}
    Let~$k\ge2$, ~$n_i\ge2$ for~$i=1,2,\ldots,k$,
    and~$A_i=(Q_i,\Sigma,\cdot,s_i,\{f_i\})$
    be an~$n_i$-state DFA with~$s_i\ne f_i$.
    Let~$U_i$ and~$V_i$ be as defined above,
    and~$\# \tau_k$ be the number of valid states
    in the subset automaton~$\cD(N)$
    accepting~$L(A_1)L(A_2)\cdots L(A_k)$.
    Then
    \begin{align}
    \label{u1}
        & U_k=2^{n_k} \text{ and }    V_k=2^{n_k-1}, 
    \end{align}
    and for~$i=2,3,\ldots, k-1$,
    \begin{align}
     \label{u2}
        & U_i = 1 + (2^{n_i-1}-1)U_{i+1}+2^{n_i-1}V_{i+1}, \\
    \label{u3}     
         &V_i=2^{n_i-2}(U_{i+1}+V_{i+1}). 
       \end{align}   
         Finally, we have   
    \begin{align}
     \label{u4} 
         & \# \tau_k=(n_1-1)U_2+V_2.
    \end{align}
\end{theorem}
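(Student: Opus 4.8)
The plan is to prove all four parts by first establishing that validity is a \emph{local} property: the number of valid completions of a tail $(S_i,\ldots,S_k)$ depends on the fixed prefix $S_1',\ldots,S_{i-1}'$ only through two bits of information about the last set $S_{i-1}'$, namely whether it is empty and whether it contains $f_{i-1}$. Indeed, conditions~(2) and~(3) in the definition of validity each couple only the two consecutive sets $S_j$ and $S_{j+1}$; hence, once the prefix is fixed, the only constraints it imposes on the tail come from conditions~(2) and~(3) at the boundary index $j=i-1$, i.e.\ ``$S_{i-1}'=\emptyset \Rightarrow S_i=\emptyset$'' and ``$f_{i-1}\in S_{i-1}' \Rightarrow s_i\in S_i$''. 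This observation shows that $U_i$ and $V_i$ are well defined, independently of the particular prefix, and it is the conceptual backbone of the whole argument. Throughout I would keep in mind that in the recursion the preceding set is always nonempty, so under the hypothesis $f_{i-1}\notin S_{i-1}'$ defining $U_i$ the set $S_i$ is completely unconstrained, while under $f_{i-1}\in S_{i-1}'$ defining $V_i$ it is forced to contain $s_i$.

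First I would dispatch the base case~\eqref{u1}. For $i=k$ no condition~(2) or~(3) acts downstream of $S_k$, so the only constraint is the boundary one at $j=k-1$. When $f_{k-1}\notin S_{k-1}'$ this is vacuous and $S_k$ ranges over all $2^{n_k}$ subsets of $Q_k$, giving $U_k=2^{n_k}$; when $f_{k-1}\in S_{k-1}'$ we must have $s_k\in S_k$, leaving the remaining $n_k-1$ states free, giving $V_k=2^{n_k-1}$.

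For the recursions~\eqref{u2} and~\eqref{u3}, I would fix an admissible prefix and split the count of valid tails according to the three possibilities for $S_i$: the empty set, a nonempty set avoiding $f_i$, and a set containing $f_i$. For $U_i$ (prefix with $f_{i-1}\notin S_{i-1}'$, so $S_i$ is free): if $S_i=\emptyset$ then condition~(2) forces $S_{i+1}=\cdots=S_k=\emptyset$, contributing the single tail that accounts for the ``$1+$'' term; if $S_i$ is one of the $2^{n_i-1}-1$ nonempty subsets of $Q_i\setminus\{f_i\}$ then condition~(3) at $j=i$ is vacuous and the number of valid continuations is $U_{i+1}$; if $S_i$ is one of the $2^{n_i-1}$ sets containing $f_i$ then condition~(3) forces $s_{i+1}\in S_{i+1}$ and the continuations number $V_{i+1}$. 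Summing gives~\eqref{u2}. For $V_i$ the set $S_i$ must contain $s_i$, so the empty case disappears; using $s_i\ne f_i$, exactly $2^{n_i-2}$ such sets avoid $f_i$ (continuations $U_{i+1}$) and $2^{n_i-2}$ contain it (continuations $V_{i+1}$), which yields~\eqref{u3}.

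Finally, for~\eqref{u4} I would split on the singleton $S_1$: since $|S_1|=1$ it is never empty, so if $S_1=\{q\}$ with $q\ne f_1$ the tail has $U_2$ valid completions and there are $n_1-1$ such states, whereas $S_1=\{f_1\}$ forces $s_2\in S_2$ and contributes $V_2$, giving $\#\tau_k=(n_1-1)U_2+V_2$. The main obstacle I anticipate is not any single computation but making the locality and well-definedness step airtight---in particular verifying that the constraint a prefix imposes on its tail is genuinely captured by just the emptiness of $S_{i-1}'$ and the membership of $f_{i-1}$, and carefully tracking that the recursion only ever invokes $U_{i+1},V_{i+1}$ with a nonempty preceding set, so that the separate empty-tail bookkeeping (the ``$1+$'' in~\eqref{u2}) is neither double-counted nor omitted.
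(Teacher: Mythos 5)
Your proof is correct and takes essentially the same approach as the paper's: the same three-way split of $S_i$ (empty, nonempty avoiding $f_i$, containing $f_i$) for the $U_i$ recursion, the same two-way split on membership of $f_i$ for the $V_i$ recursion, and the same split on $S_1=\{f_1\}$ versus $S_1\neq\{f_1\}$ for the final count. If anything, you are more careful than the paper, whose proof leaves the well-definedness of $U_i$ and $V_i$ (and the fact that the recursion is only ever invoked with a nonempty preceding set) implicit, and whose proof text contains two harmless typos ($2^{n_i}-1$ where $2^{n_i-1}-1$ is meant, and $2^{n_i}-2$ where $2^{n_i-2}$ is meant); your counts are the correct ones.
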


\begin{proof}
    If~$f_{k-1}\notin S'_{k-1}$,
    then~$S_k$ may be an arbitrary subset of~$Q_k$.
    If~$f_{k-1}\in S'_{k-1}$,
    then~$S_k$ must contain~$s_k$.
       This gives~(\ref{u1}).

    Let~$f_{i-1}\notin S'_{i-1}$.
    We have just one tuple with~$S_i=\emp$,
    namely,~$(\emp,\emp,\ldots,\emp)$,
    then $(2^{n_i}-1)U_{i+1}$
    tuples with~$f_i\notin S_i$ and~$S_i$ non-empty,
    and~$2^{n_i-1}V_{i+1}$
    tuples with~$f_i\in S_i$ final. 
    This gives~(\ref{u2}). 

    Let~$f_{i-1}\in S'_{i-1}$.
    Then~$s_i\in S_i$.
    We have~$(2^{n_i}-2)U_{i+1}$
    tuples with~$s_i\in S_i$ and~$f_i\notin S_i$,
    and~$2^{n_i-2}V_{i+1}$
    tuples with~$s_i\in S_i$ and~$f_i\notin S_i$.
    This gives~(\ref{u3}).

    Finally, we have~$(n_1-1)$ possibilities
    for~$S'_1$ to be non-final singleton set,
    and one,  $S'_1=\{f_1\}$, to be final.
    This gives~(\ref{u4}).
   \end{proof}

Let us illustrate the above result in the 
following example.
  
\begin{example}\rm
    Let~$k=3$ and~$n_1,n_2,n_3\ge2$. 
    Then
    \begin{align*}
       U_3= &2^{n_3} \text{ and } V_3=2^{n_3-1}, \\
       U_2= &1+(2^{n_2-1}-1)U_3 +2^{n_2-1}V_3
        =1+(2^{n_2-1}-1)2^{n_3} +2^{n_2-1}2^{n_3-1},\\
       V_2=&2^{n_2-2}(U_3+V_3)
         =2^{n_2-2}(2^{n_3}+2^{n_3-1}) \\
       \# \tau_k  =&(n_1-1)U_2+V_2= \\
    & (n_1-1)(1+(2^{n_2-1}-1)2^{n_3} +2^{n_2-1}2^{n_3-1})+2^{n_2-2}(2^{n_3}+2^{n_3-1})= \\
     &  n_1(1+2^{n_2+n_3-1}-2^{n_3}+2^{n_2+n_3-2})
     -1 -2^{n_2+n_3-1}+2^{n_3}-2^{n_2+n_3-2} +  \\     
     & 2^{n_2+n_3-2} + 2^{n_2+n_3-3} =\\
    & n_1(1+\frac{3}{4}2^{n_2+n_3}-2^{n_3}) 
     -\frac{3}{8} 2^{n_2+n_3}+2^{n_3}-1,  
    \end{align*}
     which is the same as  
     in~\cite[Example~3.6]{clp18}.
     \qed
\end{example}

To conclude this section, 
let us consider
also the case when some automata have just one state.
If this state is non-final, then the resulting concatenation
is empty. Thus, assume that all one-state automata
recognize~$\Sigma^*$, so consist of one initial and final state~$f_i$.
We construct an NFA~$N$  accepting the language~$L(A_1)L(A_2)\cdots L(A_k)$
as described above.
Let~$\cD(N)$ be the corresponding subset automaton.
We represent its states by $k$-tuples~$(\{q\},S_2,S_3,\ldots,S_k)$
where~$q\in Q_1$ and~$S_i\subseteq Q_i$.
Moreover, if~$n_i=1$, then~$S_i=\{f_i\}$.
If~$n_i\ge2$ and~$i<k$, then to get maximum number of valid reachable sets,
we must have~$F_i=\{f_i\}$ and~$s_i\ne f_i$.
The next observation provides an upper bound
in the case when exactly one of given DFAs has one state. 

\begin{proposition}
\label{prop:jeden1-state}
    Let~$k\ge2$, $j\in\{1,2,\ldots,k\}$,
     $n_j=1$, and~$n_i\ge2$ if~$i\ne j$.
     For~$i=1,2,\ldots,k$,
     let~$A_i$ be an~$n_i$-state DFA
     and~$L=L(A_1)L(A_2)\cdots L(A_k)$.
     Let~$U_i$ and~$V_i$ be given by expressions~(\ref{u2})-(\ref{u3}).
     Then
     \[ \dsc(L)\le
     \begin{cases}
         V_2,   
         &\text{if~$j=1$}; \\
         n_1, & \text{if~$j=k=2$}; \\
          (n_1-1)U_2+V_2+1 &  \\
          \text{\ \ \  with~$U_{k-1}=2^{n_{k-1}-1}$ 
         and~$V_{k-1}=2^{n_{k-1}-2}$}, &\text{if~$j=k\ge3$}; \\
         (n_1-1)U_2+V_2+V_{i+1} & \\
    \text{\ \ \ with~$U_{j-1}=2^{n_{j-1}-1}$
     and~$V_{j-1}=2^{n_{i-1}-2}$}, 
     &\text{if~$2\le j \le k-1$}.        
     \end{cases}
     \]
\end{proposition}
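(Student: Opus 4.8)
The plan is to analyze the subset automaton $\cD(N)$ of the NFA $N$ built from $A_1,\ldots,A_k$ as in the preceding construction, exploiting the special shape of the $j$-th component. Since $A_j$ recognizes $\Sigma^*$ and has the single state $f_j=s_j$ with $f_j\cdot a=f_j$ for every $a$, in every reachable state $(S_1,\ldots,S_k)$ the set $S_j$ is either $\emp$ or $\{f_j\}$, and once $f_j$ enters $S_j$ it never leaves. First I would record two structural facts. Whenever $S_j=\{f_j\}$ then $f_j\in F_j$, so validity forces $s_{j+1}\in S_{j+1}$ (when $j\le k-1$); on the other hand, if $S_j=\emp$ then $S_{j+1}=\cdots=S_k=\emp$ and, moreover, $f_{j-1}\notin S_{j-1}$, since otherwise validity would force $s_j=f_j\in S_j$. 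These facts split the reachable valid states into the \emph{active} ones ($S_j=\{f_j\}$) and the \emph{inactive} ones ($S_j=\emp$), and I would bound the number of distinguishable classes in each part separately, which by Proposition~\ref{prop:valid} suffices for an upper bound.

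The crux is the treatment of the active states, where I claim that the prefix components $S_1,\ldots,S_{j-1}$ are irrelevant for acceptance. Indeed, once $S_j=\{f_j\}$, the state $f_j$ keeps re-injecting $s_{j+1}$ into $S_{j+1}$ at every step, so the evolution of the suffix $(S_{j+1},\ldots,S_k)$, and hence membership of $f_k$ in $S_k$ after reading any string $w$, depends only on $(S_{j+1},\ldots,S_k)$ and $w$, never on $(S_1,\ldots,S_{j-1})$. Consequently any two active states sharing the same suffix $(S_{j+1},\ldots,S_k)$ are indistinguishable, so the active states contribute at most as many classes as there are admissible suffixes. Since $s_{j+1}\in S_{j+1}$ is forced, this number is exactly $V_{j+1}$ when $j\le k-1$; when $j=k$ there is no suffix and every active state is accepting, so they all collapse to one class, contributing $1$.

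It remains to count the inactive states, that is, the valid tuples $(S_1,\ldots,S_{j-1},\emp,\ldots,\emp)$ with $f_{j-1}\notin S_{j-1}$. This equals the number of valid states of the truncated concatenation $L(A_1)\cdots L(A_{j-1})$ whose last automaton avoids its final state, which I would compute with the recursion of Theorem~\ref{thm:upper} but with the chain cut at position $j-1$: the admissible sets $S_{j-1}$ are the subsets of $Q_{j-1}$ avoiding $f_{j-1}$, of which there are $2^{n_{j-1}-1}$, and those additionally containing $s_{j-1}$, of which there are $2^{n_{j-1}-2}$ (using $s_{j-1}\ne f_{j-1}$). Taking these as the base values $U_{j-1}=2^{n_{j-1}-1}$ and $V_{j-1}=2^{n_{j-1}-2}$ and running the standard recursion up to position $2$ yields $(n_1-1)U_2+V_2$ inactive classes; adding the active contribution gives the stated bounds. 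The degenerate cases follow by direct inspection: for $j=1$ there is no prefix and $S_1=\{f_1\}$ always, so only the active count $V_2$ survives; for $j=k=2$ the inactive states are the $n_1-1$ singletons $\{q\}$ with $q\ne f_1$ and the active states form a single accepting class, giving $n_1$; and for $j=2$ the prefix count is simply $n_1-1$.

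The main obstacle I anticipate is making the indistinguishability collapse rigorous, in particular verifying that the suffix dynamics are genuinely decoupled from the prefix once $S_j$ is saturated, and then carefully checking the boundary index bookkeeping ($j=1$, $j=2$, and $j=k$) so that the single template $(n_1-1)U_2+V_2+V_{j+1}$ specializes correctly in each case.
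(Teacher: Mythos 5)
Your proposal is correct and takes essentially the same route as the paper's own proof: the same split into ``active'' states with $S_j=\{f_j\}$ (whose prefix components collapse, contributing $V_{j+1}$ classes, or a single accepting sink class when $j=k$) and ``inactive'' states with $S_j=\emp$ (forcing $f_{j-1}\notin S_{j-1}$ and counted by the truncated recursion with base values $U_{j-1}=2^{n_{j-1}-1}$ and $V_{j-1}=2^{n_{j-1}-2}$), with the same handling of the degenerate cases $j=1$ and $j=k=2$. The only cosmetic difference is in how the collapse is justified: the paper notes that every string accepted by $N$ from a state of $Q_1\cup\cdots\cup Q_{j-1}$ is also accepted from $f_j$ (since $f_j$ loops on every symbol), i.e.\ a language-inclusion argument, whereas you argue that the suffix $(S_{j+1},\ldots,S_k)$ evolves autonomously once $f_j$ is present---two equivalent one-line observations.
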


\begin{proof}
    First, let~$j=1$. Then we have~$S_1=\{f_1\}$
    in each valid state~$(S_1,S_2,\ldots,S_k)$.
    It follows that the number of valid states
    is~$V_2$ with~$U_{k}=2^{n_{k}}$ 
         and~$V_{k}=2^{n_{k}-1}$. 

    Now, let~$j=k$.
    Then all states~$(S_1,S_2,\ldots,S_{k-1},\{f_k\})$
    are equivalent to a final sink state.
    If~$S_k=\emp$, then~$f_{k-1}\notin S_{k-1}$.
    This results in an upper bound~$(n_1-1)U_2+V_2+1$
    with~$U_{k-1}=2^{n_{k-1}-1}$ and~$V_{k-1}=2^{n_{k-1}-2}$
    if~$k\ge3$ and~$(n_1-1)+1$ if~$k=2$.
    
    Finally, let~$2\le j\le k-1$.
    Then all 
    states~$(S_1,S_2,\ldots,S_{i-1},\{f_i\},\{s_{i+1}\},\emp,\emp,\ldots,\emp)$
    are equivalent to the 
    state~$(\{s_1\},\{s_2\},\ldots,\{s_{i-1}\},\{f_i\},\{s_{i+1}\},\emp,\emp,\ldots,\emp)$
    since we have a loop on each input symbol
    in the state~$f_i$ and therefore every string 
    accepted by~$N$ from a state 
    in~$Q_1\cup Q_2 \cup \cdots\cup Q_{i-1}$
    is accepted also from~$f_i$.
    It follows that the reachable and pairwise 
    distinguishable valid states of the subset automaton~$\cD(N)$ are 
    either of the form~$(S_1,S_2,\ldots,S_{i-1},\emp,\emp,\ldots,\emp)$
    or of the form~$(\{s_1\},\{s_2\},\ldots,\{s_{i-1}\},\{f_i\}, S_{i+1},S_{i+2}, \ldots,S_k)$.
    If~$S_i=\emp$,
    then~$S_{i-1}$ does not contain~$f_i$,
    so the number of valid states
    of the first form is given 
    by~$(n_i-1)U_2+V_2$
    with~$U_{i-1}=2^{n_{i-1}-1}$
    and~$V_{i-1}=2^{n_{i-1}-2}$.
    The number of valid states of the second form
    is given by~$V_{i+1}$.    
\end{proof}

\begin{example}\rm
    Let~$k=4$, ~$n_3=1$, and~$n_1,n_2,n_4\ge2$.
    Then number of valid states~$(S_1,S_2,\emp,\emp)$
    is~$(n_1-1)U_2+V_2$
    where~$U_2=2^{n_2-1}$ and~$V_2=2^{n_2-2}$.
    The number of valid
    states~$(\{s_1\},\{s_2\},\{f_3\},S_4)$
    is~$V_4=2^{n_4-1}$.
    This gives an upper 
    bound~$(n_1-1)2^{n_2-1}+2^{n_2-2}+2^{n_4-1}$
    for concatenation of four languages,
    the third of which is~$\Sigma^*$.
    \qed
\end{example}

Finally, we provide an upper bound in a more general case.

\begin{proposition}
	\label{prop:more-one-state}
    Let~$k\ge3$. Let~$[i_1,j_1],[i_2,j_2],\ldots,[i_m,j_m]$
    with~$j_m<k$, ~$i_\ell\le j_\ell$, and~$i_{\ell+1}\ge j_\ell+2$      
    be intervals on which~$n_i\ge2$
    and with~$n_i=1$ outside these intervals.
    For~$i=1,2,\ldots,k$,
     let~$A_i$ be an~$n_i$-state DFA
     and~$L=L(A_1)L(A_2)\cdots L(A_k)$.
    Set~$U_{j_i}=2^{n_{j_i}-1}$ and~$V_{j_i}=2^{n_{j_i}-2}$
    for~$i=1,2,\ldots,m$.
    Then
    \[   \dsc(L)\le
 \begin{cases}
    V_{i_1}+V_{i_2}+V_{i_3}+ \cdots+ V_{i_m}, 
    &\text{if~$i_1\ge2$};\\
       (n_1-1)U_2+V_2+V_{i_2}+V_{i_3}+ \cdots+ V_{i_m}, &\text{if~$i_1=1$ and~$j_1\ge2$};\\
        (n_1-1)+V_{i_2}+V_{i_3}+ \cdots+ V_{i_m},
     &\text{if~$i_1=1$ and~$j_1=1$}.
 \end{cases}
    \]
\end{proposition}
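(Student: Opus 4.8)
The plan is to count the reachable and pairwise distinguishable valid states of the subset automaton $\cD(N)$ of the NFA $N$ for $L$, exactly as in Proposition~\ref{prop:valid} and in the proof of Proposition~\ref{prop:jeden1-state}. First I record the structure forced by the one-state automata. Each such $A_i$ must accept $\Sigma^*$, so it is a single looping initial-and-final state; since $\Sigma^*\Sigma^*=\Sigma^*$, a maximal run of consecutive one-state automata behaves as one. Write $p_\ell=j_\ell+1$ for the separator immediately following the $\ell$-th genuine block $[i_\ell,j_\ell]$, and call $p_\ell$ \emph{activated} in a state $(S_1,\dots,S_k)$ if $f_{p_\ell}\in S_{p_\ell}$. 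Because the only way to place $s_{p_\ell}$ into $S_{p_\ell}$ is to have reached $f_{j_\ell}$, which requires block $\ell$ to have been entered, and because a one-state component never loses its state, the activated separators of any reachable state form a prefix $\{p_1,\dots,p_r\}$; I call $r$ the \emph{level} of the state.

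The crux is the following collapse. If $p_\ell$ is activated, then the equivalence class of $(S_1,\dots,S_k)$ in $\cD(N)$ is determined by the suffix $(S_{p_\ell+1},\dots,S_k)$ alone. Indeed, $f_{p_\ell}$ loops on every input symbol, so along any continuation $p_\ell$ stays activated and pushes $s_{p_\ell+1}$ at every step; the only bridge by which the prefix $S_1,\dots,S_{p_\ell-1}$ can influence components past $p_\ell$ is the $\eps$-transition into the state of $A_{p_\ell}$, and that target is already present and saturated. Hence no string separates two states that agree on the suffix after an activated $p_\ell$, so all of them may be replaced by one representative with the canonical prefix $(\{s_1\},\dots,\{s_{p_\ell}\})$. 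This is the step demanding the most care: one must verify that the prefix contributes nothing past $p_\ell$ beyond the constant feeding of $s_{p_\ell+1}$, and that the chosen representative is genuinely reachable.

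With the collapse in hand I partition the surviving states by their level $r$. For $1\le r\le m-1$ a level-$r$ state has $p_r$ activated and $p_{r+1}$ not; the collapse fixes the prefix up to $p_r$, the cascade of activated one-state components feeds $s_{i_{r+1}}$, and $p_{r+1}$ unactivated forces $f_{j_{r+1}}\notin S_{j_{r+1}}$. Counting the admissible suffixes supported on block $r+1$ is precisely the recursion (\ref{u2})--(\ref{u3}) started from the reset base $U_{j_{r+1}}=2^{n_{j_{r+1}}-1}$ and $V_{j_{r+1}}=2^{n_{j_{r+1}}-2}$, which encode $f_{j_{r+1}}\notin S_{j_{r+1}}$, and evaluated under $f_{i_{r+1}-1}\in S_{i_{r+1}-1}$; this yields $V_{i_{r+1}}$. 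Thus levels $1,\dots,m-1$ contribute $V_{i_2}+\cdots+V_{i_m}$, and the level-$m$ state, with every separator activated, is the unique accepting sink.

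It remains to count level $0$, where block $1$ is in progress and all later components are $\emp$; here the three cases arise. If $i_1\ge2$, the leading one-state automata keep $f_{i_1-1}\in S_{i_1-1}$ from the start, so the same computation returns $V_{i_1}$. If $i_1=1$ and $j_1\ge2$, block $1$ genuinely begins at position $1$ and the count is the ordinary $(n_1-1)U_2+V_2$ of Theorem~\ref{thm:upper} with the reset base at $j_1$. If $i_1=1$ and $j_1=1$, block $1$ is the single automaton $A_1$ with $f_1\notin S_1$, giving $n_1-1$. Distinguishability of the retained representatives follows from Lemma~\ref{le:dist} once every retained singleton is checked to be co-reachable in $N$. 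Collecting the level contributions then yields the three displayed quantities; the delicate point, besides the collapse itself, is the exact bookkeeping at each block boundary and of the accepting sink, where an off-by-one is easy to commit, so the reset base cases and the single level-$m$ state must be tracked with care.
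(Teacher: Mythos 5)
Your argument follows the paper's proof essentially step for step: the collapse of any state whose populated one-state component $\{f_{p_\ell}\}$ loops forever onto a canonical-prefix representative is exactly the paper's opening observation (a state $(S_1,\ldots,S_{i-1},\{f_i\},S_{i+1},\ldots,S_k)$ is equivalent to $(\{s_1\},\ldots,\{s_{i-1}\},\{f_i\},S_{i+1},\ldots,S_k)$), and your per-block counts via the recursion~(\ref{u2})--(\ref{u3}) with the reset base $U_{j_\ell}=2^{n_{j_\ell}-1}$, $V_{j_\ell}=2^{n_{j_\ell}-2}$ are the paper's counts. Two side remarks: for an upper bound distinguishability plays no role (you only need every reachable state to be equivalent to a counted representative), and for \emph{arbitrary} DFAs $A_i$ the singletons need not be co-reachable in $N$, so the appeal to Lemma~\ref{le:dist} is neither available nor needed.

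The genuine problem is your final collection step. By your own partition, level $0$ contributes $V_{i_1}$ (resp.\ $(n_1-1)U_2+V_2$, resp.\ $n_1-1$), levels $1,\ldots,m-1$ contribute $V_{i_2}+\cdots+V_{i_m}$, and level $m$ contributes one further class, the accepting sink; this sums to the displayed bound \emph{plus one}, so the sentence ``collecting the level contributions then yields the three displayed quantities'' contradicts your own bookkeeping. The sink cannot be absorbed into $V_{i_m}$: that count is built on the reset base, which forces $f_{j_m}\notin S_{j_m}$ and hence $S_k=\emp$ (recall $j_m<k$), so every counted state is rejecting, while the sink is accepting and reachable whenever $L\ne\emp$. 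In fact your bookkeeping is the correct one, and the discrepancy lies with the statement and with the paper's own proof, which never mentions the sink at all: for $k=3$, $n_1=n_3=1$, $n_2=2$, and $L(A_2)=b^*a\Sigma^*$, we get $L=\Sigma^*a\Sigma^*$ and $\dsc(L)=2$, whereas the displayed bound gives $V_2=2^{n_2-2}=1$. Compare Proposition~\ref{prop:jeden1-state}, where in the case $j=k$ the sink is counted explicitly as the term ``$+1$''; exactly that term is missing here. So what your level decomposition actually establishes is each displayed bound with $+1$ added, and no argument can close the gap to the bound as displayed, because as displayed it is false.
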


\begin{proof}
   For each~$i$ with~$n_i=1$,
   the unique state~$f_i$ of~$Q_i$ has a loop on each input symbol. Therefore   a state~$(S_1,S_2,\ldots,S_{i-1},\{f_i\},S_{i+1},\ldots,S_k)$
   is equivalent to~$(\{s_1\},\{s_2\},\ldots,\{s_{i-1}\},\{f_i\},S_{i+1},\ldots,S_k)$.

   First, let~$i_1\ge2$.
   Then for each interval~$[i_\ell,j_\ell]$,
   we need to count the  valid 
  states of the form~$(\{s_1\},\{s_2\},\ldots,\{s_{i_{\ell}-2}\}, \{f_{i_{\ell}-1}\},    S_{i_\ell},S_{i_\ell+1},\ldots,S_{j_\ell},\emp,\ldots,\emp)$.
  The number of such states is given by~$V_{i_\ell}$
  where~$U_{j_\ell}$ and~$V_{j_\ell}$
  are given in the statement.

  If~$i_1=1$ and~$j_1\ge2$, then we
  need to consider also
      states 
  of the form~$(S_1,S_2,\ldots,S_{j_1},\emp,\emp,\ldots\emp)$,
  and there are~$(n_1-1)U_2+V_2$ of them.

  If~$i_1=j_1=1$, then we need to consider also the
  valid states~$(S_1,\emp,\ldots,\emp)$,
  and there are~$(n_1-1)$ of them.
\end{proof}

 \begin{example}\rm
     Let~$k=5$, $n_1=n_2=n_5=3$, and~$n_3=n_4=1$.
     Then the number of valid states~$(S_1,S_2,\emp,\emp,\emp)$
     is~$2\cdot 2^2+2^1$,
     and the number of valid 
     states~$(\{s_1\},\{s_2\},\{f_3\},\{f_4\},S_5)$
     is~$2^2$.
     The upper bound 14 is met
     by one-state automata~$A_3$ and~$A_4$
     and three-state automata~$A_1,A_2,A_5$ with the transitions: 

     in $A_1$: \quad  $a\colon (1,2,3)$;

     in $A_2$: \quad   $b\colon (1,2,3)$, $e\colon (1,2)$,  $f\colon (\{1,2\}\to 1)$;  

     in~$A_5$: \quad $c\colon (1,2,3)$,   $d\colon (2 \to 3\to 1)$;\\
     and all  symbols that are not shown perform the identities.     
 \qed\end{example}

 We conjecture that if all automata have one or at least three states,
 then upper bounds from Proposition~\ref{prop:more-one-state} are tight.
 However, it looks like these upper bounds can further
 be decreased if a two-state automaton precedes the one-state automaton.
 If all automata except for~$A_k$ have two states,
 and~$A_k$ has one,
 it looks like their concatenation is recognized by a~$k$-state DFA.
 We leave the problem of finding a tight upper bound
 for such cases open.
 In what follows, we assume that
 all automata have at least two states.

\section{Matching Lower Bound:     \textbf({\textit{k+1})}-letter Alphabet}

In   this section, we describe   witness languages
meeting the upper bound on the state complexity of multiple concatenation
 of~$k$ regular languages
over a~$(k+1)$-letter alphabet
with a significantly simpler proof than that  in \cite[Section~4, pp.~266--271]{clp18}.
We use these witnesses in the next section to 
describe witness languages over a $k$-letter alphabet.
Let us start with the following example.

\begin{example}\rm
Let~$n_1,n_2\ge3$. Consider DFAs~$A_1$ and~$A_2$ over~$\{a_1,a_2\}$ shown in Figure~$\ref{fig:ex1}$.
The symbol~$a_1$ performs the circular shift in~$A_1$,
and the identity in~$A_2$.
Symmetrically, the symbol~$a_2$ performs the identity in~$A_1$, and the circular shift in~$A_2$.

\begin{figure}[h!]
\centering 
\begin{tikzpicture}[>=stealth', initial text={},shorten >=1pt,auto,node distance=2cm]

\node[state,initial,initial text={$A_1$}  ]  (q1)  [label=center:{$s_1{=}1$}]{};
\node[state](q2) [right of=q1,label=center:{$2$}]{};
\node[state,draw=none] (q3)[right of=q2,label=center:{$\ldots$}]{};
\node[state] (q4) [right of=q3,label=center:{$n_1{-}1$}]{}; 
\node[state,accepting ] (q5)  [right of=q4,label=center:{$f_1{=}n_1$}]{}; 

\node[state,initial,initial text={$A_2$}  ]  (1) at (2,-3)  [ label=center:{$s_2{=}1$}]{};
\node[state](2) [right of=1,label=center:{$2$}]{};
\node[state,draw=none] (3)[right of=2,label=center:{$\ldots$}]{};
\node[state] (4) [right of=3,label=center:{$n_2{-}1$}]{}; 
\node[state,accepting ] (5)  [right of=4,label=center:{$f_2{=}n_2$}]{}; 

\draw[->]  (q1) to node{$a_1$} (q2);
\draw[->]  (q2) to node{$a_1$} (q3);
\draw[->]  (q3) to node{$a_1$} (q4);
\draw[->]  (q4) to node{$a_1$} (q5);
\draw[->]  (q5) [bend left=20] to node[above]{$a_1$}(q1);

\draw[->](q1)to[loop above]node {$a_2$}(q1);
\draw[->](q2)to[loop above]node {$a_2$}(q2);
\draw[->](q4)to[loop above]node {$a_2$}(q4);
\draw[->](q5)to[loop above]node {$a_2$}(q5);

\draw[->]  (1) to node{$a_2$} (2);
\draw[->]  (2) to node{$a_2$} (3);
\draw[->]  (3) to node{$a_2$} (4);
\draw[->]  (4) to node{$a_2$} (5);
\draw[->]  (5) [bend left=20] to node[above]{$a_2$}(1);

\draw[->](1)to[loop above]node {$a_1$}(1);
\draw[->](2)to[loop above]node {$a_1$}(2);
\draw[->](4)to[loop above]node {$a_1$}(4);
\draw[->](5)to[loop above]node {$a_1$}(5);
\end{tikzpicture}
\caption{DFAs  $A_1$  and~$A_2$ with all valid states reachable in~$\cD(N)$.}  
\label{fig:ex1}
\end{figure}
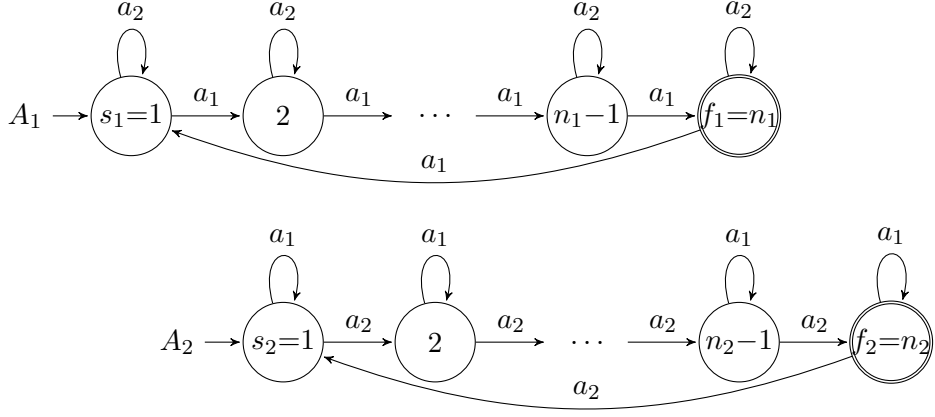

Construct the NFA~$N$ recognizing the language~$L(A_1)L(A_2)$
from the DFAs~$A_1$ and~$A_2$ by adding
the transitions~$(f_1,a_2,s_2)$ and~$(f_1-1,a_1,s_2)$,
by making the state~$f_1$ non-final and state~$s_2$ non-initial.
The NFA~$N$ is shown in Figure~\ref{fig:exN}.

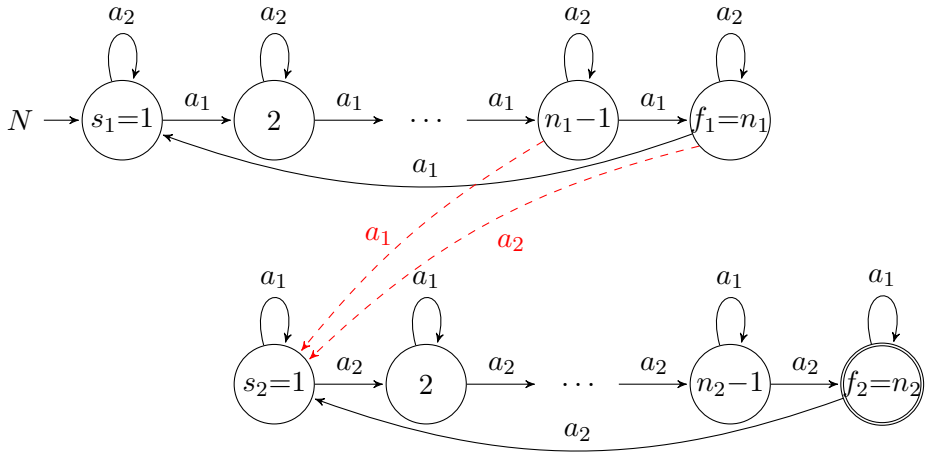
\begin{figure}[b]
\centering 
\begin{tikzpicture}[>=stealth', initial text={},shorten >=1pt,auto,node distance=2cm]

\node[state,initial,initial text={$N$}  ]  (q1)  [label=center:{$s_1{=}1$}]{};
\node[state](q2) [right of=q1,label=center:{$2$}]{};
\node[state,draw=none] (q3)[right of=q2,label=center:{$\ldots$}]{};
\node[state] (q4) [right of=q3,label=center:{$n_1{-}1$}]{}; 
\node[state] (q5)  [right of=q4,label=center:{$f_1{=}n_1$}]{}; 

\node[state]  (1) at (2,-3.5)  [ label=center:{$s_2{=}1$}]{};
\node[state](2) [right of=1,label=center:{$2$}]{};
\node[state,draw=none] (3)[right of=2,label=center:{$\ldots$}]{};
\node[state] (4) [right of=3,label=center:{$n_2{-}1$}]{}; 
\node[state,accepting ] (5)  [right of=4,label=center:{$f_2{=}n_2$}]{}; 

\draw[->]  (q1) to node{$a_1$} (q2);
\draw[->]  (q2) to node{$a_1$} (q3);
\draw[->]  (q3) to node{$a_1$} (q4);
\draw[->]  (q4) to node{$a_1$} (q5);
\draw[->]  (q5) [bend left=20] to node[above]{$a_1$}(q1);

\draw[->](q1)to[loop above]node {$a_2$}(q1);
\draw[->](q2)to[loop above]node {$a_2$}(q2);
\draw[->](q4)to[loop above]node {$a_2$}(q4);
\draw[->](q5)to[loop above]node {$a_2$}(q5);

\draw[->]  (1) to node{$a_2$} (2);
\draw[->]  (2) to node{$a_2$} (3);
\draw[->]  (3) to node{$a_2$} (4);
\draw[->]  (4) to node{$a_2$} (5);
\draw[->]  (5) [bend left=20] to node[above]{$a_2$}(1);

\draw[->](1)to[loop above]node {$a_1$}(1);
\draw[->](2)to[loop above]node {$a_1$}(2);
\draw[->](4)to[loop above]node {$a_1$}(4);
\draw[->](5)to[loop above]node {$a_1$}(5);

\draw[->, red, dashed]  (q4)[bend right=10,above,pos=0.6] to node{$a_1~~$} (1);
\draw[->, red, dashed]  (q5.220) [bend right=15]to node{$a_2$} (1.35);
\end{tikzpicture}
\caption{The NFA~$N$ recognizing the language~$L(A_1)L(A_2)$.}  
\label{fig:exN}
\end{figure}

Let us show that each valid state~$(j,S)$ is reachable
in the subset automaton~$\cD(N)$.
The proof is by induction on~$|S|$.
The basis, with~$|S|=0$, holds true
since each state~$(j,\emptyset)$ with~$j\le n_1-1$
is reached from the initial state~$(s_1,\emptyset)$
by~$a_1^{j-1}$.
Let~$|S|\ge1$. There are three cases to consider.

\medskip\noindent{\it Case~1:}
$j=f_1$. Then~$s_2\in S$ since~$(f_1,S)$ is valid.
Since~$a_1$ performs the circular shift in $A_1$,
and the identity in~$A_2$, we have
$
(n_1-1, S\setminus\{s_2\})
\xrightarrow{a_1}$ $(f_1,\{s_2\}\cup(S\setminus\{s_2\}))=(f_1,S),
$
where the leftmost state is reachable by induction.

\medskip\noindent{\it Case~2:}
$j=s_1$. Let~$m=\min S$. Then~$s_2\in a_2^{m-1}(S)$,
and~$|a_2^{m-1}(S)|=|S|$ since~$a_2$ performs a permutation on the state set of~$A_2$.
Since~$a_1$ performs the identity on the state set of~$A_2$, we have
\[
(f_1,a_2^{m-1}(S))
\xrightarrow{a_1}(s_1,a_2^{m-1}(S))
\xrightarrow{a_2^{m-1}}=(s_1,S),
\]
where the leftmost state is reachable
as shown in Case~1.

\medskip\noindent{\it Case~3:}
$2\le j \le n_1-1$. 
Then we have~$(s_1,S)\xrightarrow{a_1^{j-1}}(j,S)$,
where the left state is considered in Case~2.

\medskip
Thus, the two simple symbols~$a_1$ and~$a_2$ 
guarantee
the reachability of all valid states
in the subset automaton~$\cD(N)$.
However, since both these symbols perform
permutations on the state set~$Q_2$ of~$A_2$,
we have~$Q_2\cdot a_1=Q_2\cdot a_2=Q_2$.
It follows that in~$\cD(N)$, 
all states~$(i,Q_2)$ are equivalent 
to the final sink state.

To guarantee distinguishability,
we add one more input symbol~$b$
which performs the contractions~$s_1\to 2$
and~$s_2\to 2$, 
and denote the resulting automata~$A_1'$ and~$A_2'$,
respectively.
The NFA~$N'$ recognizing~$L(A_1')L(A_2')$
is shown in Figure~\ref{fig:exN'}.

\begin{figure}[h!]
\centering 
\begin{tikzpicture}[>=stealth', initial text={},shorten >=1pt,auto,node distance=2cm]

\node[state,initial,initial text={}  ]  (q1)  [label=center:{$s_1{=}1$}]{};
\node[state](q2) [right of=q1,label=center:{$2$}]{};
\node[state,draw=none] (q3)[right of=q2,label=center:{$\ldots$}]{};
\node[state] (q4) [right of=q3,label=center:{$n_1{-}1$}]{}; 
\node[state] (q5)  [right of=q4,label=center:{$f_1{=}n_1$}]{}; 

\node[state]  (1) at (2,-4)  [ label=center:{$s_2{=}1$}]{};
\node[state](2) [right of=1,label=center:{$2$}]{};
\node[state,draw=none] (3)[right of=2,label=center:{$\ldots$}]{};
\node[state] (4) [right of=3,label=center:{$n_2{-}1$}]{}; 
\node[state,accepting ] (5)  [right of=4,label=center:{$f_2{=}n_2$}]{}; 

\draw[->]  (q1) to node{$a_1,b$} (q2);
\draw[->]  (q2) to node{$a_1$} (q3);
\draw[->]  (q3) to node{$a_1$} (q4);
\draw[->]  (q4) to node{$a_1$} (q5);
\draw[->]  (q5) [bend left=20] to node[above]{$a_1$}(q1);

\draw[->] (q1)to[loop above]node {$a_2$}(q1);
\draw[->] (q2)to[loop above]node {$a_2,b$}(q2);
\draw[->] (q4)to[loop above]node {$a_2,b$}(q4);
\draw[->] (q5)to[loop above]node {$a_2,b$}(q5);

\draw[->]  (1) to node{$a_2,b$} (2);
\draw[->]  (2) to node{$a_2$} (3);
\draw[->]  (3) to node{$a_2$} (4);
\draw[->]  (4) to node{$a_2$} (5);
\draw[->]  (5) [bend left=20] to node[above]{$a_2,b$}(1);

\draw[->](1)to[loop above]node {$a_1$}(1);
\draw[->](2)to[loop above]node {$a_1,b$}(2);
\draw[->](4)to[loop above]node {$a_1,b$}(4);
\draw[->](5)to[loop above]node {$a_1,b$}(5);

\draw[->,red,dashed]  (q4)[bend right=10,above,pos=0.6] to node{$a_1~~$} (1);
\draw[->,red,dashed]  (q5.220) [bend right=15,pos=0.4]to node{$a_2,b$} (1.35);
\end{tikzpicture}
\caption{The NFA recognizing the language~$L(A_1')L(A_2')$.}  
\label{fig:exN'}
\end{figure}
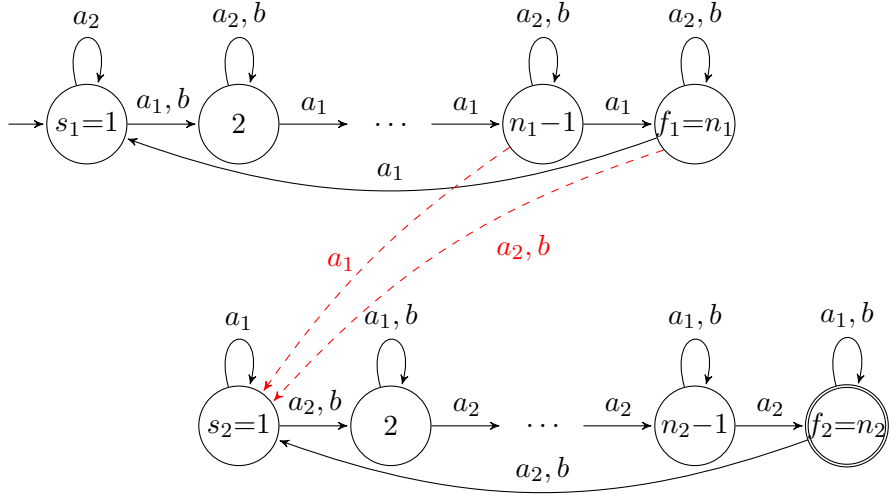

As shown above,
all valid states~$(j,S)$
are reachable
in the corresponding subset automaton~$\cD(N')$.
To get distinguishability, 
let us show that each singleton set
is co-reachable in~$N'$.
In the reversed automaton~$(N')^R$,
the initial set is~$\{f_2\}$,   
and 
 \[
\{f_2\}\xrightarrow{a_2}\{n_2-1\}
 \xrightarrow{a_2}\{n_2-2\}
 \xrightarrow{a_2}\cdots
 \xrightarrow{a_2} \{2\}
 \xrightarrow{a_2} \{s_2\}.
 \]
 Next, since~$n_1\ge3$, we have
 \[
  \{s_2\}\xrightarrow{b}\{f_1\}
 \xrightarrow{a_1}\{n_1-1\}
 \xrightarrow{a_1}\cdots
 \xrightarrow{a_1} \{2\}
 \xrightarrow{a_1} \{s_1\}; 
 \]
 notice that we need~$n_1\ge3$
 to get~$\{s_2\}\xrightarrow{b} \{f_1\}$,
 in the case of~$n_1=2$ we would 
 have~$\{s_2\}\xrightarrow{b} \{f_1,s_1\}$.
Hence each singleton set is co-reachable
in~$N'$.
By Corollary~\ref{cor}, all states 
of the subset automaton~$\cD(N')$
are pairwise distinguishable.
\qed
\end{example}

We use the ideas from the above example
to describe witnesses for multiple concatenation
over a~$(k+1)$-letter alphabet.
To this aim,
let~$k\ge2$ and~$n_i\ge3$ for~$i=1,2,\ldots,k$.

Let~$\Sigma=\{b,a_1,a_2,\ldots,a_k\}$
be an alphabet consisting of~$k+1$ symbols.
For~$i=1,2,\ldots,k$,
define  an~$n_i$-state   DFA $A_i=(Q_i,\Sigma,\cdot,s_i, \{f_i\})$, where
\begin{itemize}
    \item $Q_i=\{1,2,\ldots,n_i\}$,
    \item $s_i=1$,
    \item  $f_i=n_i$, 
    \item $a_i \colon (1,2,\ldots,n_i)$, \quad
           $a_j \colon (1)$ if~$j\ne i$, \quad
           $b\colon (1\to 2)$,
\end{itemize}  
that is, the symbol $a_i$ performs the circular shift on~$Q_i$, 
each symbol $a_j$ with $j\ne i$ performs the identity,
and the symbol~$b$ performs a contraction.
The DFA $A_i$ is shown in Figure~\ref{fig:Ai};
here~$\Sigma\setminus\{a_i\}$
on a loop means that there is a loop in the corresponding state on each symbol in~$\Sigma\setminus\{a_i\}$,
and the same for~$\Sigma\setminus\{a_i,b\}$.

\begin{figure}[h!]
\centering 
\begin{tikzpicture}[>=stealth', initial text={},shorten >=1pt,auto,node distance=2.5cm]

\node[state,initial,initial text={$A_i$}](1)[label=center:{$s_i{=}1$}]{};
\node[state](2)[right of=1,label=center:{$2$}]{};
\node[state,draw=none](3)[right of=2,label=center:{$\ldots$}]{};
\node[state](4)[right of=3,label=center:{$n_i{-}1$}]{};
\node[state,accepting] (5)[right of=4,label=center:{$f_i{=}n_i$}]{};  

\draw[->]  (1) to node{$a_i,b$} (2);
\draw[->]  (2) to node{$a_i$}   (3);
\draw[->]  (3) to node{$a_i$}     (4);
\draw[->]  (4) to node{$a_i$}   (5);
\draw[->]  (5)[bend left] to node[above]{$a_i$}(1);

\draw[->](1)to[loop above]node {$\Sigma\setminus\{a_i,b\}$}(1);
\draw[->](2)to[loop above]node {$\Sigma\setminus\{a_i\}$}(2);
\draw[->](4)to[loop above]node {$\Sigma\setminus\{a_i\}$}(4);
\draw[->](5)to[loop above]node {$\Sigma\setminus\{a_i\}$}(5);
\end{tikzpicture}
\caption{The witness  DFA  $A_i$  
over the $(k+1)$-letter alphabet $\{b, a_1,a_2,\ldots,a_k\}$; $n_i\ge3$.} 
\label{fig:Ai}
\end{figure}
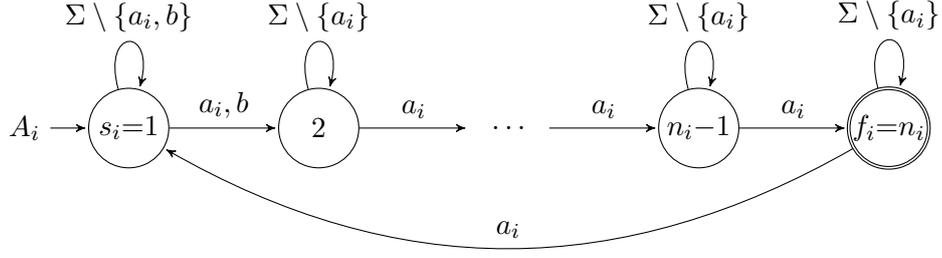 

\begin{figure}[h!]
	\centering 
	\begin{tikzpicture}[>=stealth', initial text={},shorten >=1pt,auto,node distance=2.1cm]

\node[state,initial,initial text={$N_i$}  ]  (q1)  [label=center:{$s_{i-1}$}]{};
\node[state](q2) [right of=q1,label=center:{$2$}]{};
\node[state,draw=none] (q3)[right of=q2,label=center:{$\ldots$}]{};
\node[state] (q4) [right of=q3,label=center:{\footnotesize{$n_{i-1}{-}1$}}]{}; 
\node[state] (q5)  [right of=q4,label=center:{$f_{i-1}$}]{}; 

\node[state]  (1) at (2,-4.6)  [ label=center:{$s_i$}]{};
\node[state](2) [right of=1,label=center:{$2$}]{};
\node[state,draw=none] (3)[right of=2,label=center:{$\ldots$}]{};
\node[state] (4) [right of=3,label=center:{$n_i{-}1$}]{}; 
\node[state,accepting ] (5)  [right of=4,label=center:{$f_i$}]{}; 

\draw[->]  (q1) to node{$a_{i-1},b$} (q2);
\draw[->]  (q2) to node{$a_{i-1}$} (q3);
\draw[->]  (q3) to node{$a_{i-1}$} (q4);
\draw[->]  (q4) to node{$a_{i-1}$} (q5);
\draw[->]  (q5) [bend left=20] to node[above]{$a_{i-1}$}(q1);

\draw[->] (q1)to[loop above]node {$\Sigma\setminus\{a_{i-1},b\}$}(q1);
\draw[->] (q2)to[loop above]node {$\Sigma\setminus\{a_{i-1}\}$}(q2);
\draw[->] (q4)to[loop above]node {$\Sigma\setminus\{a_{i-1}\}$}(q4);
\draw[->] (q5)to[loop above]node {$\Sigma\setminus\{a_{i-1}\}$}(q5);

\draw[->]  (1) to node{$a_i,b$} (2);
\draw[->]  (2) to node{$a_i$} (3);
\draw[->]  (3) to node{$a_i$} (4);
\draw[->]  (4) to node{$a_i$} (5);
\draw[->]  (5) [bend left=20] to node[above]{$a_i,b$}(1);

\draw[->](1)to[loop above]node {$\Sigma\setminus\{a_i,b\}$}(1);
\draw[->](2)to[loop above]node {$\Sigma\setminus\{a_i\}$}(2);
\draw[->](4)to[loop above]node {$\Sigma\setminus\{a_i\}$}(4);
\draw[->](5)to[loop above]node {$\Sigma\setminus\{a_i\}$}(5);

\draw[->,red, dashed]  (q4)[bend right=10,above] to node{$a_{i-1}~~~~~$} (1);
\draw[->,red, dashed]  (q5.220) [bend right=19,pos=0.4]to node{$\Sigma\setminus\{a_i\}$} (1.40);
\end{tikzpicture}
	\caption{The NFA~$N_i$ recognizing
		the language~$L(A_{i-1})L(A-i)$.}
	\label{fig:NFANi}
\end{figure}
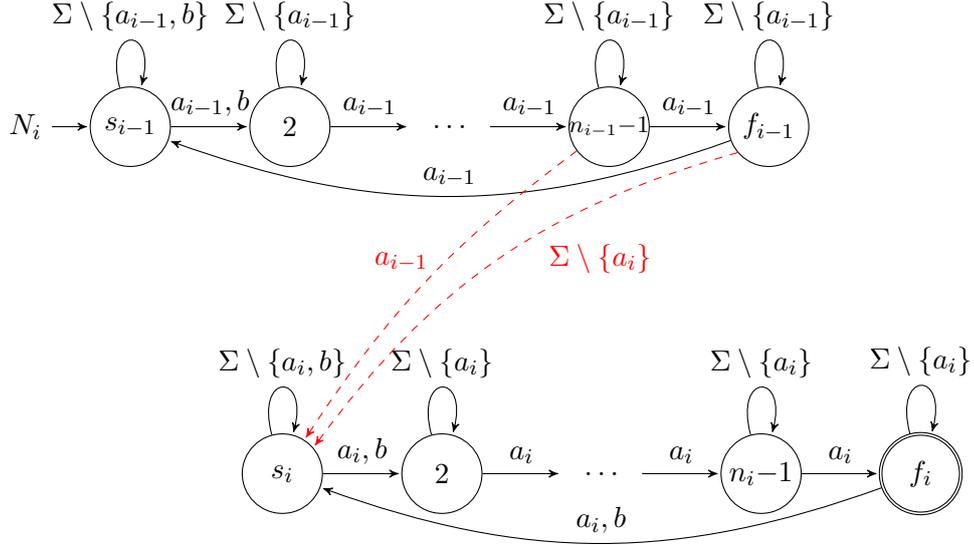

First, let us   consider the  
concatenation~$L(A_{i-1})L(A_{i})$ where $2\le i \le k$.
Construct an NFA~$N_i$ for this concatenation from DFAs $A_{i-1}$ and $A_{i}$
as shown in Figure~\ref{fig:NFANi},
that is,
by adding  the transitions~$(f_{i-1}{-}1 ,a_{i-1},s_{i})$
and~$(f_{i-1},\sigma,s_{i})$ with $\sigma\in \Sigma\setminus\{a_{i-1}\}$,
by making the state~$f_{i-1}$ non-final,
and the state~$s_i$ non-initial.

The next observation is crucial in what follows. It shows that in the  
subset automaton~$\cD(N_i)$,
each state~$( s_{i-1},S)$ with~$S\subseteq Q_i$ and~$S\ne\emp$
is reachable from~$( s_{i-1} ,\{s_i\})$.
Moreover, while reaching~$( s_{i-1},S)$ with~$f_{i}\notin S$,
the state~$f_{i}$ is never visited.
This is a very important property since, later, we do not wish to influence the~$(i+1)$st
component of a valid state while setting its~$i$th component.  

\begin{lemma}
\label{le:Ai-1Ai} 
Let~$2\le i \le k$ 
and~$N_i$ be the NFA for the language~$L(A_{i-1})L(A_i)$ 
described above.
For every non-empty subset~$S\subseteq Q_i$,
there exists a string~$w_S$ over
the alphabet~$\{a_{i-1}, a_i\}$
such that in the   subset automaton $\cD(N_i)$, we have
\begin{enumerate}
\item[(i)]  $( s_{i-1},\{s_i\}) \xrightarrow{w_S}( s_{i-1}, S)$;
\item[(ii)] if $f_i\notin S$,
 $u$ is a prefix of~$w_S$, and
 $( s_{i-1},\{s_i\}) \xrightarrow{u}( q,T)$,  
 then $f_i\notin T$.
\end{enumerate} 
\end{lemma}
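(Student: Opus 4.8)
The plan is to prove both parts simultaneously by induction on the size of the target subset $S$, building $S$ up one element at a time using only the symbols $a_{i-1}$ and $a_i$. Recall that in $N_i$, the symbol $a_{i-1}$ performs the circular shift on $Q_{i-1}$ and the identity on $Q_i$, while $a_i$ performs the identity on $Q_{i-1}$ and the circular shift on $Q_i$. The crucial mechanism for inserting new elements into the second component is the transition $(f_{i-1}-1, a_{i-1}, s_i)$ together with $(f_{i-1}, \sigma, s_i)$: when we sit at state $f_{i-1}-1$ in the first component, applying $a_{i-1}$ moves us to $f_{i-1}$ and simultaneously puts $s_i$ into the second component (and applying $a_{i-1}$ again from $f_{i-1}$ returns the first component to $s_{i-1}$ via the circular shift while also re-inserting $s_i$). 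So the basic loop is: rotate the first component with powers of $a_{i-1}$ to reach $f_{i-1}-1$, fire $a_{i-1}$ to inject $s_i$, then use powers of $a_i$ to rotate the second component so that the freshly injected $s_i$lands on the desired new element, being careful that the first component is brought back to $s_{i-1}$.

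I would set up the induction as follows. The base case $|S|=1$, say $S=\{m\}$, is handled by noting $(s_{i-1},\{s_i\}) \xrightarrow{a_i^{m-1}} (s_{i-1},\{m\})$, since $a_i$ fixes the first component and rotates the second; for part (ii) observe that if $f_i \notin \{m\}$ then $m < n_i$, and the intermediate states $(s_{i-1},\{j\})$ with $j\le m$ never contain $f_i$. For the inductive step, given $S$ with $|S|\ge 2$ and $f_i\notin S$, I would remove a suitable element to get a smaller set $S'$ reachable by induction, then rotate (using $a_i$) so that some element of $S'$ sits at position $f_i-1$ with no element at $f_i$; then apply the injection gadget to add $s_i$ and rotate back. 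The key is to choose the element to remove so that the rotation never passes $f_i$ through any position that would violate the no-$f_i$ invariant during the construction. Since we are free to choose $w_S$ built only from $a_{i-1},a_i$, and since $a_{i-1}$ does not touch the second component at all, the second component is only ever modified by the rotation $a_i$ (preserving cardinality) and by the single injection of $s_i$ (increasing cardinality by one when $s_i\notin$ current second component).

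The main obstacle — and the part requiring genuine care rather than routine bookkeeping — is establishing invariant (ii): guaranteeing that during the \emph{entire} construction of a set $S$ with $f_i\notin S$, the state $f_i$ is never transiently occupied in the second component. This matters because every time the first component returns to a final configuration we risk re-injecting $s_i$, and every application of $a_i$ risks rotating some element onto $f_i$. The clean way to control this is to always grow $S$ from its ``topmost'' element downward: reach the largest required element first and then only rotate the second component by amounts that keep all elements strictly below $f_i$ until the very end, so that no element ever crosses the position $f_i$. Concretely, I would arrange the insertions so that at each stage the elements of the current second component occupy an initial segment of positions relative to some anchor, and the rotation needed to place the next element is always ``short'' enough to avoid wrapping around through $f_i$. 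Verifying this avoidance property for a generic $S$ is where one must be attentive; once it is set up, parts (i) and (ii) follow together from the induction, and the restriction to the two-letter alphabet $\{a_{i-1},a_i\}$ is automatic since those are the only symbols used in the gadget.
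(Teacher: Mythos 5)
Your proposal has the same skeleton as the paper's proof: induction on $|S|$, base case $(s_{i-1},\{s_i\})\xrightarrow{a_i^{m-1}}(s_{i-1},\{m\})$, and an inductive step that injects $s_i$ by driving the first component once around its cycle with $a_{i-1}$'s and then places elements with forward $a_i$-rotations, growing $S$ from its top element downward. However, the one concrete mechanism you describe for the inductive step would fail, and the verification you explicitly defer (``verifying this avoidance property\ldots is where one must be attentive'') is precisely the content of part (ii). Rotating ``so that some element of $S'$ sits at position $f_i-1$'' accomplishes nothing: injection of $s_i$ is triggered solely by the first component, via the transitions $(f_{i-1}{-}1,a_{i-1},s_i)$ and $(f_{i-1},\sigma,s_i)$ with $\sigma\neq a_{i-1}$; the contents of the second component are irrelevant to it. Worse, ``rotate back'' is unavailable under invariant (ii): $a_i$ moves every element of $Q_i$ forward ($j\mapsto j+1$ for $j<n_i$, and $n_i\mapsto 1$), so the only way a position can ever decrease is by wrapping through $f_i=n_i$. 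In particular, an element parked at $f_i-1$ lands exactly on $f_i$ at the very next $a_i$, so any nontrivial post-injection rotation breaks (ii); the maneuver as stated only survives in the degenerate case where no rotation is needed after injection. (A smaller slip: reading $a_{i-1}$ from $f_{i-1}$ does not re-insert $s_i$; that transition exists only for $\sigma\neq a_{i-1}$. It is harmless here but reflects a misreading of $N_i$.)

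What is missing is the concrete choice that closes the induction, and it is the opposite of parking elements high: keep them low and never move anything downward. Take $m=\min S$ and apply the induction hypothesis to $S'=a_i^{m-1}(S\setminus\{m\})$, i.e., $S\setminus\{m\}$ shifted \emph{backward} by $m-1$, so that $|S'|=|S|-1$ and, when $f_i\notin S$, $S'\subseteq[2,f_i-m]$. Then $w_S=w_{S'}\,a_{i-1}^{n_{i-1}}\,a_i^{m-1}$: the block $a_{i-1}^{n_{i-1}}$ injects $s_i$ and returns the first component to $s_{i-1}$ (so the subsequent $a_i$'s are read with the first component at $s_{i-1}$, causing no stray injections), and the single forward rotation by $m-1$ sends $\{s_i\}\cup S'$ onto $S$. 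For (ii): after $t\le m-1$ rotation steps the second component is $\{1+t\}\cup(S'+t)\subseteq[1,f_i-1]$, and since $f_i\notin S'$ the induction hypothesis covers the prefix $w_{S'}$. Thus each element is injected at $s_i$ and travels monotonically upward to its final position, elements with higher final positions being injected earlier; no element ever needs to move down, which is exactly why (i) and (ii) can be carried by one and the same induction.
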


\begin{proof} 
 The proof of both (i) and (ii) is by induction on~$|S|$.
 The basis, with~$|S|=1$,
 holds true since 
 for each~$j\in Q_i$\,, the state $(s_{i-1},\{j\})$ is reached from $(s_{i-1},\{s_i\})$
 by $a_i^{j-1}$. Moreover,   if~$j\ne f_i$, 
 then~$f_i$ is not visited while reading~$a_i^{j-1}$.
 
 Let~$|S|\ge2$. Let~$m=\min S$  
 and~$S'=a_i^{m-1} (S\setminus\{m\})$. Then~$|S'|=|S|-1$. By reading~$n_{i-1}$ times the symbol~$a_{i-1}$ and then the string~$a_i^{m-1}$ we get
 \[
 (s_{i-1}, S')
 \xrightarrow{a_{i-1}^{n_{i-1}}}(s_{i-1},\{s_i\}\cup 
S')
 \xrightarrow{a_i^{m-1}}(s_{i-1},\{m\}\cup(S\setminus\{m\}))
 =(s_{i-1},S),
 \]
 where the leftmost state is reached from the state~$(s_{i-1},\{s_i\})$
 by the string~$w_{S'}$ by induction,
 so we have~$w_S=w_{S'}a_{i-1}^{n_{i-1}}a_i^{m-1}$.
 Moreover, if~$f_i\notin S$,
 then~$S'\subseteq[2,f_i-m]$, so~$f_i\notin S'$.
 By~induction, the state~$f_i$ has not been visited
 while reading~$w_{S'}$ to reach~$(s_{i-1},S')$ 
 from~$(s_{i-1},\{s_i\})$. 
 Since in~$A_i$, the symbols~$a_{i-1}$ and~$a_i$ 
 perform the identity and circular shift, respectively,
 the state~$f_i$ is not visited either
 while reading the string~$a_{i-1}^{n_{i-1}}a_i^{m-1}$
  to reach~$(s_{i-1},S)$ from~$(s_{i-1},S')$. 
\end{proof} 

Now, construct the NFA~$N$
recognizing~$L(A_1)L(A_2)\cdots L(A_{k})$ from DFAs~$A_1, A_2, \ldots,A_k$
as follows:
First, for each~$i=1,2,\ldots,k-1$,
add the transitions~$(f_{i}{-}1,a_{i},s_{i+1})$
and~$(f_{i},\sigma,s_{i+1})$ with~$\sigma\in\Sigma\setminus\{a_{i}\}$.
Then, make states~$f_1,f_2,\ldots,f_{k-1}$
non-final, and states~$s_2,s_3,\ldots,s_{k}$
non-initial;
see Figure~\ref{fig:nfa3} for an illustration.

\begin{figure}[h!]
\centering
\begin{tikzpicture}[>=stealth', initial text={~},shorten >=1pt,auto,node distance=2cm]

\node[state,initial,initial text={$N$}  ]  (0)  [label=center:{$s_1  $}]{};
\node[state  ] (1)  [right of=0,label=center:{$ 2$}]{};
\node[state ] (3)  [right of=1,label=center:{$f_1$}]{};

\node[state] (s2) at (2,-3)[label=center:{$s_2$}]{};
\node[state] (22)  [right of=s2,label=center:{$2$}]{};
\node[state] (23)  [right of=22,label=center:{$3$}]{};
\node[state] (f2)  [right of=23,label=center:{$f_2$}]{};

\node[state]  (s3)  at (6,-6)[label=center:{$s_3$}]{};
\node[state] (32)  [right of=s3,label=center:{$2$}]{};
\node[state,accepting] (f3)  [right of=32,label=center:{$f_3$}]{};

\draw[->]  (s2) to node{$a_2,b$} (22);
\draw[->]  (22) to node{$a_2$} (23);
\draw[->]  (23) to node{$a_2$} (f2);
\draw[->] [bend left] (f2) to node[above]  {$a_2$}(s2);

\draw[->](22)to[loop above]node {$a_1,a_3,b$}(22);
\draw[->](23)to[loop above]node {$a_1,a_3,b$}(23);
\draw[->](f2)to[loop above]node {$a_1,a_3,b$}(f2);

\draw[->,red, dashed]  (3) to node [pos=0.1]{$a_2,a_3,b$} (s2.45);
\draw[->](s2)to[loop above]node {$a_1,a_3$}(s2);

\draw[->,red, dashed] [bend left] (1) to node[pos=0.4] {$a_1$}(s2);

\draw[->](s3)to[loop above]node {$a_1,a_2$}(s3);
\draw[->](32)to[loop above]node {$a_1,a_2,b$}(32);
\draw[->](f3)to[loop above]node {$a_1,a_2,b$}(f3);

\draw[->]  (s3) to node{$a_3,b$} (32);
\draw[->]  (32) to node{$a_3$} (f3);
\draw[->]  (f3) [bend left=40,above]to node  {$a_3$}(s3);

\draw[->,red, dashed]  (f2) to node [pos=0.1]{$a_1,a_3,b$} (s3.45);
\draw[->,red, dashed]   (23)[bend left] to node[pos=0.4] {$a_2$}(s3);

\draw[->]  (0) to node{$a_1,b$} (1);
\draw[->]  (1) to node{$a_1$} (3);
\draw[->] [bend left=40] (3) to node[above]{$a_1$}(0);

\draw[->](0)to[loop above]node {$a_2,a_3$}(0);
\draw[->](1)to[loop above]node {$a_2,a_3,b$}(1);
\draw[->](3)to[loop above]node {$a_2,a_3,b$}(3);
\end{tikzpicture}
\caption{The NFA~$N$ for~$L(A_1)L(A_2)L(A_3)$
with~$n_1=3$, $n_2=4$, and~$n_3=3$.}
\label{fig:nfa3}
\end{figure}
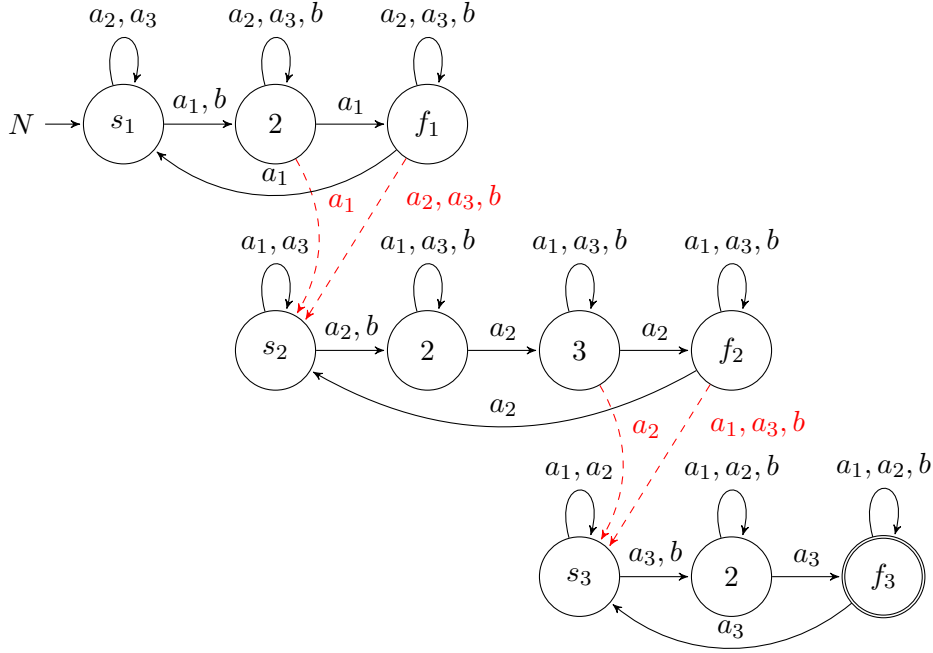
 
\begin{theorem}
\label{thm:k+1Reach}
\label{thm:k+1}
 Let~$k\ge2$ and~$n_i\ge3$ for~$i=1,2\ldots,k$.
 Let~$A_i$ be the~$n_i$-state DFA from Figure~\ref{fig:Ai}.
 Let~$N$ be the NFA for~$L(A_1)L(A_2)\cdots L(A_k)$
 described above.  
 Then  all valid states 
 are reachable and pairwise distinguishable
 in the subset automaton  $\cD(N)$.  
\end{theorem}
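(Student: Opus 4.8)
The plan is to prove the two assertions in turn: every valid state is reachable in $\cD(N)$, and all states of $\cD(N)$ are pairwise distinguishable. For reachability I would fix a valid state $(q,S_2,\ldots,S_k)$ and let $\ell$ be the largest index with $S_\ell\neq\emptyset$; validity makes $S_1,\ldots,S_\ell$ non-empty and $S_{\ell+1}=\cdots=S_k=\emptyset$. I reach this state in three stages. In the first stage I reach the diagonal state $(s_1,\{s_2\},\ldots,\{s_\ell\},\emptyset,\ldots,\emptyset)$: starting from the initial state $(s_1,\emptyset,\ldots,\emptyset)$ I raise the diagonal from height $j$ to height $j{+}1$ by reading $a_j^{\,n_j}$. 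This cycles the singleton $j$th component once around $A_j$, crossing $f_j$ exactly once and thereby injecting $s_{j+1}$ through the transition $(f_j{-}1,a_j,s_{j+1})$, while returning the $j$th component to $\{s_j\}$; since every shallower component is the non-final singleton $\{s_i\}$, no other jump can fire.

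In the second stage I replace the singletons by the target sets, processing the indices in the order $j=\ell,\ell-1,\ldots,2$. To install the $j$th component I read the string $w_{S_j}$ over $\{a_{j-1},a_j\}$ supplied by Lemma~\ref{le:Ai-1Ai}. As $a_{j-1}$ and $a_j$ act as the identity on every $A_i$ with $i\notin\{j-1,j\}$, and the $(j{-}1)$st component is still the singleton $\{s_{j-1}\}$ whose cycling injects $s_j$ exactly as in the two-automaton machine $N_j$, the $j$th component evolves in $\cD(N)$ precisely as it does in $\cD(N_j)$; hence $w_{S_j}$ drives the $j$th component from $\{s_j\}$ to $S_j$ and restores the $(j{-}1)$st component to $\{s_{j-1}\}$. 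The delicate point, which I expect to be the main obstacle, is that reading $a_{j-1}$ or $a_j$ may inject a start state into one of the already-installed deeper components and corrupt it. This is resolved by a single observation: a start state $s_{i+1}$ is injected into the $(i{+}1)$st component only while the $i$th component contains $f_i$, and in a valid state $f_i\in S_i$ forces $s_{i+1}\in S_{i+1}$, so every such injection merely re-adds a state already present. For the component under construction this is exactly the dichotomy in the lemma: if $f_j\notin S_j$, then part~(ii) keeps the $j$th component off $f_j$ throughout $w_{S_j}$ and nothing is injected below, while if $f_j\in S_j$, then $s_{j+1}\in S_{j+1}$ and the injection is harmless. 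Since the permutation symbols shift no component other than the $(j{-}1)$st and $j$th, all deeper components survive unchanged and the stage ends at $(s_1,S_2,\ldots,S_k)$.

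In the third stage I place the first component. If $2\le q\le n_1-1$, then reading $a_1^{\,q-1}$ slides the first component from $s_1$ to $q$ without ever stepping from $f_1{-}1$ onto $f_1$, so $s_2$ is never injected and the only jumps are the harmless re-injections described above; if $q=f_1$, then validity forces $s_2\in S_2$ and reading $a_1^{\,n_1-1}$ carries the first component to $f_1$, its final step merely re-injecting the already-present $s_2$. The base case $\ell=1$ is covered by the same $a_1^{\,q-1}$, with $q\neq f_1$ forced by validity. This reaches every valid state.

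For distinguishability I would apply Corollary~\ref{cor}, showing that every singleton $\{p\}$ is co-reachable, that is, reachable in $N^R$ from its initial set $\{f_k\}$. In $N^R$ the reversed shift $a_i$ walks $\{f_i\}$ backwards through the singletons $\{f_i{-}1\},\ldots,\{2\},\{s_i\}$ of $Q_i$ (each such preimage is a singleton, as the only added transitions into $Q_i$ end at $s_i$), after which $\{s_i\}\xrightarrow{b}\{f_{i-1}\}$, since the only $b$-transition of $N$ entering $s_i$ is the added one $(f_{i-1},b,s_i)$ and, as in the preceding example, the hypothesis $n_{i-1}\ge3$ makes this preimage the single state $f_{i-1}$. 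Iterating from $i=k$ down to $i=2$ reaches every singleton of every $Q_i$, so by Corollary~\ref{cor} all states of $\cD(N)$, in particular all valid ones, are pairwise distinguishable.
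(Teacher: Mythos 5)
Your proposal is correct and follows essentially the same route as the paper's proof: reach the diagonal state $(s_1,\{s_2\},\ldots,\{s_\ell\},\emptyset,\ldots,\emptyset)$ with the permutation symbols, install the components $S_\ell,S_{\ell-1},\ldots,S_2$ deepest-first via the strings $w_{S_i}$ of Lemma~\ref{le:Ai-1Ai} (using validity to argue that any injected $s_{i+1}$ is already present, and part~(ii) of the lemma when $f_i\notin S_i$), finish with $a_1^{j-1}$, and obtain distinguishability from co-reachability of all singletons and Corollary~\ref{cor}. Your treatment is, if anything, slightly more explicit than the paper's on two minor points (the $q=f_1$ case of the last stage and why the backward $a_i$- and $b$-preimages stay singletons), but the decomposition and key lemmas are identical.
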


\begin{proof}
 We first prove reachability.
 Let~$q=(j,S_2,S_3,\ldots,S_k)$
 be a valid state.
 If~$S_2=\emp$,
 then the state~$q=(j,\emp,\emp,\ldots,\emp)$
 is reached from the initial state~$(s_1,\emp,\emp,\ldots,\emp)$
 by the string~$a_1^{j-1}$.
 Next, let~$\ell=\max\{i\ge2\mid S_i\ne \emp\}$.
 Then~$q=$~$(j,S_2,S_3,\ldots,S_\ell, \emp,\emp,\ldots,\emp)$
 where 
 $2\le \ell \le k$, \ 
 $S_i\subseteq Q_i$  and~$S_i\ne\emp$ for~$i= 2,3,\ldots,\ell$.  
  Since each~$a_i$ performs the circular shift
  in~$A_i$ and the identity in~$A_j$ with~$j\ne i$, 
  the string~$a_1^{n_1} a_2^{n_2} \cdots a_{\ell-1}^{n_{\ell-1}}$
  sends the initial state~$(s_1,\emp,\emp,\ldots,\emp)$
  to 
    \[
  (s_1, \{s_2\}, \{s_3\},\ldots, \{s_{\ell-1}\},\{s_\ell\},
  \emptyset, \emptyset,\ldots,\emptyset).
  \]
Now, we  
set the corresponding
components to sets~$S_i$, 
starting with~$S_\ell$,
continuing with~$S_{\ell-1}, \ldots,$ and ending with~$S_3$ and~$S_2$.
By Lemma~\ref{le:Ai-1Ai} applied to the NFA~$N_\ell$
recognizing~$L(A_{\ell-1})L(A_\ell)$, there is a string~$w_{S_\ell}$
 over~$\{a_{\ell-1},a_\ell\}$ 
 which sends~$(s_{\ell-1},\{s_\ell\})$
 to~$(s_{\ell-1},S_\ell)$ in the subset automaton~$\cD(N_\ell)$.
 Moreover, since~$q$ is valid,
 we have~$f_\ell \notin S_\ell$,
 which means that the state~$f_\ell$ is not
 visited while reading~$w_{S_\ell}$.
 Since both~$a_{\ell-1}$ and~$a_\ell$
 perform identities on~$Q_1,Q_2,\ldots,Q_{\ell-2}$,
  in 
  $\cD(N)$ we have 
 \begin{align*}
     &(s_1, \{s_2\},  \{s_3\},\ldots, \{s_{\ell-1}\},\{s_\ell\},
  \emptyset, \emptyset,\ldots,\emptyset) 
\xrightarrow{w_{S_\ell} \text{ over }\{a_{\ell-1},a_\ell\}}   \\
  &  (s_1, \{s_2\},  \{s_3\}, \ldots, \{s_{\ell-1}\}, ~~S_\ell~,
  \emptyset, \emptyset,\ldots,\emptyset).
 \end{align*}
 
 Next, Lemma~\ref{le:Ai-1Ai} applied to~$N_{\ell-1}$ gives a string~$w_{S_{\ell-1}}$ over~$\{a_{\ell-2},a_{\ell-1}\}$
 which sends~$(s_{\ell-2},\{s_{\ell-1}\})$
 to~$(s_{\ell-2},S_{\ell-1})$
 in~$\cD(N_{\ell-1})$,
 and moreover if~$f_{\ell-1}\notin S_{\ell-1}$,
 then~$f_{\ell-1}$ is not visited
 while reading this string.
 Since both symbols~$a_{\ell-2}$ and~$a_{\ell-1}$ perform identities on~$Q_1,Q_2,\ldots,Q_{\ell-3}$,
 as well as on~$Q_\ell$,
 in~$\cD(N)$ 
 we have
  \begin{align*}
     &(s_1, \{s_2\}, \{s_3\},\ldots, \{s_{\ell-2}\},\{s_{\ell-1}\},S_\ell,
  \emptyset, \emptyset,\ldots,\emptyset)
\xrightarrow{w_{S_{\ell-1}} \text{ over }\{a_{\ell-2},a_{\ell-1}\}} \\
    &(s_1, \{s_2\}, \{s_3\},\ldots, \{s_{\ell-2}\},~~S_{\ell-1}, ~S_\ell,
  \emptyset, \emptyset,\ldots,\emptyset).
 \end{align*}

Now, for~$i=2,3,\ldots,\ell-2$,
  let $w_{S_i}$  be the string   
  over $\{a_{i-1}, a_i\}$ given by  Lemma~\ref{le:Ai-1Ai}
  that sends~$( s_{i-1},\{s_i\})$
  to~$(s_{i-1},S_i)$.
  Moreover, $f_i\notin S_i$ implies
  that~$f_i$ is never visited while
  reading~$w_{S_i}$,
  which in turn implies that 
  $ s_{i+1}$ is never added to the~$(i+1)$th component in such a  case.
  If $f_i\in S_i$ and~$i\le k-1$, then the state $s_{i+1}$ is included in $S_{i+1}$
  since the state~$q$  is  valid,
  and~$s_{i+1}$ is sent to itself by both~$a_{i-1}$ and~$a_{i}$.
  Next, there is a loop on both~$a_{i-1}$ and~$a_i$
  in the states~$s_1,s_2,\ldots,s_{i-2}$,
  as well as in all states of automata~$A_{i+1}, A_{i+2},\ldots, A_\ell$. 
  Then in~$\cD(N)$   we have
  \begin{align*}
   & (s_1, \{s_2\}, \{s_3\},\ldots,
 \{s_{\ell-2}\},~~S_{\ell-1} ,~~~S_\ell,~
  \emptyset, \emptyset,\ldots,\emptyset) 
\xrightarrow{w_{S_{\ell-2}}w_{S_{\ell-3}}\cdots\, w_{S_3}w_{S_2}}\\
     & (s_1, ~~S_2, ~~~S_3,\ldots,
 ~~S_{\ell-2},~~~S_{\ell-1} ,~~~S_\ell,~
  \emptyset, \emptyset,\ldots,\emptyset),
  \end{align*} 
  and the resulting state is sent to~$q$ by the string~$a_1^{j-1}$.
    Hence~$q=(j,S_2, S_3, \ldots,S_\ell, \emp,\emp,\ldots,\emp)$
  is reached from the initial 
  state~$(s_1,\emp,\emp,\ldots,\emp)$ by the string
  $
      a_1^{n_1} a_2^{n_2} \cdots a_{\ell-1}^{n_{\ell-1}}
       w_{S_\ell} w_{S_{\ell-1}} \cdots  w_{S_3} w_{S_2} a_1^{j-1}       
  $.

\medskip 
 To get distinguishability,
 let us show that each singleton set is co-reachable in~$N$.
 First, for an example, consider the NFA from  Figure~\ref{fig:nfa3}.
 In its reversed automaton, 
 the initial set is~$\{f_3\}$, and we have  
\begin{align*}
  \{f_3\}\xrightarrow{a_3}\{2\}
 \xrightarrow{a_3}\{s_3\}
 \xrightarrow{b}\{f_2\}
 \xrightarrow{a_2}\{3\}
 \xrightarrow{a_2}\{2\}
 \xrightarrow{a_2}\{s_2\}
 \xrightarrow{b}\{f_1\}
\xrightarrow{a_1}\{2\}
 \xrightarrow{a_1}\{s_1\}.   
\end{align*} 
 In the general case, the initial set of~$N^R$   is~$\{f_k\}$.
 Next, for each~$i=1,2,\ldots,k$, 
 each singleton set~$\{j\}$ such that~$j\in Q_i$ 
 is reached from~$\{f_i\}$ via a string in~$a_i^*$.
 Finally, for each~$i=2,3,\ldots,k$,
 the singleton set~$\{{f_{i-1}}\}$   is reached from~$\{s_{i}\}$ by~$b$ 
 since~$n_{i-1}\ge3$.
 Thus, for every state~$q$ of~$N$,
 the singleton set~$\{q\}$ is co-reachable in the NFA~$N$.
 By Corollary~\ref{cor},
 all states of the subset automaton~$\cD(N)$
 are pairwise distinguishable.
 \end{proof}

Notice that all automata in the previous theorem,
as well as witness automata from~\cite{clp18},
are required to have at least three states.
We conclude this section by describing the witnesses
for multiple concatenation also in the case
where some of given automata have two states.
The idea is to use symbols~$a_k$ and~$b$ to guarantee
co-reachability of singleton sets.
However, then we should be careful with reachability.

To this aim, let~$k\ge2$,~$n_i\ge2$ for~$i=1,2,\ldots,k$,  
and~$\Sigma=\{b,a_1,a_2,\ldots,a_k\}$.
Let
\begin{align*}
    I&=\{i\mid 1\le i\le k-1 \text{ and } i\bmod 2= k\bmod 2\} \\
    J&=\{i\mid 1\le i\le k-1 \text{ and } i\bmod 2\ne k\bmod 2\}, 
\end{align*}
that is, the set~$I$ contains the indexes that have the same parity as~$k$, and the set~$J$ the others.

Consider the~$n_i$-state   DFAs $A_i=(Q_i,\Sigma,\cdot,s_i, \{f_i\})$,
see Figure~\ref{fig:dfaAk_dvojky}, 
where~$Q_i=\{1,2,\ldots,n_i\}$,
 ~$s_i=1$, ~$f_i=n_i$, 
and the transitions are as follows: \\
 if~$i\in I$,
        \,then~\,$a_i\colon (1,2,\ldots,n_i)$,  
        \,$a_k \colon (1\to 2 \to \cdots \to n_i)$,
        \ \,and~$\sigma\colon (1)$ if~$\sigma\in\Sigma\setminus\{a_i,a_k\}$,\\
 if~$i\in J$,
        then~$a_i\colon (1,2,\ldots,n_i)$,   
        $~b\,\, \colon (1\to 2 \to \cdots \to n_i)$,
        \,and~$\sigma\colon (1)$ if~$\sigma\in\Sigma\setminus\{a_i,b\}$,\\ 
 if~$i=k$,
        \,then \,$b\,\colon (1,2,\ldots,n_k)$,   
        $a_k \colon (1\to 2 \to \cdots \to n_k)$,
        \,and~$\sigma\colon (1)$ if~$\sigma\in\Sigma\setminus\{a_k,b\}$,  
that is,
\begin{itemize}
	\item 
	each~$a_i$ with~$1\le i \le k-1$
	performs the circular shift on~$Q_i$,
	and the identity on~$Q_j$ with~$j\ne i$;  
	\item 
	$a_k$ performs
	the transformation~$(1\to 2\to 3 \to \cdots \to n_i)$
	on~$Q_i$ with~$i\in I$ or~$i=k$,
	and the identity on~$Q_i$ with~$i\in J$,  
	\item 
	$b$ performs
	the transformation~$(1\to 2\to 3 \to \cdots \to n_i)$
	on~$Q_i$ with~$i\in J$,
	the circular shift on~$Q_k$,
	and the identity on~$Q_i$ with~$i\in I$.
\end{itemize}

\begin{figure}[h!]
\centering
\tikzset{every state/.style={minimum size=27}}
\begin{tikzpicture}[>=stealth', initial text={~},shorten >=1pt,auto,node distance=2.1cm]

\node[state,initial,initial text={$A_i\ (i\in I)$}  ]  (si)  [label=center:{$s_i{=}1 $}]{};
\node[state  ] (2)  [right of=si,label=center:{$2$}]{};
\node[state,draw=none  ] (3)  [right of=2,label=center:{$\ldots$}]{};
\node[state  ] (4)  [right of=3,label=center:{$ n_i{-}2 $}]{}; 
\node[state,accepting  ] (5)  [right of=4,label=center:{$f_i{=}n_i$}]{}; 

\draw[->]  (si) to node{$a_i,a_k$} (2);
\draw[->]  (2) to node{$a_i,a_k$} (3);
\draw[->]  (3) to node{$a_i,a_k$} (4);
\draw[->]  (4) to node{$a_i,a_k$} (5);
\draw[->] [bend left=25] (5) to node{$a_i$}(si);

\draw[->](si)to[loop above]node {$\Sigma\setminus{\{a_i,a_k}\}$}(si);
\draw[->](2)to[loop above]node {$\Sigma\setminus{\{a_i,a_k}\}$}(2);
\draw[->](4)to[loop above]node {$\Sigma\setminus{\{a_i,a_k}\}$}(4);
\draw[->](5)to[loop above]node {$\Sigma\setminus{\{a_i,a_k}\}$}(5);
\draw[->](5)to[loop right]node {$a_k$}(5);
\end{tikzpicture}

\vskip10pt
\begin{tikzpicture}[>=stealth', initial text={~},shorten >=1pt,auto,node distance=2.1cm]

\node[state,initial,initial text={$A_i\ (i\in J)$}  ]  (sj)  [label=center:{$s_i{=}1 $}]{};
\node[state  ] (2)  [right of=sj,label=center:{$2$}]{};
\node[state,draw=none  ] (3)  [right of=2,label=center:{$\ldots$}]{};
\node[state  ] (4)  [right of=3,label=center:{$ n_i{-}2 $}]{}; 
\node[state,accepting  ] (5)  [right of=4,label=center:{$f_i{=}n_i$}]{}; 

\draw[->]  (sj) to node{$a_i,b$} (2);
\draw[->]  (2) to node{$a_i,b$} (3);
\draw[->]  (3) to node{$a_i,b$} (4);
\draw[->]  (4) to node{$a_i,b$} (5);
\draw[->] [bend left=25] (5) to node{$a_i$}(sj);

\draw[->](sj)to[loop above]node {$\Sigma\setminus{\{a_i,b}\}$}(sj);
\draw[->](2)to[loop above]node {$\Sigma\setminus{\{a_i,b}\}$}(2);
\draw[->](4)to[loop above]node {$\Sigma\setminus{\{a_i,b}\}$}(4);
\draw[->](5)to[loop above]node {$\Sigma\setminus{\{a_i,b}\}$}(5);
\draw[->](5)to[loop right]node {$b$}(5);
\end{tikzpicture}

\begin{tikzpicture}[>=stealth', initial text={~},shorten >=1pt,auto,node distance=2.1cm]

\node[state,initial,initial text={$~~~~~A_k~~~~~$}  ]  (sk)  [label=center:{$s_k{=}1 $}]{};
\node[state  ] (2)  [right of=sk,label=center:{$2$}]{};
\node[state,draw=none  ] (3)  [right of=2,label=center:{$\ldots$}]{};
\node[state  ] (4)  [right of=3,label=center:{$ n_k{-}2 $}]{}; 
\node[state,accepting  ] (5)  [right of=4,label=center:{$f_k{=}n_k$}]{}; 

\draw[->]  (sk) to node{$a_k,b$} (2);
\draw[->]  (2) to node{$a_k,b$} (3);
\draw[->]  (3) to node{$a_k,b$} (4);
\draw[->]  (4) to node{$a_k,b$} (5);
\draw[->] [bend left=25] (5) to node{$b$}(sk);

\draw[->](sk)to[loop above]node {$\Sigma\setminus{\{a_k,b}\}$}(sk);
\draw[->](2)to[loop above]node {$\Sigma\setminus{\{a_k,b}\}$}(2);
\draw[->](4)to[loop above]node {$\Sigma\setminus{\{a_k,b}\}$}(4);
\draw[->](5)to[loop above]node {$\Sigma\setminus{\{a_k,b}\}$}(5);
\draw[->](5)to[loop right]node {$a_k$}(5);


\end{tikzpicture}
\vskip-10pt
\caption{The DFAs  $A_i$ with $i\in I$ (top), 
     $A_i$ with~$i\in J$ (middle), and~$A_k$ (bottom).}
\label{fig:dfaAk_dvojky}
\end{figure}
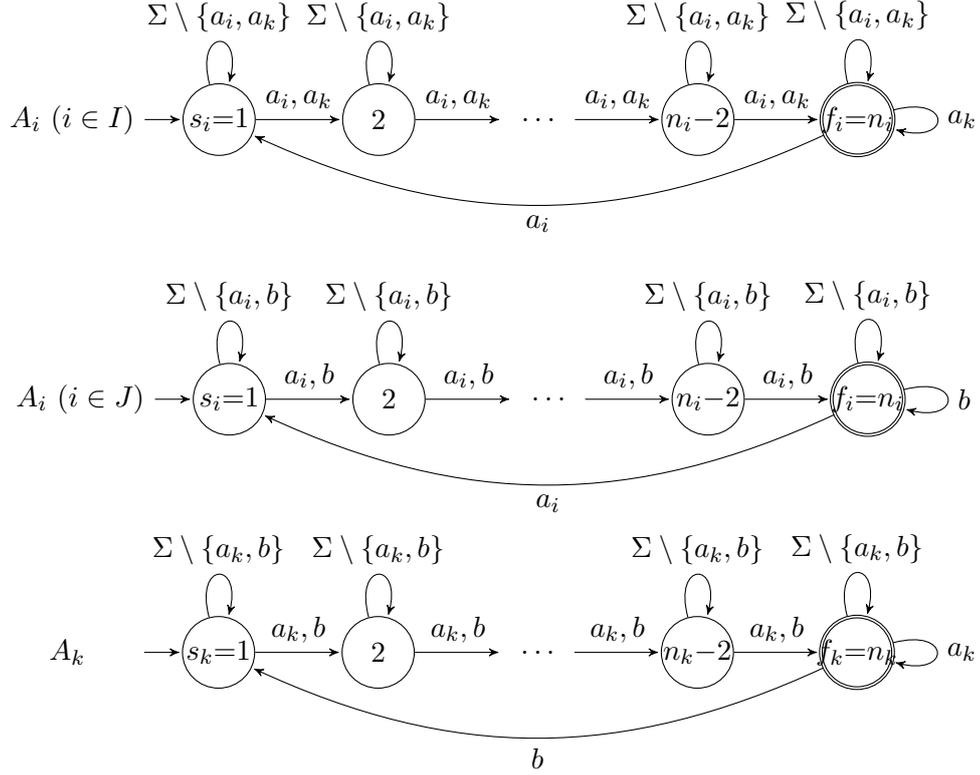 

Construct an NFA~$N$ for the language~$L(A_1)L(A_2)\cdots L(A_k)$ from the DFAs~$A_1, A_2, \ldots,A_k$ as follows
(see Figure~\ref{fig:nfaN23} for an illustration):
For~$i=1,2,\ldots,k-1$, add
the transitions~$(f_i{-}1,a_i,s_{i+1})$
and~$(f_i,\sigma,s_{i+1})$ for each~$\sigma\in\Sigma\setminus\{a_i\}$,
and moreover, if~$i\in I$, 
then add the transition~$(f_i{-}1,a_k,s_{i+1})$,
and if~$i\in J$, then add the transition~$(f_i{-}1,b,s_{i+1})$.
The initial state of~$N$ is~$s_1$,
and its unique final state is~$f_k$.

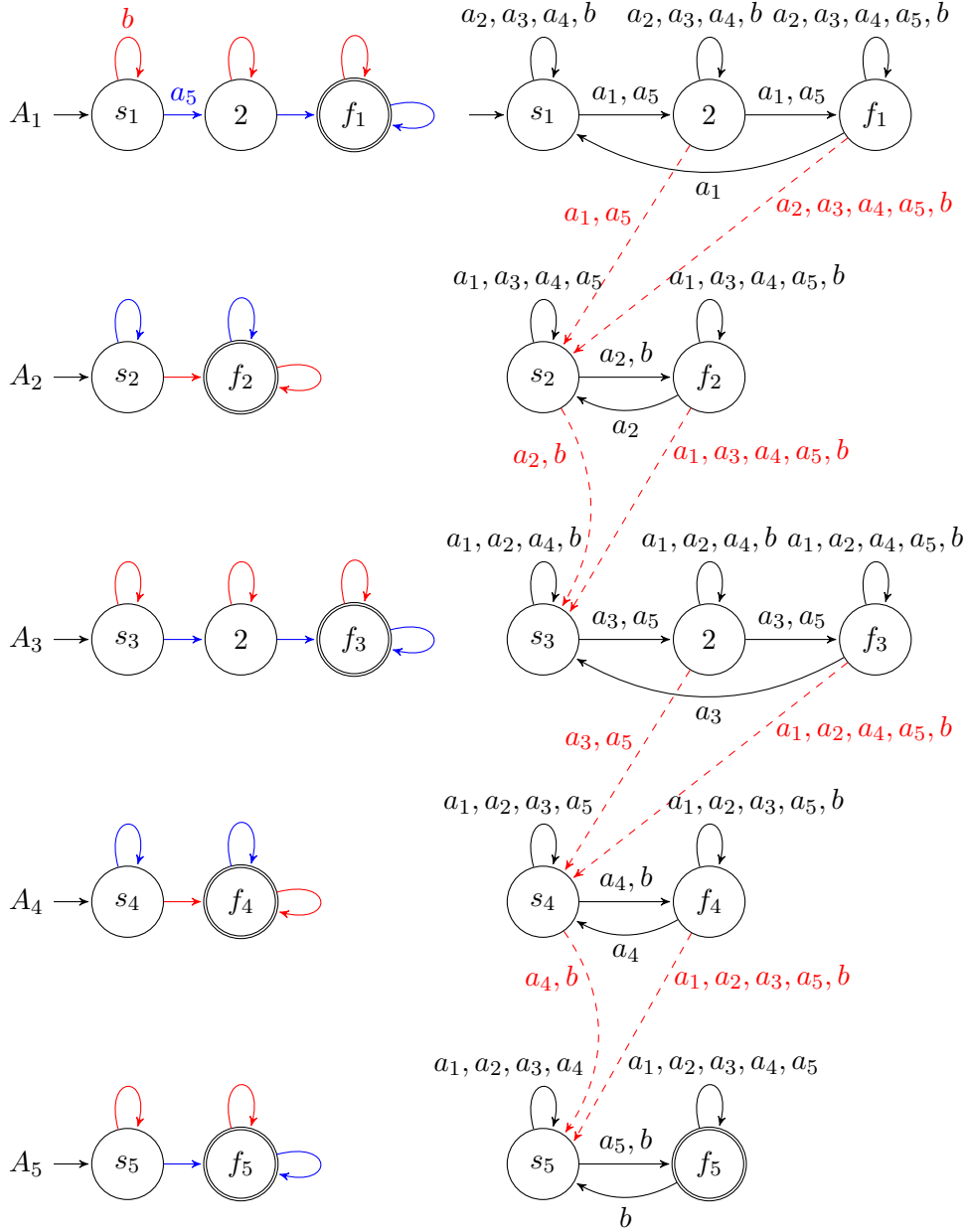
\begin{figure}[h!]
\centering
\tikzset{every state/.style={minimum size=27}}
\begin{tikzpicture}[>=stealth', initial text={~},shorten >=1pt,auto,node distance=2.2cm]

	\begin{scope}[node distance=1.5cm]
\node[state,initial,initial text={$A_1$}]  (q0)   [label=center:{$s_1  $}]{};
\node[state  ] (q1)  [right of=q0,label=center:{$ 2$}]{};
\node[state,accepting ] (q3)  [right of=q1,label=center:{$f_1$}]{};

\node[state,initial,initial text={$A_2$}] (qs2) at (0,-3.5)[label=center:{$s_2$}]{};
\node[state,accepting] (qf2)  [right of=qs2,label=center:{$f_2$}]{};

\node[state,initial,initial text={$A_3$}] (qs3) at (0,-7)[label=center:{$s_3$}]{};
 \node[state] (q32)  [right of=qs3,label=center:{$2$}]{};
\node[state,accepting] (qf3)  [right of=q32,label=center:{$f_3$}]{};

\node[state,initial,initial text={$A_4$}] (qs4) at (0,-10.5)[label=center:{$s_4$}]{};
\node[state,accepting] (qf4)  [right of=qs4,label=center:{$f_4$}]{};

\node[state,initial,initial text={$A_5$}] (qs5) at (0,-14)[label=center:{$s_5$}]{};
\node[state,accepting] (qf5)  [right of=qs5,label=center:{$f_5$}]{};

\draw[->,blue]  (q0) to node{$a_5$} (q1);
\draw[->,blue]  (q1) to node{} (q3);
 \draw[->,blue]  (q3) to [loop right] node{} (q3);
\draw[->,red]  (q1) to [loop above] node{} (q1);
\draw[->,red]  (q0) to [loop above] node{$b$} (q0);
\draw[->,red]  (q3) to [loop above] node{} (q3);

\draw[->,red]  (qs2) to node{} (qf2);
 \draw[->,blue]  (qs2) to [loop above] node{} (qs2);
\draw[->,blue]  (qf2) to [loop above] node{} (qf2);
\draw[->,red]  (qf2) to [loop right] node{} (qf2);

\draw[->,blue]  (qs3) to node{} (q32);
\draw[->,blue]  (q32) to node{} (qf3);
\draw[->,blue]  (qf3) to [loop right] node{} (qf3);
\draw[->,red]  (qs3) to [loop above] node{} (qs3);
\draw[->,red]  (q32) to [loop above] node{} (q32);
\draw[->,red]  (qf3) to [loop above] node{} (qf3);

\draw[->,red]  (qs4) to node{} (qf4);
\draw[->,blue]  (qs4) to [loop above] node{} (qs4);
\draw[->,blue]  (qf4) to [loop above] node{} (qf4);
\draw[->,red]  (qf4) to [loop right] node{} (qf4);

\draw[->,blue]  (qs5) to node{} (qf5);
\draw[->,blue]  (qf5) to [loop right] node{} (qf5);
\draw[->,red]  (qs5) to [loop above] node{} (qs5);
\draw[->,red]  (qf5) to [loop above] node{} (qf5);
\end{scope}

\node[state,initial,initial text={}]  (0) at (5.5,0)   [label=center:{$s_1  $}]{};
\node[state  ] (1)    [right of=0, label=center:{$ 2$}]{};
\node[state ] (3)   [right of=1,label=center:{$f_1$}]{};

\node[state] (s2) at (5.5,-3.5)[label=center:{$s_2$}]{};
\node[state] (f2)  [right of=s2,label=center:{$f_2$}]{};

\node[state]  (s3)  at (5.5,-7)[label=center:{$s_3$}]{};
\node[state] (32)  [right of=s3,label=center:{$2$}]{};
\node[state] (f3)  [right of=32,label=center:{$f_3$}]{};

\node[state]  (s4)  at (5.5,-10.5)[label=center:{$s_4$}]{};
\node[state] (f4)  [right of=s4,label=center:{$f_4$}]{};

\node[state]  (s5)  at (5.5,-14)[label=center:{$s_5$}]{};
\node[state,accepting] (f5)  [right of=s5,label=center:{$f_5$}]{};


\draw[->]  (0) to node{$a_1,a_5$} (1);
\draw[->]  (1) to node{$a_1,a_5$} (3);
\draw[->] [bend left] (3) to node{$a_1$}(0);

\draw[->](0)to[loop above] 
node {$a_2,a_3,a_4,b~~~$}(0);
\draw[->](1)to[loop above]node {$a_2,a_3,a_4,b~~~$}(1);
\draw[->](3)to[loop above]node {$a_2,a_3,a_4,a_5,b~~~$}(3);

\draw[->,red,dashed]  (3.220) to  node [pos=0.2]{\hglue-10pt$a_2,a_3,a_4,a_5,b$} (s2);
\draw[->,red,dashed]  (1) to node[above,pos=0.45] {$a_1,a_5~~~~~~~$}(s2);

\draw[->]  (s2) to node{$a_2,b$} (f2);
\draw[->] [bend left] (f2) to node   {$a_2$}(s2);
\draw[->](f2)to[loop above]node {\hglue25pt$~~~a_1,a_3,a_4,a_5,b$}(f2);
\draw[->](s2)to[loop above]node {$a_1,a_3,a_4,a_5~~~$}(s2);

\draw[->,red,dashed]  (f2) to node [pos=0.1]{\hglue-6pt $a_1,a_3,a_4,a_5,b$} (s3.45);
\draw[->,red,dashed]   (s2)[bend left] to node[pos=0.35,above] {$a_2,b~~~~~~~~~~$}(s3);

\draw[->](s3)to[loop above]node {$a_1,a_2,a_4,b~~~~~~$}(s3);
\draw[->](32)to[loop above]node {$a_1,a_2,a_4,b$}(32);
\draw[->](f3)to[loop above]node {$a_1,a_2,a_4,a_5,b$}(f3);

\draw[->]  (s3) to node{$a_3,a_5$} (32);
\draw[->]  (32) to node{$a_3,a_5$} (f3);
\draw[->]  (f3) [bend left]to node  {$a_3$}(s3);

\draw[->,red,dashed]  (f3.220) to node [pos=0.2]{\hglue-10pt$a_1,a_2,a_4,a_5,b$} (s4);
\draw[->,red,dashed]   (32)[above] to node[pos=0.45] {$a_3,a_5~~~~~~~$}(s4.55);

\draw[->]  (s4)   to node{$a_4,b$} (f4);
\draw[->]  (f4) [bend left] to node{$a_4$} (s4);
\draw[->](s4)to[loop above]node {$a_1,a_2,a_3,a_5~~~~~$}(s4);
\draw[->](f4)to[loop above]node {\hglue35pt $a_1,a_2,a_3,a_5,b$}(f4);

\draw[->,red,dashed]  (f4) to node [pos=0.1]{\hglue-6pt$a_1,a_2,a_3,a_5,b$} (s5.35);
\draw[->,red,dashed]   (s4)[bend left=35,above] to node[pos=0.35] {$a_4,b~~~~~~~~~$}(s5);

\draw[->]  (s5)  to node{$a_5,b$} (f5);
\draw[->]  (f5) [bend left] to node{$b$} (s5);
\draw[->](s5)to[loop above]node {$a_1,a_2,a_3,a_4~~~~~~~$}(s5);
\draw[->](f5)to[loop above]node {$~~~a_1,a_2,a_3,a_4,a_5$}(f5);

\end{tikzpicture}
\caption{The DFAs~$A_1,A_2,A_3,A_4,A_5$: transitions on~$a_5$ and~$b$ (left) and the NFA~$N$ for~$L(A_1)L(A_2)L(A_3)L(A_4)L(A_5)$ (right)
with~$n_1=n_3=3$ and $n_2=n_4=n_5=2$.}
\label{fig:nfaN23}
\end{figure}

\begin{theorem}
\label{thm:aj_dvojky}
    Let~$k\ge2$ and~$n_i\ge2$ for~$i=1,2,\ldots,k$.
    Let~$A_1,A_2,\ldots,A_k$
    be the DFAs shown in Figure~\ref{fig:dfaAk_dvojky},
    and~$N$ be the NFA for~$L(A_1)L(A_2)\cdots L(A_k)$
    described above.
    Then all valid states are reachable
    and pairwise distinguishable in~$\cD(N)$.
\end{theorem}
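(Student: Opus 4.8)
The plan is to prove the two required properties separately, following the skeleton of the proof of Theorem~\ref{thm:k+1}, while coping with the two new features: the symbols $a_k$ and $b$ now act nontrivially on several automata at once, and some $n_i$ may equal $2$. Throughout I write $c_i$ for the symbol that induces the chain $(1\to 2\to\cdots\to n_i)$ on $Q_i$; thus $c_i=a_k$ when $i\in I\cup\{k\}$ and $c_i=b$ when $i\in J$. The structural fact forced by the parity definition of $I$ and $J$, and which I will use repeatedly, is that \emph{consecutive automata use different chain symbols}, i.e.\ $c_i\ne c_{i-1}$, and moreover that $a_k$ is the identity on every $Q_j$ with $j\in J$ (in particular on $Q_{k-1}$) while $b$ is the identity on every $Q_j$ with $j\in I$.

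For reachability I would first observe that the local circular shifts $a_1,\dots,a_{k-1}$ behave exactly as in Theorem~\ref{thm:k+1}: each $a_i$ with $i\le k-1$ is a circular shift on $Q_i$ and the identity on every other automaton, and passing through $f_i$ triggers the added transition to $s_{i+1}$. Hence $a_1^{n_1}\cdots a_{\ell-1}^{n_{\ell-1}}$ still reaches the diagonal state $(s_1,\{s_2\},\dots,\{s_\ell\},\emp,\dots,\emp)$, and Lemma~\ref{le:Ai-1Ai} applies verbatim to every consecutive pair $(A_{i-1},A_i)$ with $2\le i\le k-1$, since both $a_{i-1}$ and $a_i$ act as the identity outside their own automata. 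Consequently, whenever the largest nonempty index $\ell$ satisfies $\ell\le k-1$ the whole argument of Theorem~\ref{thm:k+1} goes through unchanged, because $Q_k$ is never touched.

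The only genuinely new case is $\ell=k$, where I must reach an arbitrary nonempty $S_k$. The circular shift on $Q_k$ is the global symbol $b$, which would drag all $J$-automata along their chains; I would avoid $b$ entirely and build $S_k$ first, while $Q_1,\dots,Q_{k-1}$ sit at the diagonal, using only $a_k$ and the block $a_{k-1}^{n_{k-1}}$. Since $c_{k-1}=b$, the symbol $a_k$ fixes $Q_{k-1}$, and because the $a_k$-transition into $s_k$ issues from $f_{k-1}$ rather than $f_{k-1}-1$, reading $a_k$ with $Q_{k-1}$ at its initial state never injects $s_k$ (this is what makes $n_{k-1}=2$ harmless); so $a_k$ acts as a pure chain on $Q_k$, while $a_{k-1}^{n_{k-1}}$ cycles $Q_{k-1}$ back to $s_{k-1}$ and injects $s_k=1$ exactly once. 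A short induction on $|S_k|$ then shows that a chain together with injection of the bottom state reaches every nonempty subset of $Q_k$: translate the target so its minimum is $1$, obtain it by injection from a set of smaller size, and shift it up with $a_k$, checking that no two states collide at the top. During this phase $a_k$ disturbs the $I$-automata, but every injection it causes lands in a $J$-automaton that still holds its diagonal singleton and is therefore harmless; afterwards the $I$-automata are returned to the diagonal by their local shifts $a_i$ (whose injections are again harmless and which leave $Q_k$ untouched). Having reached $(s_1,\{s_2\},\dots,\{s_{k-1}\},S_k)$, I set $S_{k-1},\dots,S_2$ exactly as before via Lemma~\ref{le:Ai-1Ai}, with property~(ii) and validity ensuring that the induced injections into $S_k$ are harmless, and finish with $a_1^{j-1}$.

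For distinguishability I would show that every singleton is co-reachable and invoke Corollary~\ref{cor}. In $N^R$, starting from the unique final state $\{f_k\}$, reading $b$ gives $\{f_k\}\xrightarrow{b}\{n_k-1\}\xrightarrow{b}\cdots\xrightarrow{b}\{s_k\}$, walking through all singletons of $Q_k$. The heart of the argument is the transfer $\{s_i\}\xrightarrow{c_i}\{f_{i-1}\}$ in $N^R$: because $c_i$ is the chain on $Q_i$, the state $s_i=1$ has no $c_i$-preimage inside $Q_i$; the transition $(f_{i-1},c_i,s_i)$ of $N$ is present since $c_i\ne a_{i-1}$, contributing $f_{i-1}$; and no other preimage occurs because $c_i\ne c_{i-1}$ makes the transition $(f_{i-1}-1,c_i,s_i)$ absent. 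This clean singleton step is precisely where the parity construction is used. Landing on $\{f_{i-1}\}$, the reverse local shift $a_{i-1}$ then sweeps $\{f_{i-1}\}\xrightarrow{a_{i-1}}\cdots\xrightarrow{a_{i-1}}\{s_{i-1}\}$ through all singletons of $Q_{i-1}$, and the process repeats down to $\{s_1\}$. Thus every singleton is co-reachable, and Corollary~\ref{cor} yields pairwise distinguishability of all states of $\cD(N)$. The main obstacle, and the point where this proof is more delicate than that of Theorem~\ref{thm:k+1}, is reaching an arbitrary $S_k$ without the global symbol $b$ scrambling the other automata; the resolution is the observation that $a_k$ is the identity on $Q_{k-1}$ and on all $J$-automata, so it can serve as a subset-building symbol on $Q_k$ whose only side effects are both harmless and repairable, the remaining work being the careful bookkeeping that chain-plus-injection reaches all subsets and that all triggered injections are absorbed by validity.
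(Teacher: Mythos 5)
Your proof is correct and follows essentially the same route as the paper's: the $S_k=\emptyset$ case is inherited verbatim from Theorem~\ref{thm:k+1}, the component $S_k$ is built over $\{a_{k-1},a_k\}$ via the chain-plus-injection induction (the paper phrases this as Lemma~\ref{le:Ai-1Ai} still holding when $a_k$ fixes $f_k$), and distinguishability is obtained from co-reachability of all singletons using the parity-alternating transfer symbols together with Corollary~\ref{cor}. The only cosmetic difference is that you return the drifted $I$-automata to the diagonal once, at the end of the $S_k$-building phase, whereas the paper interleaves the correction string $\prod_{i\in I}a_i^{n_i-1}$ after each occurrence of $a_k$; both repairs are valid since all injections triggered in the meantime land in diagonal $J$-components.
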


\begin{proof}
  First, notice that Lemma~\ref{le:Ai-1Ai}
  still holds for automata~$A_1,A_2,\ldots,A_{k-1}$
  since the transitions on~$a_1,a_2,\ldots,a_{k-1}$
  are the same. 
  Thus, for each non-empty subset~$S$ of~$Q_i$
  with~$i\le k-1$,
  let~$w_S$ be the string over~$\{a_{i-1},a_i\}$
  given By Lemma~\ref{le:Ai-1Ai}.

  Let~$(\{j\},S_2,S_3,\ldots,S_k)$
  be a valid state. 
  If~$S_k=\emp$,
  then~$(j,S_2,S_3,\ldots,S_{k-1},\emp)$
  is reachable as shown in the proof 
  of Theorem~\ref{thm:k+1}.

  Now, let~$S_k\ne\emp$.
  Then the state~$(s_1,\{s_2\},\{s_3\},\ldots,\{s_k\})$
  is reached from the initial state by~$a_1^{n_1} a_2^{n_2} \cdots a_{k-1}^{n_{k-1}}$.
  Next, notice that
  Lemma~\ref{le:Ai-1Ai}
  still holds for~$N_k$
  even if~$a_k$ fixes~$f_k$
  instead of sending it to~$s_k$
  since the out-transition in~$f_k$
  on~$a_k$ is not used in the proof of the lemma. Hence, there is a string~$w(S_k)$
  over~$\{a_{k-1},a_k\}$  which 
  sends the state~$(s_{k-1},\{s_k\})$
  to~$(s_{k-1},S_k)$  
  in the subset automaton~$\cD(N_{k})$.
    However, each~$a_k$ sends each state~$s_i$ with~$i\in I$
    to~$s_i+1$,  and we must then read the string~$u_i=(a_i)^{n_i-1}$
    to send~$s_i+1$ back to~$s_i$ while fixing
     the states in all the remaining components.
    Let~$u= \prod_{i\in I} u_i$.
    Now, let~$w'(S_k)$ be the string
    obtained from~$w(S_k)$ by inserting~$u$ after each~$a_k$.
    Since before reading each~$a_k$
    in~$w_{S_k}$ we have~$s_{k-1}$
    in the~$(k-1)$st component,
     the state~$(s_1,\{s_2\},\ldots,\{s_{k-1}\},\{s_k\})$
    is sent to~$(s_1,\{s_2\},\ldots,\{s_{k-1}\},S_k)$
    by~$w'_{S_k}$,
    and then to~$(j,S_2,S_3,\ldots,S_{k-1},S_k)$
    by~$w_{S_{k-1}}w_{S_{k-2}}\cdots w_{S_{3}}w_{S_{2}}a_1^{j-1}$.

  To prove distinguishability,
  let us show that all singleton sets
  are co-reachable in the NFA~$N$.
  First, as an example, consider the NFA~$N$
  from Figure~\ref{fig:nfaN23}. In the reversed automaton~$N^R$,
  we have
  \begin{align*}
       \{f_5\}
      \xrightarrow{b}\{s_5\}
    &  \xrightarrow{a_5}\{f_4\}
      \xrightarrow{a_4}\{s_4\}
      \xrightarrow{b}\{f_3\}
      \xrightarrow{a_3}\{2\}
      \xrightarrow{a_3}\{s_3\}
      \xrightarrow{a_5}\{f_2\}  
       \xrightarrow{a_2}\{s_2\}
       \xrightarrow{b}\{f_1\}
        \xrightarrow{a_1}\{2\}
         \xrightarrow{a_1}\{s_1\}.
  \end{align*}
 In the general case, the initial set of the reversed automaton~$N^R$ 
   is~$\{f_k\}$,
  and each set~$\{q\}$ with~$q\in Q_k$ is reached from~$\{f_k\}$
  by a string in~$b^*$.
  Next each~$\{f_i\}$ with~$i\in J$
  is reached from~$\{s_{i+1}\}$ by~$a_k$,
  while each~$\{f_i\}$ with~$i\in I$
  is reached from~$\{s_{i+1}\}$ by~$b$.
  Finally, each~$\{q\}$ with~$q\in Q_{i}$, where~$1\le i \le k-1$,
  is reached from~$\{f_i\}$
  by a string in~$a_i^*$.
  It follows that all singleton sets are co-reachable in~$N$. By Corollary~\ref{cor},
  all states of~$\cD(N)$ are 
  pairwise distinguishable.  
\end{proof}  
 
\section{Matching Lower Bound: \textbf{\textit{k}}-letter Alphabet}
\label{sec:k-symbols}

The aim of this section is to describe
witnesses for multiple concatenation
over a~$k$-letter alphabet.
Let us start with the following example.

\begin{example}\rm
\label{thm:two_nase}
\label{ex:dva_nase}
    Let~$n_1,n_2\ge1$ and~$A$ and~$B$
    be the binary DFAs shown in Figure~\ref{fig:nase_dva}.
    Let us show that the languages~$L(A)$ and~$L(B)$
    are witnesses for concatenation
    of two regular languages.

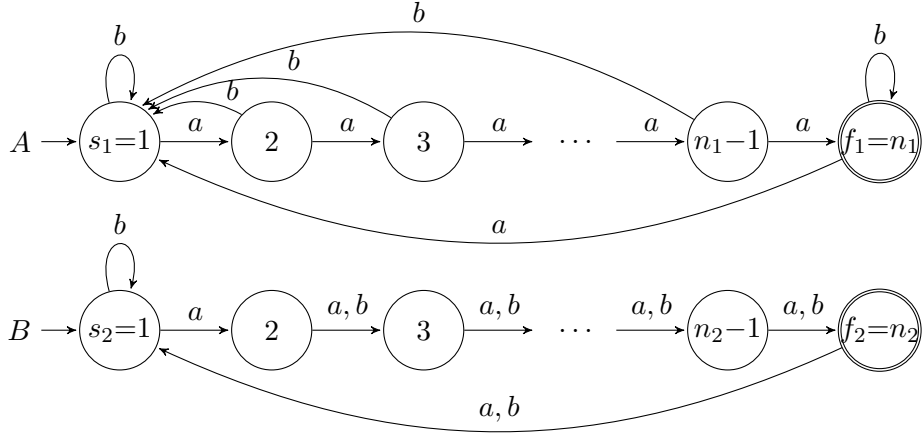
\begin{figure}[h!]
\centering
\tikzset{every state/.style={minimum size=30}}
\begin{tikzpicture}[>=stealth', initial text={~},shorten >=1pt,auto,node distance=2.0cm]

\node[state,initial,initial text={$A$}  ]  (0)  [label=center:{$s_1{=}1$}]{};
\node[state  ] (1)  [right of=0,label=center:{$2$}]{};
\node[state  ] (2)  [right of=1,label=center:{$3$}]{};
\node[state,draw=none  ] (3)  [right of=2,label=center:{$\ldots$}]{};
\node[state  ] (4)  [right of=3,label=center:{$n_1{-}1$}]{}; 
\node[state,accepting  ] (5)  [right of=4,label=center:{$f_1{=}n_1$}]{}; 

\draw[->]  (0) to node{$a$} (1);
\draw[->]  (1) to node{$a$} (2);
\draw[->]  (2) to node{$a$} (3);
\draw[->]  (3) to node{$a$} (4);
\draw[->]  (4) to node{$a$} (5);
\draw[->](5)to[bend left=25]node[above]{$a$}(0);

\draw[->](0)to[loop above]node{$b$}(0);
\draw[->](1)to[bend right,above,pos=0.1]node{$b$}(0.40);
\draw[->](2)to[bend right,above,pos=0.4]node{$b$}(0.50);
\draw[->](4)to[bend right,above]node{$b$}(0.60);
\draw[->](5)to[loop above]node{$b$}(5); 

\node[state,initial,initial text={$B$}] at (0,-2.5)  (q0) [label=center:{$s_2{=}1$}]{};
\node[state](q1)[right of=q0,label=center:{$2$}]{};
\node[state](q2)[right of=q1,label=center:{$3$}]{};
\node[state,draw=none](q3)[right of=q2,label=center:{$\ldots$}]{};
\node[state](q4)[right of=q3,label=center:{$n_2{-}1$}]{}; 
\node[state,accepting](q5)[right of=q4,label=center:{$f_2{=}n_2$}]{}; 

\draw[->]  (q0) to[loop above]node{$b$}(q0);
\draw[->]  (q0) to node{$a$} (q1);
\draw[->]  (q1) to node{$a,b$} (q2);
\draw[->]  (q2) to node{$a,b$} (q3);
\draw[->]  (q3) to node{$a,b$} (q4);
\draw[->]  (q4) to node{$a,b$} (q5); 
\draw[->]  (q5) to[bend left=25]node[above]{$a,b$}(q0);
\end{tikzpicture}
\caption{The binary witnesses for concatenation; $n_1,n_2\ge1$.}
\label{fig:nase_dva}
\end{figure}

    First, let~$n_2=1$. Then~$L(B)=\{a,b\}^*$
    and the concatenation~$L(A)\{a,b\}^*$
    is recognized by the minimal~$n_1$-state DFA
    obtained from~$A$ by replacing the transition~$(f_1,a,s_1)$
    with the transition~$(f_1,a,f_1)$.
    An upper bound is~$n_1$
    by Proposition~\ref{prop:jeden1-state}.

    Now, let~$n_1=1$ and~$n_2\ge2$.
    Then~$s_1=f_1$.
    Construct an NFA~$N$ for~$L(A)L(B)$
    from the DFAs~$A$ and~$B$
    by adding the transitions~$(f_1,a,s_2)$
    and~$(f_1,b,s_2)$,
    and by making the state~$s_1$ non-final.
    Let us show that all valid states~$(f_1,S)$
    are reachable in~$\cD(N)$.
    Since~$(f_1,S)$ is valid, we have~$s_2\in S$.
    The proof is by induction on~$|S|$.
    The basis,~$|S|=1$, that is, $S=\{s_2\}$,
    holds true since~$(f_1,\{s_2\})$
    is the initial state.
    Let~$|S|\ge2$ and $s_2\in S$.   
    Let~$m=\min(S\setminus\{s_2\})$
    and~$S'=S\setminus\{s_2,m\}$.
    Then~$ab^{m-2}(S')\subseteq[2,n_2-m+1]$ and
    \[
    (f_1,\{s_2\}\cup ab^{m-2}(S')) \xrightarrow{a}(f_1,\{s_2,2\}\cup b^{m-2}(S'))
    \xrightarrow{b^{m-2}}(f_1,\{s_2,m\} \cup S')=(f_1,S),
    \]
    where the leftmost valid state is reachable by induction.
    This proves the reachability of~$2^{n_2-1}$
    valid states.
    All these states are pairwise
    distinguishable by Lemma~\ref{le:dist}
    since all singletons~$\{q\}$, where~$q$
    is a state of~$B$, are co-reachable in~$N$. 
     By Proposition~\ref{prop:jeden1-state}, an upper bound is~$V_2=2^{n_2-1}$.
     
    Finally, let~$n_1,n_2\ge2$.
    Construct an NFA~$N$ for~$L(A)L(B)$
    from the DFAs~$A$ and~$B$
    by adding the transitions~$(f_1{-}1,a,s_2)$
    and~$(f_1,b,s_2)$, 
    by making the state~$f_1$ non-final
    and the state~$s_2$ non-initial.
    Let us show that in the subset automaton~$\cD(N)$,
    each valid state~$(j,S)$ is reachable.
    The proof is by induction on~$|S|$.
    The basis, with~$|S|=0$, holds true
    since each valid state~$(j,\emp)$
    is reached from
    the initial state is~$(s_1,\emp)$
    by~$a^{j-1}$. Let~$|S|\ge1$.
    There are three cases to consider.

    \medskip\noindent{\it Case~1:} $j=f_1$.
    Then~$s_2\in S$ since~$(f_1,S)$ is valid.
    We have
    \[ (f_1{-}1,a(S\setminus\{s_2\}))
    \xrightarrow{a}(f_1,\{s_2\})
    \cup(S\setminus\{s_2\})=(f_1,S)
    \]
    where the leftmost valid state is reachable by induction.

    \medskip\noindent{\it Case~2:} $j=s_1$.

    \smallskip\noindent{\it Case~2.a:} $2\in S$.
    Then~$s_2\in a(S)$ and~$(s_1,S)$ is reached
    from~$(f_1,a(S))$ by~$a$,
    where the latter valid state is considered in Case~1.       

    \smallskip\noindent{\it Case~2.b:} $2\notin S$
    and~$S=\{s_2\}$. Then we have
    $
    (f_1,\{s_2\})
    \xrightarrow{a}(s_1,\{2\})
    \xrightarrow{b^{n_2}}(s_1,\{s_2\}),
    $
    where the leftmost state is considered in Case~1.
    
    \smallskip\noindent{\it Case~2.c:} $2\notin S$
    and~$S\ne\{s_2\}$.
    Let~$m=\min(S\setminus\{s_2\})$
    and~$S'=S\cap\{s_2\}$.
    Then~$2\in b^{m-2}(S\setminus\{s_2\})$
    and~$(s_1,S)$ is reached 
    from~$(s_1,S'\cup b^{m-2}(S\setminus\{s_2\}))$
    by~$b^{m-2}$ where the latter state is considered in Case~2.a.
   
    \medskip\noindent{\it Case~3:} $2\le j\le n_1-1$.
    Then~$(j,S)$
    is reached from~$(s_1,a^{j-1}(S))$ by~$a^{j-1}$,
    and the latter set is considered in Case~2.

    \medskip
    This proves the reachability of~$(n_1-1)2^{n_2}+2^{n_2-1}$ states.
    To get distinguishability,
    let~$(i,S)$ and~$(j,T)$
    be two distinct valid states.
    There are two cases to consider.

    \medskip\noindent{\it Case~1:} $S\ne T$.
    The the two states are distinguishable by
    Lemma~\ref{le:dist} since
    all singletons~$\{q\}$,
    where~$q$ is a state of~$B$,
    are co-reachable in~$N$.

    \medskip\noindent{\it Case~2:} $S=T$ and~$i<j$.
    First, let~$S=\emp$. Since~$n_1\ge2$, the string~$a^{n_1-j}$
    sends the two states to states that
    differ in~$s_2$.
    The resulting states are distinguishable
    as shown in Case~1.
        Now, let~$S\ne\emp$. Then the two states
    are sent to~$(s_1,\{s_2\})$
    and~$(f_1,\{s_2\})$
    by~$a^{n_1-j}b^{n_2}$.
    Let us show that the resulting states
    are sent to states that differ in~$s_2$
    by~$a^{n_1}$ if~$s_2a^{n_1}\ne s_2$,
    and by~$a^{n_1-1}ba^{n_1-1}$ otherwise.

    First, notice that both strings~$a^{n_1}$ 
    and~$a^{n_1-1}ba^{n_1-1}$
    send the state~$f_1$ to itself in~$A$.
    It follows that~$(f_1,\{s_2\})$
    is sent to a state containing~$s_2$
    in its second component by both these strings.

    Now, let~$s_2a^{n_1}\ne s_2$.
    Then we have
    \[
    (s_1,\{s_2\})
    \xrightarrow{a^{n_1-1}}(f_1,\{s_2, s_2a^{n_1-1}\})
    \xrightarrow{a}(s_1,\{s_2a,s_2a^{n_1}\}),
    \]
    where~$s_2a\ne s_2$ since~$n_2\ge2$.
    Thus, in this case, the string~$a^{n_1}$
    sends the state~$(s_1,\{s_2\})$ to a state
    which does not have~$s_2$ in its second component.

    Finally, let~$s_2a^{n_1}= s_2$. Then~$s_2a^{n_1-1}=f_2$
    and since~$s_2b=f_2b=s_2$, we have
    \[
    (s_1,\{s_2\})
    \xrightarrow{a^{n_1-1}}(f_1,\{s_2, f_2\})
    \xrightarrow{b}(f_1,\{s_2\})
    \xrightarrow{a^{n_1-1}}(f_1{-}1,\{f_2\}),
    \]
    where~$f_2\ne s_2$
    since~$n_2\ge2$. Hence, this time
    the string~$a^{n_1-1}ba^{n_1-1}$
    sends~$(s_1,\{s_2\})$ to a state
    which does not contain~$s_2$
    in its second component.

     This proves distinguishability,
     and concludes our proof since by Theorem~\ref{thm:upper}, 
     a (known) upper bound   
     is~$(n_1-1)U_2+V_2=(n_1-1)2^{n_2}+2^{n_2-1}$
     in this case.
\qed
\end{example}

Hence the above example provides a two-letter witnesses
for the concatenation of two regular languages (even in the case
then automata may have one or two states).
Therefore, in what follows we assume that~$k\ge3$.

We use our previous results 
to describe witnesses for the concatenation of~$k$ languages
over the~$k$-letter alphabet $\{b,a_1,a_2,\ldots,a_{k-1}\}$.
The idea is as follows.
The transitions on input symbols $a_1,a_2,\ldots,a_{k-1}$ in automata~$A_1,A_2,\ldots,A_{k-1}$
are the same as in our $(k+1)$-letter witnesses from Theorem~\ref{thm:k+1},
while~$A_{k-1}$ and~$A_k$ over~$\{a_{k-1},b\}$ are the same as automata~$A$ and~$B$
in Example~\ref{thm:two_nase}.
The input symbol~$b$ performs 
the transformation~$(\{2,3,\ldots,n_i-1\}\to s_i)$
in each~$A_i$ except for~$A_k$, and it is used to get reachability as well as
distinguishability. 

To this aim,
let~$k\ge3$ and~$\Sigma=\{b,a_1,a_2, \ldots,a_{k-1}\}$
be a~$k$-letter alphabet.
Let~$n_1,n_k\ge2$ and~$n_i\ge3$ for~$i=2,3,\ldots,k-1$.
For~$i=1,2,\ldots,k$, define an~$n_i$-state 
DFA~$A_i=(Q_i,\Sigma,\cdot,s_i,\{f_i\})$,
see Figure~\ref{fig:nase_k}, where 
$Q_i=\{1,2,\ldots,n_i\}$,
$s_i=1$,
$f_i=n_i$,  and the transitions are as follows:
\begin{itemize}
    \item if~$i\le k-1$, then~\\
$a_i\colon (1,2, \ldots,n_i)$,
$b \colon (\{2,3,\ldots,n_i-1\}\to s_i)$,
and~$\sigma\colon (1)$ if~$\sigma\in\Sigma\setminus\{a_i,b\}$,
    \item if~$i=k$, then~\\
$a_{k-1}\colon (1,2, \ldots,n_k)$,
$b \colon (2\to 3\to\cdots\to n_k \to 1)$,
and~$\sigma\colon (1)$ if~$\sigma\in\Sigma\setminus\{a_{k-1},b\}$.
\end{itemize}

\begin{figure}[h!]
\centering
\tikzset{every state/.style={minimum size=30}}
\begin{tikzpicture}[>=stealth', initial text={~},shorten >=1pt,auto,node distance=2.1cm]

\node[state,initial,initial text={$\underset{(i<k)}{A_i}$}  ]  (0)  [label=center:{$s_i{=}1$}]{};
\node[state  ] (1)  [right of=0,label=center:{$2$}]{};
\node[state  ] (2)  [right of=1,label=center:{$3$}]{};
\node[state,draw=none  ] (3)  [right of=2,label=center:{$\ldots$}]{};
\node[state  ] (4)  [right of=3,label=center:{$n_i{-}1$}]{}; 
\node[state,accepting  ] (5)  [right of=4,label=center:{$f_i{=}n_i$}]{}; 

\draw[->]  (0) to node{$a_i$} (1);
\draw[->]  (1) to node{$a_i$} (2);
\draw[->]  (2) to node{$a_i$} (3);
\draw[->]  (3) to node{$a_i$} (4);
\draw[->]  (4) to node{$a_i$} (5);
\draw[->](5)to[bend left=22]node[above]{$a_i$}(0);

\draw[->](0)to[loop above]node{$b$}(0);
\draw[->](1)to[bend right,above,pos=0.1]node{$b$}(0.40);
\draw[->](2)to[bend right,above,pos=0.4]node{$b$}(0.50);
\draw[->](4)to[bend right,above]node{$b$}(0.60);
\draw[->](5)to[loop above]node{$b$}(5); 

\node[state,initial,initial text={$A_k$}] at (0,-2.5)  (q0) [label=center:{$s_k{=}1$}]{};
\node[state](q1)[right of=q0,label=center:{$2$}]{};
\node[state](q2)[right of=q1,label=center:{$3$}]{};
\node[state,draw=none](q3)[right of=q2,label=center:{$\ldots$}]{};
\node[state](q4)[right of=q3,label=center:{$n_k{-}1$}]{}; 
\node[state,accepting](q5)[right of=q4,label=center:{$f_k{=}n_k$}]{}; 

\draw[->]  (q0) to[loop above]node{$b$}(q0);
\draw[->]  (q0) to node{$a_{k-1}$} (q1);
\draw[->]  (q1) to node{$a_{k-1},b$} (q2);
\draw[->]  (q2) to node{$a_{k-1},b$} (q3);
\draw[->]  (q3) to node{$a_{k-1},b$} (q4);
\draw[->]  (q4) to node{$a_{k-1},b$} (q5); 
\draw[->]  (q5) to[bend left=22]node[above]{$a_{k-1},b$}(q0);
\end{tikzpicture}
\vskip-10pt
\caption{The DFA~$A_i$ with~$i<k$ (top):
transitions on~$a_i$ and~$b$,
and the DFA~$A_k$ (bottom): 
transitions on~$a_{k-1}$ and~$b$;
all the remaining symbols 
in both automata perform identities; $n_1,n_k\ge2$ and~$n_i\ge3$ for~$i=2,3,\ldots,k-1$.}
\label{fig:nase_k}
\end{figure}
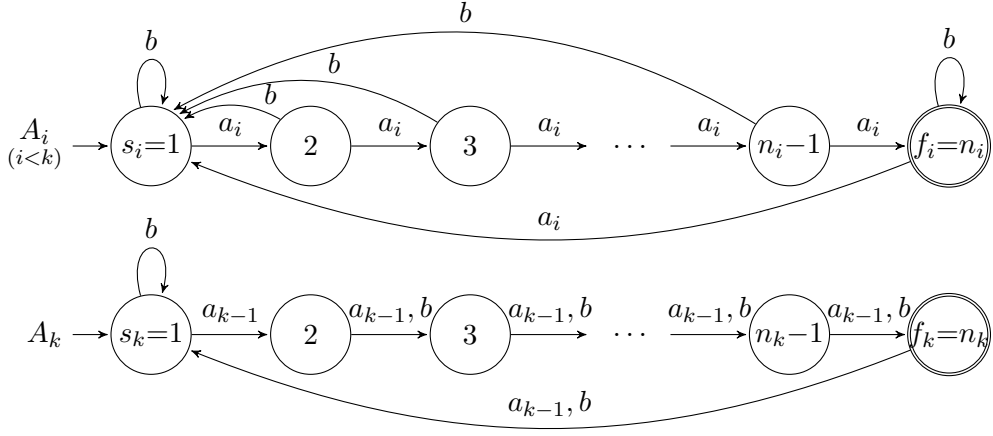

Construct an NFA $N$  
for $L(A_1)L(A_2)\cdots L(A_k)$
from DFAs $A_1,A_2,\ldots,A_k$
by adding the transitions
$(f_{i}{-}1,a_{i},s_{i+1})$,
$(f_{i}, a_j,s_{i+1})$ for $j\ne i$, and
$(f_{i},b,s_{i+1} )$
for~$i=1, 2,\ldots,k-1$;
the initial state of $N$ is $s_1$, and the  final state  is $f_k$.
The next theorem shows that all valid states are
reachable and pairwise distinguishable in~$\cD(N)$.
The proof of reachability   is based on our   results concerning $(k+1)$-letter witnesses
as well as our binary witnesses from Example~\ref{ex:dva_nase}.
The proof of distinguishability is not for free   this time.

\begin{theorem}
\label{thm:k}
 Let $k\ge3$, $n_1,n_k\ge2$, and~$n_i\ge3$ for $i=2,3,\ldots,k-1$.
 Let $A_1, A_2,\ldots,A_k$ 
 be DFAs shown in Figure~\ref{fig:nase_k}
 over the~$k$-letter alphabet~$\{b,a_1,a_2,\ldots,a_{k-1}\}$. Let~$N$ be the NFA for~$L(A_1)L(A_2)\cdots L(A_k)$ described above.
 Then all valid states are reachable and pairwise distinguishable in~$\cD(N)$.
\end{theorem}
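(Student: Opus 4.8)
The plan is to establish reachability by reducing to two results already in hand: Lemma~\ref{le:Ai-1Ai} for the adjacent pairs among $A_1,\dots,A_{k-1}$, and the binary reachability argument of Example~\ref{ex:dva_nase} for the final pair $(A_{k-1},A_k)$, whose transitions on $\{a_{k-1},b\}$ coincide with those of $A$ and $B$ there. Since the transitions on $a_1,\dots,a_{k-1}$ inside $A_1,\dots,A_{k-1}$ are exactly those of Figure~\ref{fig:Ai}, Lemma~\ref{le:Ai-1Ai} applies verbatim to each pair $(A_{i-1},A_i)$ with $2\le i\le k-1$, yielding strings $w_{S_i}$ over $\{a_{i-1},a_i\}$ together with the no-visit-of-$f_i$ property~(ii). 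For a valid state $(j,S_2,\dots,S_k)$ with $S_k=\emp$ I would reach it exactly as in the proof of Theorem~\ref{thm:k+1}, using only $a_1,\dots,a_{k-1}$ and noting that validity forces $f_{k-1}\notin S_{k-1}$, so $s_k$ is never added to the last component.

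The interesting case is $S_k\neq\emp$. First I reach the all-singletons state $(s_1,\{s_2\},\dots,\{s_k\})$ by $a_1^{n_1}\cdots a_{k-1}^{n_{k-1}}$. The one genuine obstacle in reachability is that $a_{k-1}$ is the unique symbol acting non-trivially on two machines ($A_{k-1}$ and $A_k$), so expanding $S_{k-1}$ afterwards, which needs $w_{S_{k-1}}$ over $\{a_{k-2},a_{k-1}\}$, necessarily disturbs the already-set $S_k$. I resolve this by pre-rotation: if $t$ is the number of occurrences of $a_{k-1}$ in $w_{S_{k-1}}$, I first use the binary reachability of Example~\ref{ex:dva_nase} over $\{a_{k-1},b\}$ to set the $A_k$-component to $a_{k-1}^{-t}(S_k)$ while returning the $A_{k-1}$-component to $\{s_{k-1}\}$; because $b$ fixes every $s_i$ and $a_{k-1}$ is the identity on $A_1,\dots,A_{k-2}$, the components of index $\le k-2$ remain equal to their initial singletons throughout this manipulation. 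Reading $w_{S_{k-1}}$ then installs $S_{k-1}$ and simultaneously applies $a_{k-1}^{t}$ to $A_k$, restoring $a_{k-1}^{t}(a_{k-1}^{-t}(S_k))=S_k$. Finally $w_{S_{k-2}}\cdots w_{S_2}$ sets the remaining components: these strings use only symbols of index $\le k-2$, which are identities on $A_{k-1}$ and $A_k$, hence leave $S_{k-1},S_k$ intact, and $a_1^{j-1}$ fixes the first component. A secondary check is the subcase $f_{k-1}\in S_{k-1}$, where validity gives $s_k\in S_k$ and the $f_{k-1}$-visits during $w_{S_{k-1}}$ re-add $s_k$; one verifies these additions are consistent with the target $S_k$.

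For distinguishability I would first observe that, in $N^R$, cycling $\{f_k\}$ under $a_{k-1}$ (a full rotation of $Q_k$) and $b$ makes every singleton $\{q\}$ with $q\in Q_k$ co-reachable, whereas no singleton inside $Q_1,\dots,Q_{k-1}$ is co-reachable: every reverse step from $\{s_k\}$ into $Q_{k-1}$ drags along a state of $Q_k$, so the preimage is never a singleton. Thus Corollary~\ref{cor} no longer suffices and distinguishability must be forced into the $k$-th component. Given two distinct valid states, if their $Q_k$-parts already differ then Lemma~\ref{le:dist} finishes immediately. Otherwise let $m\le k-1$ be the largest index where they differ, choose $q_0\in S_m\triangle T_m$, bring $q_0$ to $f_m$ by $a_m^{\,n_m-q_0}$ (the unique preimage of $f_m$, so $f_m$ becomes active in exactly one of the two states), fire a transition $(f_m,\sigma,s_{m+1})$ with $\sigma\neq a_m$ to place $s_{m+1}$ in the $(m+1)$-st component of only that state, and then cascade $s_{m+1}\to f_{m+1}\to s_{m+2}\to\cdots\to f_k$ so that the manufactured difference lands in $Q_k$, where Lemma~\ref{le:dist} applies with a co-reachable $Q_k$-singleton.

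I expect the distinguishability propagation to be the main obstacle, and presumably the reason the authors flag it as ``not for free.'' The difficulty is controlling side effects: the triggers $(f_m{-}1,a_m,s_{m+1})$ can add $s_{m+1}$ to \emph{both} states while $q_0$ is shifted to $f_m$, the contractions performed by $b$ may collapse the intended witness, and components of index $<m$ may independently drive $A_m,\dots,A_k$ to $f_k$ and thereby mask the difference. Handling this cleanly will require a case analysis in the spirit of the distinguishability cases of Example~\ref{ex:dva_nase} (for instance, distinguishing first by choosing $q_0$ extremal in $S_m\triangle T_m$, and separately treating states that agree on all subset components but differ only in the first component). By comparison, the pre-rotation bookkeeping in the reachability argument is a more routine, if fiddly, difficulty.
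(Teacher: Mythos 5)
Your reachability argument coincides with the paper's own: the same split into $S_k=\emp$ (handled exactly as in Theorem~\ref{thm:k+1}) versus $S_k\ne\emp$, the same reuse of Lemma~\ref{le:Ai-1Ai} and of Example~\ref{ex:dva_nase} for the pair $(A_{k-1},A_k)$, and the same pre-rotation trick: the paper first reaches the state whose last component is the preimage of $S_k$ under $a_{k-1}^{m}$, with $m$ the number of occurrences of $a_{k-1}$ in $w_{S_{k-1}}$, and then reads $w_{S_{k-1}}$. The one step you wave at (``one verifies these additions are consistent with the target $S_k$'') is settled in the paper by a specific refinement of Lemma~\ref{le:Ai-1Ai}: when $f_{k-1}\in S_{k-1}$, the lemma's proof shows that $f_{k-1}$ is visited for the \emph{first} time only after the last $a_{k-1}$ of $w_{S_{k-1}}$, so the re-added $s_k$ is never shifted afterwards and no spurious elements enter the last component. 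You must state and check this timing property; it is a local inspection of the lemma's proof, but without it the pre-rotation bookkeeping does not close.

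The genuine gap is in distinguishability, exactly where you stop. Your skeleton (shift a differing element to $f_m$, trigger $s_{m+1}$, cascade to $Q_k$) is also the paper's, and your observation that only $Q_k$-singletons are co-reachable (so Corollary~\ref{cor} is unavailable) is correct; but announcing that ``a case analysis in the spirit of Example~\ref{ex:dva_nase}'' will control the side effects, or choosing $q_0$ extremal, is not a proof, and the mechanism that actually works is absent from your proposal. The paper's mechanism is a dichotomy plus a normalization. After reading $w=a_i^{f_i-s}$ (which necessarily puts $s_{i+1}$ into the $(i+1)$st component of the state containing $s$), either the $(i+1)$st components differ, and one recurses on a strictly larger differing index; or they are equal, in which case $s_{i+1}$ lies in both, so by validity \emph{every} component of both states is non-empty. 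That non-emptiness is what licenses the normalization: $b$ maps each state of $Q_j$ ($2\le j\le k-1$) to $s_j$ or $f_j$, and $a_jb$ maps both $s_j$ and $f_j$ to $s_j$ (this is where $n_j\ge3$ enters), so the string $b(a_2b)(a_3b)\cdots(a_{i-1}b)$, followed by $a_1b$ if needed, collapses both states to the explicit forms $(\{q\},\{s_2\},\ldots,\{s_{i-1}\},S''\cup\{f_i\},\ldots)$ and $(\{s_1\},\{s_2\},\ldots,\{s_{i-1}\},\{s_i\},\ldots)$ while preserving the $f_i$-difference, because $b$ fixes both $s_i$ and $f_i$. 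Only on these normalized states is the difference injected cleanly into component $i+1$: by $a_{i+1}b$ and then $a_{i+1}$ when $i\le k-2$, and, when $i=k-1$, by $b^{n_k}$ followed by the two-string argument that closes Example~\ref{ex:dva_nase}. This dichotomy-plus-normalization is the substance behind the paper's remark that distinguishability is ``not for free''; without it, or an equivalent, your cascade step is an unproven claim and the proof is incomplete.
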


\begin{proof}
 Consider a valid state $q=(j,S_2,\ldots,S_{k-1},S_k)$.
 First, let $S_k=\emp$.
 Since the transitions on~$a_1, a_2, \ldots, a_{k-1}$
 in $A_1,A_2,\ldots,A_{k-1}$
 are the same as in automata in Theorem~\ref{thm:k+1},
  the valid state~$(j,S_2,\ldots,S_{k-1},\emp)$  
 is reachable exactly the same way as in the proof of 
 this theorem. 
 
 Now let $S_k\ne\emp$.
 Notice that the transitions on~$a_{k-1}$ and~$b$
 in~DFAs~$A_{k-1}$ and~$A_k$ are the same as those on~$a$ and~$b$
 in DFAs~$A$ and~$B$ in Example~\ref{ex:dva_nase}.
  As shown in this example, 
 for each $S\subseteq Q_k$, 
 there is a string  $w_{S}$ over $\{a_{k-1},b\}$
 which sends~$(s_{k-1},\emp)$
 to~$(s_{k-1},S)$
 in the subset automaton for $L(A_{k-1})L(A_k)$.
 Since we have 
 a loop on both $a_{k-1}$ and $b$ in 
 all states~$s_1,s_2,\ldots,s_{k-2}$,
 we reach~$(s_1,\{s_2\}, \{s_3\},\ldots,\{s_{k-2}\},\{s_{k-1}\},S)$ from the initial state by~$a_1^{n_1}a_2^{n_2}\cdots a_{k-2}^{n_{k-2}}w_{S}$.
Next, let~$w_{S_{i}}$
be the string over~$\{a_{i-2},a_{i-1}\}$
given by Lemma~\ref{le:Ai-1Ai}
which sends~$(s_{i-1},\{s_{i}\})$
to~$(s_{i-1},S_{i})$. 
Recall that~$f_{i}\notin S_{i}$ implies
 that the state $f_{i}$ is not visited while reading $w_{S_i}$.
 Moreover, a closer look at the proof of the lemma shows that if~$f_{i}\in S_{i}$
  then~$f_{i}$ 
 is visited for the first time immediately
 after reading the last~$a_i$ in~$w_{S_i}$.
 Now, let~$m$ be the number of occurrences
 of the symbol~$a_{k-1}$ in the string~$w_{S_{k-1}}$.
 Then the state~$(s_1,\{s_2\}, \{s_3\},\ldots,\{s_{k-2}\},\{s_{k-1}\},a^m_{k-1}(S_k))$
 is reachable as shown above,
 and it is sent to~$(s_1,\{s_2\}, \{s_3\},\ldots,\{s_{k-2}\},S_{k-1},S_k)$ by~$w_{S_{k-1}}$.
 The resulting state is sent to~$q$
 by the string~$w_{S_{k-2}}w_{S_{k-3}}\cdots w_{S_3}w_{S_2}a_1^{j-1}$.
 
 To get distinguishability, let~$p=(S_1,S_2, S_3,\ldots,S_k)$
 and~$q=(T_1,T_2,T_3,\ldots,T_k)$
 be two distinct valid states.
 If~$S_k\ne T_k$,
 then~$p$ and~$q$ are distinguishable
 by Lemma~\ref{le:dist} since each singleton subset
 of~$Q_k$ is co-reachable in~$N$ 
 via a string in~$a_{k-1}^*$.

 Let~$S_i\ne T_i$ for some~$i$ with~$1\le i \le k-1$,
 and~$S_j=T_j$ for~$j=i+1,i+2,\ldots,k$.
 Let us show that there is a string
 that sends~$p$ and~$q$ to two states
 which differ in~$s_{i+1}$.
 
 Without loss of generality, we have~$s\in S_i\setminus T_i$.
 First, we read the string~$w=a_i^{f_i-s}$ 
 which sends~$s$ to~$f_i$ in~$A_i$ and fixes all states in all~$A_j$ with~$j\ne i$
 to get states
 \begin{align*}
  &(S'_1,S_2',S_3',\ldots,S'_{i-1},S'\cup (S_i\cdot w),S'_{i+1},\ldots,S'_k) \\  
 &(T'_1,T_2',T_3',\ldots,T'_{i-1},\,T'\cup (T_i\cdot w),T'_{i+1},\ldots,T'_k)
 \end{align*}
 where~$S',T'\subseteq[1,f_i-s]$ and~$f_i\in (S_i\cdot w)\setminus (T_i\cdot w)$,
 that is, the~$i$th components of the resulting states
 differ in the state~$f_i$. 
 If~$S'_{i+1}\ne T'_{i+1}$,
 then we have the desired result.
 Otherwise, since~$s_{i+1}\in S'_{i+1}$, 
 both~$S'_{i+1}$ and~$T'_{i+1}$ are non-empty,
 which means that all~$S'_1,S'_2,\ldots,S'_{i}$
 and all~$T'_1,T'_2,\ldots,T'_i$ are non-empty.
 Now, 
 the string~$b$ sends all states 
 of~$Q_j$ with~$2\le j\le k-1$,
 either to~$s_j$ or to~$f_j$,
 and then~$a_jb$ sends~$f_j$ to~$s_j$ and~$s_j$ to itself 
 since~$n_j\ge3$. 
 Thus after reading 
 the string~$b (a_2b) (a_3b) \cdots (a_{i-1}b)$ and 
 if~$T'_1=\{f_1\}$, then also~$(a_1b)$, 
 we get states
 \begin{align*}
    &(~\{q\},\{s_2\},\{s_3\},\ldots,\{s_{i-1}\},S''\cup\{f_i\},S''_{i+1},\ldots,S''_k) \\
    &(\{s_1\},\{s_2\},\{s_3\},\ldots,\{s_{i-1}\}, ~~~\{s_i\}~~~~,
    T''_{i+1},\ldots,T''_k)  
 \end{align*}
 where~$q\in\{s_1,f_1\}$, ~$S''\subseteq\{s_i\}$,
 and~$S''_j,T''_j\subseteq\{s_j,f_j\}$ for~$j=i+1,i+2,\ldots,k-1$.
 There are two cases to consider.

  \medskip\noindent{\it Case~1:} $1\le i \le k-2$.
  Then~$2 \le i+1\le k-1$ and~$n_{i+1}\ge3$ which means that
  the string~$a_{i+1}b$ sends both~$f_{i+1}$ and~$s_{i+1}$
  to~$s_{i+1}$. 
  Thus after reading~$a_{i+1}b$, we get states
  \begin{align*}
    &(~\{q\},\{s_2\},\{s_3\},\ldots,\{s_{i-1}\},S''\cup\{f_i\},
    \{s_{i+1}\},S'''_{i+2},\ldots,S'''_k) \\
    &(\{s_1\},\{s_2\},\{s_3\},\ldots,\{s_{i-1}\}, ~~~\{s_i\}~~~~,
    \{s_{i+1}\},T'''_{i+2},\ldots,T'''_k). 
 \end{align*}
  Finally, the string~$a_{i+1}$, 
 which performs the identity on~$Q_j$ with~$j\ne i+1$
 and the circular shift on~$Q_{i+1}$,
 sends the resulting states to states
 which differ in~$s_{i+1}$.
 
 \medskip\noindent{\it Case~2:} 
 $i=k-1$. Then the string~$b^{n_k}$ sends all states of~$Q_k$ to~$s_k$,
 while it fixes~$s_j$ and~$f_{j}$  
 for~$j=1,2,\ldots,k-1$.
 Thus after reading~$b^{n_k}$ we get states
      $ (\{q\},\{s_2\}, \ldots,\{s_{k-2}\}, S''\cup \{f_{k-1}\},\{s_k\})$ and  
     $ (\{s_1\},\{s_2\}, \ldots,\{s_{k-2}\}, \{s_{k-1}\},\{s_k\})$.
 Now, in the same way as in Example~\ref{ex:dva_nase}
 we show that
 either~$a_{k-1}^{n_k}$
 or~$a_{k-1}^{n_k-1}ba_{k-1}^{n_k-1}$
 sends the resulting states
 to two states  
 which differ in~$s_k$.
\end{proof} 

Since the number of valid states provides an upper bound on the state complexity of multiple concatenation, we get our main result.

\begin{corollary}
    The DFAs~$A_1,A_2,\ldots,A_k$ shown in Figure~\ref{fig:nase_k}
    defined over a~$k$-letter alphabet
    are witnesses for multiple concatenation of~$k$
    languages.
    \qed
\end{corollary}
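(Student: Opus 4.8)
The plan is to combine Theorem~\ref{thm:k} with the upper bound of Proposition~\ref{prop:valid} (equivalently, the explicit count of Theorem~\ref{thm:upper}). The corollary is essentially a packaging statement: all the substantive work lives in Theorem~\ref{thm:k}, and what remains is to check that the lower bound it yields matches the general upper bound for this particular witness family, so that the two meet.

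First I would recall that, by the construction of the NFA~$N$ and the observation preceding Proposition~\ref{prop:valid}, every reachable state of the subset automaton~$\cD(N)$ is valid. Combined with Theorem~\ref{thm:k}, which asserts that all valid states are reachable and pairwise distinguishable in~$\cD(N)$, this shows that the reachable states of~$\cD(N)$ are exactly the valid ones and that they are pairwise distinguishable; hence they constitute a minimal DFA. Consequently $\dsc(L(A_1)L(A_2)\cdots L(A_k))$ equals the number~$\#\tau_k$ of valid states, giving the lower bound.

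Next I would observe that each witness DFA~$A_i$ in Figure~\ref{fig:nase_k} has a single final state~$f_i=n_i$ with~$s_i=1\ne f_i$ (using~$n_1,n_k\ge2$ and~$n_i\ge3$ for~$2\le i\le k-1$), so it meets the hypotheses of Theorem~\ref{thm:upper}. Therefore the number~$\#\tau_k$ of valid states for this family is precisely the value computed there, which by Proposition~\ref{prop:valid} and the remark following it (the single-final-state, $s_i\ne f_i$ configuration maximizes the number of valid states) is the established upper bound on the state complexity of the $k$-ary concatenation for the given state counts. The matching lower bound from the previous paragraph then shows that these specific DFAs attain the upper bound, i.e.\ they are witnesses.

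The main (and essentially only) obstacle has already been surmounted inside Theorem~\ref{thm:k}, whose distinguishability argument is the delicate part; for the corollary itself the single point worth double-checking is that the passage from the per-instance bound of Proposition~\ref{prop:valid} to the state complexity of the \emph{operation} is legitimate. Since the witness DFAs realize the maximizing configuration, $\#\tau_k$ is simultaneously an upper bound for all admissible inputs and the value attained by our family, so no gap remains and the statement follows.
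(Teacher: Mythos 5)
Your proposal is correct and matches the paper's own argument: the paper proves this corollary in a single sentence by combining Theorem~\ref{thm:k} (all valid states reachable and pairwise distinguishable, giving the lower bound) with the fact that the number of valid states is an upper bound (Proposition~\ref{prop:valid}, with the maximizing single-final-state, $s_i\ne f_i$ configuration noted in the remark following it). Your write-up simply makes explicit the same matching of lower and upper bounds, including the point that the witness family realizes the configuration that maximizes the valid-state count.
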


We conjecture that~$k$ symbols are necessary for describing witnesses
for concatenation of~$k$ languages. 
The next observation shows that our conjecture
holds for~$k=3$.

\begin{theorem}
	The ternary alphabet used 
	to describe witnesses for the concatenation of three languages
	in Theorem~\ref{thm:k} is optimal.
\end{theorem}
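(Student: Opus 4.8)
The plan is to argue by contradiction. Suppose that for the parameters of Theorem~\ref{thm:k} with $k=3$ (so $n_1,n_3\ge 2$ and $n_2\ge 3$) there are binary DFAs $A_1,A_2,A_3$ over an alphabet $\{a,b\}$ whose concatenation meets the upper bound, that is, all valid states are reachable and pairwise distinguishable in $\cD(N)$. I would first extract two structural facts about how $a$ and $b$ act on the individual state sets $Q_2$ and $Q_3$, and then show that these facts cannot coexist.

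The clean anchor is a distinguishability obstruction on the last component. If both $a$ and $b$ acted as permutations of $Q_3$, then $Q_3\cdot a=Q_3\cdot b=Q_3$, and since reading a letter can only \emph{add} $s_3$ to the third component, every valid state of the form $(j,S_2,Q_3)$ remains accepting under every input and is therefore equivalent to the final sink. As $n_1\ge 2$, there are at least two distinct such valid states, for instance $(s_1,\{s_2\},Q_3)$ and $(f_1,\{s_2\},Q_3)$; both are valid, hence both must be reachable, yet being mutually equivalent they cannot be pairwise distinguished. This forces the number of pairwise distinguishable reachable states below the bound, a contradiction. Hence \emph{not both letters permute $Q_3$}; say $b$ is non-injective on $Q_3$.

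The second tool is an image-deficiency principle. If a letter $\sigma$ and an index $i$ satisfy $m\notin (Q_i\cdot\sigma)\cup\{s_i\}$ for some state $m$, then after reading $\sigma$ the $i$th component can never contain $m$, because the only state a letter can inject into the $i$th component is $s_i$; consequently every reachable valid state with $m\in S_i$ must have been reached by a last letter different from $\sigma$. I would combine two such deficiencies on \emph{different} letters: if $a$ is deficient on $Q_2$ at a point $m_2\ne s_2$ and $b$ is deficient on $Q_3$ at a point $m_3\ne s_3$, then a valid state having $m_2\in S_2$ and $m_3\in S_3$ (which exists, taking $s_3\in S_3$ in case $m_2=f_2$) would have to be reached with last letter both $b$ (forced by $m_2$) and $a$ (forced by $m_3$); this is impossible, so the state is unreachable, the desired contradiction.

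It remains to establish that the two deficiencies are genuinely present, and this is where the work lies. The main obstacle is the analysis of the four transformations $a|_{Q_2},b|_{Q_2},a|_{Q_3},b|_{Q_3}$, and in particular the degenerate configurations that evade the deficiency argument: (i) a letter that is non-injective on $Q_i$ yet misses \emph{only} the start state $s_i$, so that its deficiency point is $s_i$ and the forbidding conclusion is void; and (ii) the case in which a single letter must serve as the cyclic rotator that spreads injected copies of $s_2$ and of $s_3$ on \emph{both} $Q_2$ and $Q_3$ at once, so that no letter is usably deficient on $Q_2$. I would dispose of these by an auxiliary argument in the spirit of the anchor: tracking which singletons $\{q\}$ are co-reachable in $N^R$ (cf.\ Corollary~\ref{cor}) and showing that in each degenerate configuration either a pair of valid states differing in a single state of $Q_2$ or $Q_3$ collapses, or some valid second component cannot be assembled without disturbing the already-fixed third component. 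Pinning down these last cases, rather than the generic one, is the crux of the proof.
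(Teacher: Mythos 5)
Your skeleton is sound and, in its correctly executed parts, close in spirit to the paper's own proof: your ``anchor'' is precisely the paper's observation that if both letters permute $Q_3$ (resp.\ $Q_2$) then two distinct valid states with third component $Q_3$ (resp.\ second component $Q_2$) become equivalent, and your image-deficiency principle (a letter $\sigma$ with $m\notin (Q_i\cdot\sigma)\cup\{s_i\}$ can never be the last letter read before a state with $m\in S_i$) is a clean packaging of the reachability obstructions the paper also uses, with a correct contradiction in the ``generic'' case where the two deficiencies lie on different letters at points other than $s_2,s_3$. The problem is that the theorem is not proved by the generic case. You explicitly defer exactly the two configurations that carry all the difficulty: (i) a non-injective letter whose image misses \emph{only} the start state $s_i$, so your forbidding conclusion is vacuous; and (ii) the configuration in which one letter permutes both $Q_2$ and $Q_3$ while the other is non-injective on both, so both deficiencies sit on the same letter and cannot be played against each other. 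Saying you ``would dispose of these by an auxiliary argument in the spirit of the anchor'' is a statement of intent, not an argument; nothing in your setup shows these configurations are contradictory, and singleton co-reachability bookkeeping (Corollary~\ref{cor}) by itself cannot do it, since in these configurations the collapse and unreachability phenomena involve specific multi-component states, not singletons.

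For contrast, the paper closes exactly these cases with an additional forcing step that your proposal lacks: the valid state $(s_1,Q_2\setminus\{f_2\},\emptyset)$ must be reached without ever visiting $f_2$ (otherwise $s_3$ would be injected into the third component), so some letter must map $Q_2\setminus\{f_2\}$ onto $Q_2\setminus\{f_2\}$; calling that letter $a$, one then splits on whether $b$ permutes $Q_2$ and, in each branch, uses reachability of carefully chosen valid states --- $(s_1,Q_2\setminus\{f_2\},Q_3)$ and $(s_1,\{f_2 b\},Q_3)$ in one branch; $(j,Q_2,\{s_3\})$, $(s_1,\{f_2\},Q_3)$, $(f_1b,Q_2,\{s_3\})$ and $(f_1b,\{r\},Q_3)$ with $r\notin\{s_2b,f_2\}$ in the other (this branch even needs the auxiliary conclusion that $b$ permutes $Q_1$) --- to force both letters to permute $Q_3$, contradicting the anchor. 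Your configuration (ii) is essentially the paper's Case 1 and your generic case sits inside its Case 2, so until you supply arguments at this level of detail for (i) and (ii), the proposal remains an incomplete plan rather than a proof.
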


\begin{proof}
To get a contradiction, let $\Sigma=\{a,b\}$ and~$n_i\ge2$ for $i=1,2,3$. 
Let us consider binary 
DFAs~$A_i=(Q_i,\Sigma,\cdot,s_i,\{f_i\})$ 
where~$Q_i=\{1, 2,\ldots,n_i\}$,
~$s_i=1$,~$f_i\ne1$ for~$i=1,2,3$;
notice that to meet the upper bound 
for multiple concatenation,
each $A_1, A_2,\ldots,A_{k-1}$ must have one final state, and it must be different from the initial state.
	
	Construct the NFA~$N$ for~$L(A_1)L(A_2)L(A_3)$ 
    from DFAs~$A_1,A_2,A_3$
    as follows: for~$i=1,2$,
    each state~$q\in Q_i$ and each symbol~$\sigma\in\{a,b\}$ such that~$q\sigma=f_i$,
    add the transition~$(q,\sigma,s_{i+1})$;
    the initial state of~$N$ is~$s_1$ 
    and its unique final state is~$f_3$.
	Our aim is to show that either some valid state is unreachable in
	the subset automaton~$\cD(N)$ or some valid states are equivalent to each other.
	
Notice that to reach the valid state~$(s_1,Q_2,\{s_3\})$, 
we must have an input symbol that  performs a permutation on~$Q_2$,
and to reach the valid sta\-te~$(s_1,\{s_2\},Q_3)$,
we must have an input symbol that performs a permutation on~$Q_3$.
	
If both symbols perform  a  permutation  on~$Q_3$,
then the  sta\-tes~$(s_1,\{s_2\},Q_3)$ and~$(s_1,\{2\},Q_3)$ are equivalent
since all strings are accepted from both of  them.  
 
If both  symbols  perform a  permutation  on~$Q_2$,
then the    sta\-tes~$(s_1, Q_2,\{s_3\})$ and~$(2,Q_2,\{s_3\})$ are equivalent 
since if a string~$w$ is accepted by~$N$ from the state $s_1$ in $A_1$  
	through a computation 
	~$s_1 \xrightarrow{w'} s_2\xrightarrow{w''}f_3$ 
	with~$w=w'w''$,
	then it is   accepted through a computation~$w's_2\xrightarrow{w'}s_2 \xrightarrow{w''}f_3$ where~$w's_2\in Q_2$, so it is accepted from $(2,Q_2,\{s_3\})$;
	and vice versa. 

	Hence to meet the upper bound,
	we must have one permutation and one non-permutation input symbol
	 in both~$A_2,A_3$.
	
	Next, while reaching the valid  state~$(s_1,Q_2\setminus\{f_2\},\emp)$,
	we cannot visit    state~$f_2$. This means that
	there must be an input that maps~$Q_2\setminus\{f_2\}$ onto~$Q_2\setminus\{f_2\}$.
	Without loss of generality, let this input be~$a$.
	Since~$f_2$ must be reachable in~$A_2$,
	there must exist  a state ~$p$  in ~$Q_2\setminus\{f_2\}$ with~$p  b=f_2$.
	Moreover,~$f_2b\ne f_2$ because otherwise either~$f_2$ would have loops on both symbols,
	or both~$a$ and~$b$ would be non-permutation symbols in~$A_2$.
	We have two cases:
    
	(1) Let~$b$ be a non-permutation symbol in~$A_2$.
	Then~$a$ is a permutation on~$Q_2$, so~$f_2a=f_2$.
	This situation is depicted in   Fig.~\ref{fig:case1}.
	Moreover, there is a state in~$Q_2\setminus\{f_2\}$
	with no in-transition on~$b$.
	Therefore the valid state~$(s_1,Q_2\setminus\{f_2\},Q_3)$
	must be reached from some valid state on~$a$, and consequently~$a$ is a permutation on~$Q_3$.
	Next, since $f_2b\ne f_2$, the valid state~$(s_1,\{f_2b\}, Q_3)$  must be reached 
	from a valid state~$(j,\{f_2\}\cup S,Q_3)$ on~$b$ since
	to get $Q_3$ in the third component, we must visit $f_2$, and only reading~$b$
	eliminates the state~$f_2$. It follows that~$b$ is a permutation on~$Q_3$.
	Hence both~$a$ and~$b$ perform permutations on~$Q_3$, 
	thus resulting in  a contradiction.
	
	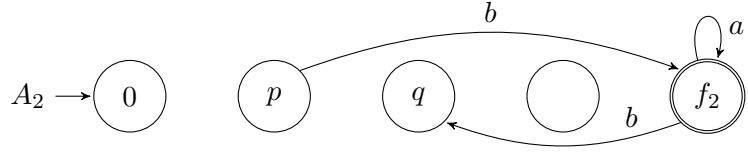
\begin{figure}[h!]
		\centering 
		\tikzset{every state/.style={minimum size=27}}
\begin{tikzpicture}[>=stealth', initial text={~},shorten >=1pt,auto,node distance=1.9cm]

\node[state,initial,initial text={$A_2$}  ]  (0)  [label=center:{$0$}]{};
\node[state  ] (1)  [right of=0,label=center:{$p$}]{};
\node[state   ] (3)  [right of=1,label=center:{$q$}]{};
\node[state  ] (4)  [right of=3,label=center:{$  $}]{}; 
\node[state,accepting  ] (5)  [right of=4,label=center:{$f_2$}]{}; 

\draw[->] [bend left=20] (1.45) to node[above]{$b$}(5.135);
\draw[->] [bend left=20] (5.-135) to node[above,pos=0.2]{$b$}(3.-45);

\draw[->](5)to[loop above]node [right, near end] {$a$}(5);

\end{tikzpicture}

%
%
%
%
%
%
%
%
			\caption{Case 1:~$a$ maps~$Q_2\setminus\{f_2\}$ onto~$Q_2\setminus\{f_2\}$ and
			~$b$ is not a permutation on~$Q_2$.}
	\label{fig:case1}
\end{figure} 

(2) Let~$b$ be a permutation symbol in~$A_2$.
Then~$a$ is not a permutation on~$Q_2$, so~$f_2a\ne f_2$,
and therefore~$f_2\notin Q_2a$,
so each state containing~$f_2$ in its second component must be reached by~$b$.
This situation is illustrated in   Fig.~\ref{fig:case2}.
It~follows that every valid state~$(j,Q_2,\{s_3\})$
must be reached   on~$b$, so~$b$ is a permutation on~$Q_1$,

Next, the valid state~$(s_1,\{f_2\},Q_3)$ must be reached on~$b$ as well.
Therefore each state in~$Q_3\setminus  \{s_3\}$ has an in-transition on~$b$.
Moreover, the state~$(f_1b, Q_2,\{s_3\})$ must be reached by~$b$ from
a valid state~$(f_1,Q_2,\{s_3\}\cup T)$;
recall that~$b$ is a permutation on~$Q_1$.
This means that~$s_3b=s_3$. Hence~$b$ is a permutation on~$Q_3$.
Let~$r\in Q_2\setminus\{s_2b,f_2\}$.
Then the valid state~$(f_1b,\{r\},Q_3)$ cannot be reached on~$b$
because otherwise it would be reached from~$(f_1,\{s_2\}\cup S,T)$
and would contain~$s_2b$ in its second component.
It follows that~$a$ is a permutation on~$Q_3$.
Thus both~$a$ and~$b$ perform a permutation in~$A_2$,
which is a contradiction.
\end{proof}

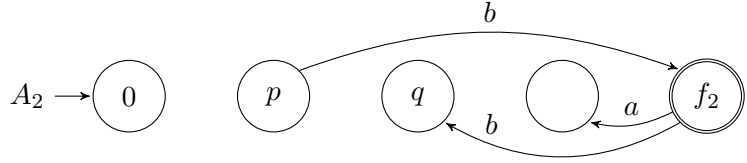
\begin{figure}[h!]
	\centering 
	\tikzset{every state/.style={minimum size=27}}
\begin{tikzpicture}[>=stealth', initial text={~},shorten >=1pt,auto,node distance=1.9cm]

\node[state,initial,initial text={$A_2$}  ]  (0)  [label=center:{$0$}]{};
\node[state  ] (1)  [right of=0,label=center:{$p$}]{};
\node[state   ] (3)  [right of=1,label=center:{$q$}]{};
\node[state  ] (4)  [right of=3,label=center:{$  $}]{}; 
\node[state,accepting  ] (5)  [right of=4,label=center:{$f_2$}]{}; 

\draw[->] [bend left=20] (1.45) to node[above]{$b$}(5.135);
\draw[->] [bend left=30] (5.-135) to node[above,pos=0.8]{$b$}(3.-45);
\draw[->] [bend left=20] (5.-155) to node[above]{$a$}(4.-45);

\end{tikzpicture}

%
%
%
%
%
%
%
%
	\caption{Case 2:~$a$ maps~$Q_2\setminus\{f_2\}$ onto $Q_2\setminus\{f_2\}$ and 
		$b$ is a permutation on $Q_2$.}
\label{fig:case2}
\end{figure}

Notice that all our~$k$-letter witness DFAs
from Theorem~\ref{thm:k}, except for the first 
and last one, are assumed to have
at least three states.
However, our witnesses over a~$(k+1)$-letter
alphabet from Theorem~\ref{thm:aj_dvojky}
cover also the cases when some of given DFAs
have two states. Although, we are not able
to cover such cases by using just~$k$ letters,
we can do it
providing that all automata have two states.
 
Let~$\Sigma=\{b,c,a_2,a_3,\ldots a_{k-1}\}$ be a~$k$-letter alphabet.
   For~$i=1,2,\ldots,k$,
   let~$A_i=(Q_i,\Sigma,s_i,\cdot,f_i)$
   be a two-state DFA with~$Q_i=\{1,2\}$,
   ~$s_i=1$, ~$f_i=2$,
   and the transitions defined as follows
   (see Figure~\ref{fig:2stateDFAs} for an illustration):  
    \begin{itemize}
        \item  
     $a_i$ with~$i=2,3,\ldots,k-2$ performs the cycle on~$Q_i$
    and the identity on~$Q_j$ with~$j\ne i$;
    \item  $a_{k-1}$
    performs the cycle on~$Q_{k-1}$ and~$Q_k$,
    and the identity on~$Q_1,Q_2,\ldots,Q_{k-2}$;
    \item   $b$ performs the cycle on~$Q_1$, the identity on~$Q_i$ 
    if~$i$ is even, and the contraction~$(f_i\to s_i)$
    on~$Q_i$ if~$i\ge3$ is odd;
    \item  $c$ performs the identity on~$Q_i$
    if~$i$ is odd, and the contraction~$(f_i\to s_i)$
    otherwise. 
    \end{itemize}   

    \begin{figure}[h!]
    \centering
    \vskip-10pt
    \tikzset{every state/.style={minimum size=27}}
\begin{tikzpicture}[>=stealth', initial text={~},shorten >=1pt,auto,node distance=2.6cm]

\node[state,initial,initial text={$A_1$}  ]  (s1)  [label=center:{$s_1$}]{};
\node[state,accepting] (f1)  [right of=s1,label=center:{$f_1$}]{};
 
\node[state,initial,initial text={$A_2$}   ] (s2)  at (6,0) [label=center:{$s_2$}]{};
\node[state, accepting  ] (f2)  [right of=s2,label=center:{$f_2$}]{}; 

\node[state,initial,initial text={$A_3$}   ] (s3)  at (0,-2.5) [label=center:{$s_3$}]{};
\node[state, accepting  ] (f3)  [right of=s3,label=center:{$f_3$}]{}; 

\node[state,initial,initial text={$A_4$}   ] (s4)  at (6,-2.5) [label=center:{$s_4$}]{};
\node[state, accepting  ] (f4)  [right of=s4,label=center:{$f_4$}]{}; 

\node[state,initial,initial text={$A_5$}   ] (s5)  at (0,-5) [label=center:{$s_5$}]{};
\node[state, accepting  ] (f5)  [right of=s5,label=center:{$f_5$}]{};

\node[state,initial,initial text={$A_6$}   ] (s6)  at (6,-5) [label=center:{$s_6$}]{};
\node[state, accepting  ] (f6)  [right of=s6,label=center:{$f_6$}]{};


\draw[->,red] [bend left] (s1) to node{$b$}(f1);
\draw[->,red] [bend left] (f1) to node{$b$}(s1);

\draw[->] [bend left] (s2) to node{$a_2$}(f2);
\draw[->] [bend left] (f2) to node{$a_2$}(s2);
\draw[->,blue]   (f2) to node[above]{$c$}(s2);

\draw[->] [bend left] (s3) to node{$a_3$}(f3);
\draw[->] [bend left] (f3) to node{$a_3$}(s3);
\draw[->,red]   (f3) to node{$b$}(s3);

\draw[->] [bend left] (s4) to node{$a_4$}(f4);
\draw[->] [bend left] (f4) to node{$a_4$}(s4);
\draw[->,blue]   (f4) to node{$c$}(s4);

\draw[->] [bend left] (s5) to node{$a_5$}(f5);
\draw[->] [bend left] (f5) to node{$a_5$}(s5);
\draw[->,red]   (f5) to node{$b$}(s5);

\draw[->] [bend left] (s6) to node{$a_5$}(f6);
\draw[->] [bend left] (f6) to node{$a_5$}(s6);
\draw[->,blue]  (f6) to node{$c$}(s6);


\draw[->,blue](s1)to[loop above]node {$c$}(s1);
\draw[->,blue](f1)to[loop above]node {$c$}(f1);


\draw[->,red](s2)to[loop above]node {$b$}(s2);
\draw[->,red](f2)to[loop above]node {$b$}(f2);
\draw[->,red](s4)to[loop above]node {$b$}(s4);
\draw[->,red](f4)to[loop above]node {$b$}(f4);
\draw[->,red](s6)to[loop above]node {$b$}(s6);
\draw[->,red](f6)to[loop above]node {$b$}(f6);

\draw[->,blue](s3)to[loop above]node {$c$}(s3);
\draw[->,blue](f3)to[loop above]node {$c$}(f3);

\draw[->,blue](s5)to[loop above]node {$c$}(s5);
\draw[->,blue](f5)to[loop above]node {$c$}(f5);
\end{tikzpicture}
    \caption{Two-state DFAs; $k=6$.
    In each DFA,  the remaining symbols
    perform identities.}
    \label{fig:2stateDFAs}
\end{figure}
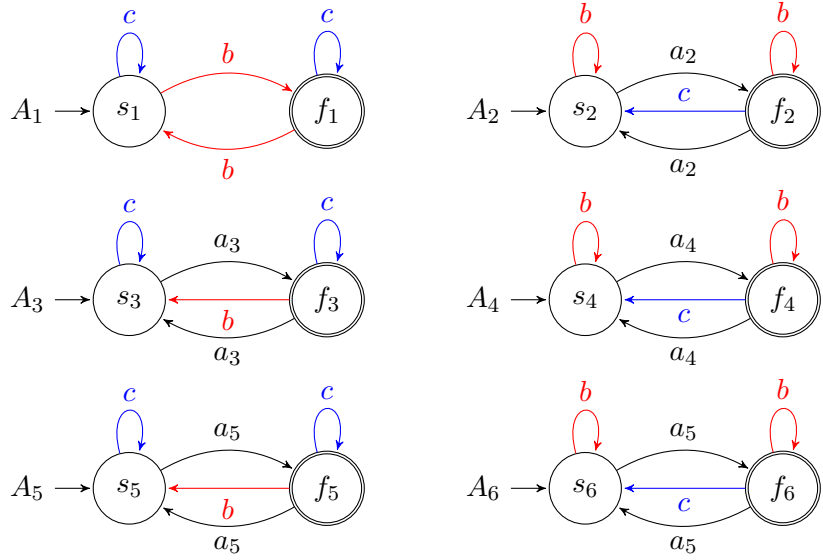

Construct an NFA~$N$ for~$L(A_1)L(A_2)\cdots L(A_k)$
from the DFAs~$A_1,A_2,\ldots,A_k$ as follows:
for each~$i=1,2,\ldots,k-1$,
each~$q\in Q_i$ and~$\sigma\in\Sigma$
such that~$q\cdot \sigma=f_i$ in~$A_i$,
add the transition~$(q,\sigma,s_{i+1})$;
the initial state of~$N$ is~$s_1$
and its final state is~$f_k$.

We prove reachability and distinguishability
of states of the subset automaton~$\cD(N)$
in a similar way as before,
but we have take into account that
to reach a state~$p=(f_1,T_2,T_3,\ldots,T_k)$
from a state~$q=(s_1,S_2,S_3,\ldots,S_k)$,
the symbol~$b$ has to be read.
However, although~$b$ sends~$s_1$ to~$f_1$,
it also sends each non-empty subset~$S_i$
with~$i\ge3$ and~$i$ odd to~$\{s_i\}$.
Then, we have to carefully return~$\{s_i\}$     back to~$S_i$.
This is done in the following claim.
 
\begin{claim}
\label{claim}
   For each valid state~$(s_1,S_2,S_3,\ldots,S_k)$
   with~$s_2\in S_2$
    there is a string~$w_b(S_2,S_3,\ldots,S_k)$
     in~$\{bb,a_2,a_3,\ldots,a_{k-1}\}^*$ 
        such that
    \begin{align}
    \label{c1}
        & (s_1,S_2,S_3b,S_4b,\ldots,S_kb) 
           \xrightarrow{w_b(S_2,S_3,\ldots,S_k)}(s_1,S_2,S_3,\ldots,S_k).
   \end{align}  
   Next, for  valid 
   states~$(f_1,\{s_2\},\{s_3\},S_4, S_5,\ldots,S_k)$
   and~$(f_1,\{s_2\},S_3,S_4,S_5,\ldots,S_k)$,
   there exist strings~$v_b(S_4, S_5,\ldots,S_k)$
   and~$v_c(S_3,S_4,\ldots,S_k)$,
   respectively,
   such that
   \begin{align}
          \label{c2}
     &       (f_1,\{s_2\},\{s_3\},S_4b,S_5b,\ldots,S_kb) 
       \xrightarrow{v_b(S_4, S_5,\ldots,S_k)}(f_1,\{s_2\},\{s_3\},S_4, \ldots,S_k) \\
       \label{c3}
&       (f_1,\{s_2\},S_3,S_4c,S_5c,\ldots,S_kc) 
       \xrightarrow{v_c(S_4,S_5,\ldots,S_k)}(f_1,\{s_2\},S_3,S_4,S_5,\ldots,S_k)
    \end{align}

\end{claim}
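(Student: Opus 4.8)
The plan is to prove the three displayed statements by a single common mechanism, reading off (\ref{c2}) and (\ref{c3}) from the analysis of (\ref{c1}). First I would record how the letters act on the subset automaton. Reading $b$ flips $Q_1$ (so $s_1\leftrightarrow f_1$), is the identity on every even $Q_i$, collapses every non-empty set in an odd $Q_i$ ($i\ge3$) onto $\{s_i\}$, and --- through the cross-transition $(s_1,b,s_2)$ --- always deposits $s_2$ into the second component; consequently $bb$ keeps $Q_1$ fixed, is the identity on all even components, and still collapses every odd component onto its start state. Reading $a_i$ ($2\le i\le k-2$) flips $Q_i$ and, via $(s_i,a_i,s_{i+1})$, deposits $s_{i+1}$ whenever $s_i$ is present, while $a_{k-1}$ does the same simultaneously on $Q_{k-1}$ and $Q_k$. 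Because $b$ is the identity on even components, the starting configuration $(s_1,S_2,S_3b,\ldots,S_kb)$ differs from its target only in the odd components $S_3,S_5,\ldots$, each of which must be rebuilt out of $\{s_i\}$ (the case $S_i=\emptyset$ needs no action, since validity then forces the whole tail to vanish). The hypothesis $s_2\in S_2$ is what makes the use of $bb$ admissible: each $b$ deposits $s_2$ into the second component, so a target compatible with reading $b$'s must already contain $s_2$ there.

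I would then rebuild the odd components one at a time, the only genuine case being a full component $S_i=\{s_i,f_i\}$, since $a_i$ is an involution on a two-state set and creating $f_i$ destroys $s_i$. For a middle component I would read $a_i$ to turn $\{s_i\}$ into $\{f_i\}$ and then $a_{i-1}$ twice: even if the predecessor currently holds no $s_{i-1}$, the first $a_{i-1}$ flips it so that $s_{i-1}$ appears, and the second $a_{i-1}$ fires $(s_{i-1},a_{i-1},s_i)$ --- re-inserting $s_i$ into component $i$ --- while flipping the predecessor back to its original value. Thus $a_i\,a_{i-1}^{2}$ rebuilds a full middle component, leaves everything to its left untouched, and deposits into component $i+1$ only the state $s_{i+1}$ that validity already requires there whenever $f_i\in S_i$. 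Processing the odd components in a suitable order and resetting with $bb$ any odd component that a maneuver has disturbed, I would reach the target; I would phrase the induction through the invariant that every treated component equals its target, every untreated odd component equals $\{s_i\}$ or $\emptyset$, and $Q_1$ is fixed.

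For (\ref{c2}) and (\ref{c3}) I would reuse this construction on the tail. In (\ref{c2}) the first three components are frozen at $(f_1,\{s_2\},\{s_3\})$, and since $bb$ fixes $f_1$ as well, the same maneuver applied to the components $4,\ldots,k$ gives $v_b$. In (\ref{c3}) the relevant collapse is the one produced by $c$, which contracts the even components and fixes the odd ones; by the parity symmetry built into the automata between the pair $(b,\text{odd})$ and the pair $(c,\text{even})$, the identical scheme with the two parities interchanged produces $v_c$.

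The step I expect to be the main obstacle is the full doubleton living in the coupled last pair $Q_{k-1},Q_k$ (whose behaviour on $\{a_{k-1},b\}$ mirrors that of $A,B$ in Example~\ref{ex:dva_nase}): because $a_{k-1}$ flips these two components together, creating $f_{k-1}$ also flips $Q_k$, and the even one among $Q_{k-1},Q_k$ cannot be de-collapsed by any letter available to $w_b$ ($bb$ is the identity on it and $c$ is forbidden). Landing both coupled components on their targets simultaneously therefore hinges on the parity of the number of $a_{k-1}$-reads forced by the two targets together with the collapses performed by the interleaved $bb$'s. I would isolate this as a separate sub-case, treating $Q_{k-1}\cup Q_k$ as a single four-state gadget and exhibiting directly the short strings over $\{a_{k-2},a_{k-1},bb\}$ that realize each admissible pair of target sets; getting this bookkeeping right is the heart of the argument.
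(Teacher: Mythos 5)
Up to the point you defer, your plan coincides with the paper's proof: the paper also processes the $b$-collapsed odd components from the top index down, restoring a component to $\{s_i\}$, $\{f_i\}$, or $\{s_i,f_i\}$ by $\eps$, $a_i$, or $a_i a_{i-1}^2$ respectively (your ``flip the predecessor twice'' maneuver), it uses $bb$ only inside the special string for the last component, and it dismisses (\ref{c2}) and (\ref{c3}) as ``similar'', exactly as you do via the $b$/$c$ parity symmetry. So the only substantive difference is the part you explicitly leave open, the pair $(Q_{k-1},Q_k)$ coupled by $a_{k-1}$ --- and that is a genuine gap, not bookkeeping that could be routinely filled in.

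Concretely, when $k$ is even the short strings you promise to exhibit do not exist: the parity obstruction you raise is real. Take $k=4$ and the valid target $(s_1,\{s_2\},\{f_3\},\{s_4\})$; it satisfies the hypotheses of (\ref{c1}) ($s_2\in S_2$, and $s_4\in S_4$ as required by $f_3\in S_3$), and the starting configuration is $(s_1,\{s_2\},\{s_3\},\{s_4\})$. The only way $f_3$ can ever appear in the third component is to read $a_3$ while $s_3$ is present, since $a_2$ and $b$ act on $Q_3$ as the identity and a contraction, and deposits only ever insert $s_3$; but that same $a_3$ fires the cross-transition $(s_3,a_3,s_4)$ and simultaneously flips $Q_4$, so afterwards the fourth component is the flip of its previous value together with the deposited $s_4$, which equals $\{s_4\}$ only if the fourth component was $\{f_4\}$ beforehand. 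In turn, $\{f_4\}$ can only be created by reading $a_3$ in a configuration whose third and fourth components are $\{f_3\}$ and $\{s_4\}$ --- the target pair itself --- because $a_2$ and $b$ never change the fourth component except possibly to add $s_4$. Hence the set of configurations whose fourth component is $\{f_4\}$, together with those having $f_3$ present and fourth component $\{s_4\}$, is closed under taking predecessors over $\{a_2,a_3,b\}$, and the start lies outside it; the target is unreachable by any string over $\{a_2,a_3,b\}$, a fortiori over $\{bb,a_2,a_3\}$. Your four-state-gadget analysis would therefore end in an impossibility for every even $k$ (any target with $f_{k-1}\in S_{k-1}$ and $S_k=\{s_k\}$); such states must instead be reached along a route where the last component is still empty when $a_{k-1}$ is first read (e.g.\ $b^2a_2^2a_3$ reaches the state above from the initial state). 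You flagged exactly the right spot --- indeed the paper's own write-up is defective there as well, since its special string $w_k$ treats the coupled pair only when $k$ is odd, and for even $k$ its generic $w_{k-1}$ silently corrupts the last component in precisely this way --- but flagging the obstacle is not the same as resolving it, and in the form you propose it cannot be resolved.
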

  
\begin{proof}
    We prove~(\ref{c1});
    the proofs of~(\ref{c2}) and~(\ref{c3}) are similar.
    
    For~$i\ge2$, we have~$S_ib=S_i$ if~$i$ is even, and~$S_ib=\{s_i\}$ if~$i$ is odd.
    Let~$j$ be the maximum odd number with~$S_j\ne\emp$.
    If~$j=k$, set
    \begin{align*}
        w_k=\begin{cases}
            \eps,             &\text{if~$S_k=\{s_k\}$}; \\
            a_{k-1}a_{k-1},   &\text{if~$S_k=\{s_k,f_k\}$ 
                                and~$s_{k-1}\in S_{k-1}$}, \\
           a_{k-1}bba_{k-1}, &\text{otherwise}.                    
           \end{cases}      
    \end{align*}
    Since~$s_2\in S_2$, \, $w_k$ sends~$(s_1,S_2,S_3b,\ldots,\{s_{k-2}\}, S_{k-1},\{s_{k}\})$   
     to~$(s_1,S_2,S_3b, \ldots,\{s_{k-2}\},S_{k-1}, S_k)$.
     
     Next, for each odd~$i$ with~$3\le i\le j$ and~$i\ne k$, if the string~$w_i$ is defined by
    \[w_i=
        \begin{cases}
         \eps,           &\text{if~$S_i=\{s_i\}$}; \\
         a_i,            &\text{if~$S_i=\{f_i\}$}; \\
         a_i(a_{i-1})^2, &\text{if~$S_i=\{s_i,f_i\}$},        
    \end{cases}
    \]
then we have
     \begin{align*}
        &(s_1,S_2,S_3b,\ldots,\{s_{i-2}\},S_{i-1},\{s_i\},S_{i+1},\ldots,S_k) \xrightarrow{w_i}  \\
     &(s_1,S_2,S_3b, \ldots,\{s_{i-2}\},S_{i-1},~~S_i~, S_{i+1},\ldots,S_k).
    \end{align*}      
    Now, it is enough to set~$w_b(S_2,S_3,\ldots,S_k)=w_jw_{j-2}\cdots w_5w_3$.  
\end{proof}

We are now ready to prove that the two-state
DFAs over a~$k$-letter alphabet described above are witnesses for multiple concatenation
in the case of~$n_1=n_2=\cdots=n_k=2$.

\begin{proposition}
   Let~$A_1,A_2,\ldots,A_k$ be the two state DFAs
   over~$\{b,c,a_2,a_3,\ldots,a_k\}$
   defined above and~$N$ be the NFA for~$L(A_1)L(A_2)\cdots L(A_k)$.
     Then all valid states are reachable and pairwise distinguishable 
  in the subset automaton~$\cD(N)$.
\end{proposition}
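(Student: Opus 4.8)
The plan is to follow the reachability-then-distinguishability template of Theorems~\ref{thm:k+1} and~\ref{thm:k}, using $a_2,\ldots,a_{k-1}$ as the component-building permutations and treating $b$ (together with $c$) as the only device able to move the first component and to feed $s_2$ into the second one. The whole difficulty is that, unlike in the three-or-more-state constructions, a $b$ can never be read for free: each $b$ flips $Q_1$ but simultaneously collapses every odd-indexed set $S_i$ with $i\ge3$ to $\{s_i\}$ (it is a contraction there and triggers no bridge, since $b$ never outputs $f_i$ on those components). Claim~\ref{claim} is exactly the repair tool for this collapse, and I would organise the reachability around the three families of valid states for which a $b$ is, respectively, harmless, collapsing-but-$s_1$-preserving, or collapsing-and-$Q_1$-flipping.

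For reachability I would start from the empty initial state, where the pair $bb$ installs $s_2$ in the second component and returns $Q_1$ to $s_1$ with no damage (all odd components are still empty). Iterating the squared permutations $a_2^2,\ldots,a_{k-2}^2$ and the binary $\{a_{k-1},b\}$-gadget of Example~\ref{ex:dva_nase} for the last pair $A_{k-1},A_k$ then installs the remaining start states, and Lemma~\ref{le:Ai-1Ai} (whose transitions are unchanged here) fills the components from the back with strings over $\{a_{i-1},a_i\}$ that fix $Q_1$ and use no $b$; this reaches every valid $s_1$-state whose second component is $\{s_2\}$, $\{f_2\}$, or empty. The genuinely new point is that $s_2$ can enter $Q_2$ only through the bridge $(s_1,b,s_2)$, so whenever a target forces a $b$ to be read after odd components have already been installed — either to place $s_2$ alongside $f_2$ in $S_2$, or to flip $Q_1$ to $f_1$ (and here validity guarantees $s_2\in S_2$) — that $b$ collapses the odd components, and I would repair them immediately with the strings $w_b$ of~(\ref{c1}) for $s_1$-targets and $v_b,v_c$ of~(\ref{c2}),(\ref{c3}) for $f_1$-targets, all of which rebuild the collapsed sets while leaving $Q_1$ and the even components untouched.

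For distinguishability I would not invoke Corollary~\ref{cor}: since $s_k$ has a preimage in $Q_k$ under every input letter, a reverse computation starting from $\{f_k\}$ can never empty its $Q_k$-part, so the only co-reachable singletons are $\{s_k\}$ and $\{f_k\}$, reached by $\{f_k\}$ and by $\{f_k\}\xrightarrow{a_{k-1}}\{s_k\}$ in $N^R$. I would therefore argue as in Theorem~\ref{thm:k}: two distinct valid states that differ in $S_k$ are separated by Lemma~\ref{le:dist} using a state of $\{s_k,f_k\}$ lying in exactly one of the two $k$-th components; and for states agreeing on $S_k$ I would take the largest index $i\le k-1$ with $S_i\ne T_i$, read a power of $a_i$ to drive a distinguishing state to $f_i$ (creating a difference in the $s_{i+1}$-coordinate), then normalise the intermediate components with the contraction letters $b,c$ and the blocks $a_jb$ as in the final case analysis of Theorem~\ref{thm:k}, pushing the discrepancy forward until it surfaces in $S_k$.

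The main obstacle, which all three repair strings of Claim~\ref{claim} are designed to overcome, is the unavoidable coupling caused by $b$: every flip of the first component and every insertion of $s_2$ destroys the odd-indexed components, so the order in which components are built, collapsed, and rebuilt must be orchestrated carefully, and the delicate bookkeeping is to verify that the inserted restoration blocks neither disturb the already-correct even components nor re-collapse what has just been repaired. I expect the distinguishability propagation to be the second most technical point, since the parity-dependent behaviour of $b$ and $c$ forces a small case split according to whether the offending index $i$ equals $k-1$ or is strictly smaller, exactly paralleling the two cases at the end of Theorem~\ref{thm:k}.
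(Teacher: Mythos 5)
Your overall architecture is the paper's (install start states, fill components from the back, repair the $b$/$c$ damage with Claim~\ref{claim}, propagate a difference forward for distinguishability), and your observation that only $\{s_k\}$ and $\{f_k\}$ are co-reachable singletons, so that Corollary~\ref{cor} is unavailable, is correct. But the reachability plan has a genuine gap: in this construction the letter $a_{k-1}$ is \emph{shared} by $A_{k-1}$ and $A_k$ --- it performs the cycle on both $Q_{k-1}$ and $Q_k$, and the bridge $(s_{k-1},a_{k-1},s_k)$ fires whenever $s_{k-1}$ is present. So Lemma~\ref{le:Ai-1Ai} does not apply ``with transitions unchanged'': its strings for the $(k-1)$st component are not side-effect free, since every occurrence of $a_{k-1}$ replaces an already installed $k$th component $S_k$ by $S_k\cdot a_{k-1}\cup\{s_k\}$. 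Moreover, $f_{k-1}$ can enter the $(k-1)$st component only by reading $a_{k-1}$ from a set containing $s_{k-1}$ (contractions and bridges never produce $f_{k-1}$), so this scrambling is unavoidable whenever $f_{k-1}\in S_{k-1}$. Concretely, for the valid state $(s_1,\{s_2\},\ldots,\{s_{k-2}\},\{f_{k-1}\},\{s_k\})$, your fill-from-the-back recipe first sets the last component to $\{s_k\}$ and then reads $a_{k-1}$ (or $a_{k-1}a_{k-2}^2$), which leaves $\{s_k\}\cdot a_{k-1}\cup\{s_k\}=\{s_k,f_k\}$ in the last component, not $\{s_k\}$. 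What is missing is exactly the pre-compensation device of Theorem~\ref{thm:k} (install $a_{k-1}^m(S_k)$ before reading the string for $S_{k-1}$, where $m$ counts its occurrences of $a_{k-1}$), or the paper's bespoke short strings for two-element components; the strings $w_b,v_b,v_c$ of Claim~\ref{claim} repair only the contraction damage of $b$ and $c$, never this cycling damage. (Two smaller slips in the same spot: the contraction partner of $A_k$ is $b$ only for odd $k$, for even $k$ it is $c$, so the gadget of Example~\ref{ex:dva_nase} runs over $\{a_{k-1},\sigma_k\}$; and the paper does not in fact invoke Lemma~\ref{le:Ai-1Ai} or that gadget here at all --- with two-state components it uses explicit strings $w_i\in\{\eps,a_i,a_ia_{i-1}^2\}$ and $w_k\in\{\eps,a_{k-1}\sigma_k^2a_{k-1},a_{k-1}^2\}$.)

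The distinguishability sketch also breaks at $i=1$: there is no letter $a_1$, so ``read a power of $a_i$'' is vacuous there; $Q_1$ is moved only by $b$, whose side effects you would then have to undo. The paper instead separates $(f_1,S_2,\ldots,S_k)$ from $(s_1,S_2,\ldots,S_k)$ by reading $\sigma_2$ and then $a_2$, exploiting the bridge $(f_1,a_2,s_2)$, which fires only from $f_1$. Also, for $i\ge2$, driving a differing state onto $f_i$ does not by itself ``create a difference in the $s_{i+1}$-coordinate'': the bridge only adds $s_{i+1}$, which may already lie in both $(i+1)$st components; the difference is produced only by the subsequent $\sigma_{i+1}$-then-$a_{i+1}$ step (or $a_{k-1}\sigma_ka_{k-1}$ when $i=k-1$), and the blocks $a_jb$ you import from Theorem~\ref{thm:k} rely on $n_j\ge3$ there and do not transfer to two-state automata. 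These distinguishability issues are fixable along the paper's lines, but as written the argument fails at these points, the $a_{k-1}$-coupling with $Q_k$ being the essential omission.
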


\begin{proof}
 For~$j=2,3\ldots,k$,
 let $\sigma_j\in\{b,c\}$ 
 be the symbol that performs the contraction in~$A_j$
 and the identity in~$A_{j-1}$.
 First, consider a valid state~$(s_1,S_2,S_3,\ldots,S_k)$.
Let~$\ell=\max\{i\mid S_i\ne\emp\}$.
    If~$\ell=k$, set
    \begin{align*}
        w_k=\begin{cases}
            \eps,             &\text{if~$S_k=\{s_k\}$}; \\
            a_{k-1}\sigma_k^2a_{k-1}, &\text{if~$S_k=\{f_k\}$}; \\ 
            a_{k-1}^2,   &\text{if~$S_k=\{s_k,f_k\}$}.
        \end{cases}      
    \end{align*}
    Then the string~$w_k$
    sends~$(s_1,\{s_2\},\{s_3\},\ldots,\{s_{k-1}\},\{s_k\})$ to~$(s_1,\{s_2\},\{s_3\},\ldots,\{s_{k-1}\},S_k)$.
    Next, for each~$i$ with~$3\le i \le j$ and~$i\ne k$, set
    \begin{align*}
         w_i=
        \begin{cases}
         \eps,           &\text{if~$S_i=\{s_i\}$}; \\
         a_i,            &\text{if~$S_i=\{f_i\}$}; \\
           a_ia_{i-1}^2, &\text{if~$S_i=\{s_i,f_i\}$}.        
    \end{cases}
    \end{align*}
    Then~$w_i$
    sends the three 
    components~$(\{s_{i-2}\},\{s_{i-1}\},\{s_i\})$ 
    to~$(\{s_{i-2}\},\{s_{i-1}\},\{S_i\})$,
    and fixes all the remaining components.
Thus we have
 \begin{align*}
  (s_1,\emp,\emp,\ldots,\emp)
   \xrightarrow{b^{2}a_2^{2}a_3^{2}\cdots a_{\ell-1}^{2}}
  & (s_1,\{s_2\},\{s_3\},\ldots,\{s_\ell\},\emp,\emp,\ldots,\emp) \\
   \xrightarrow{~~w_\ell w_{\ell-1}\cdots w_4 w_3~~}
  & (s_1,\{s_2\},S_3,S_4,\ldots,~S_\ell,\emp,\emp,\ldots,\emp) \\
  \xrightarrow{a_2}
   & (s_1,\{f_2\},S_3,S_4,\ldots,~S_\ell,\emp,\emp,\ldots,\emp) \\
   \xrightarrow{bb}
   & (s_1,\{s_2,f_2\},S_3b,S_4b,\ldots,~S_\ell b,\emp,\emp,\ldots,\emp) \\
   \xrightarrow{w_b(S_3,S_4,\ldots,S_\ell,\emp,\emp,\ldots,\emp)}
   & (s_1,\{s_2,f_2\},S_3,S_4,\ldots,~S_\ell,\emp,\emp,\ldots,\emp),   
 \end{align*}
 where~$w_b(S_3,S_4,\ldots,S_\ell,\emp,\emp,\ldots,\emp)$
 is the string given by Claim~\ref{claim}(\ref{c1}).
 This proves the reachability of all valid states~$(s_1,S_2,S_3,\ldots,S_k)$.
 The valid state~$(f_1,\{s_2\},\emp,\emp,\ldots,\emp)$
 is reached from the valid state~$(s_1,\{s_2\},\emp,\emp,\ldots,\emp)$ by~$b$.
 Next, we have
 \begin{align*}
      (s_1,\{s_2\},\{s_3\},S_4,..., S_k)\xrightarrow{b}
&(f_1,\{s_2\},\{s_3\},S_4b,S_5b,..., S_kb) \\
\xrightarrow{v_b(S_4,S_5,\ldots,S_k)}   
     &(f_1,\{s_2\},\{s_3\},S_4, S_5,\ldots,S_k) \\
     \xrightarrow{a_3}  
  & (f_1,\{s_2\},\{f_3\},S_4, S_5,..., S_k) \\
  \xrightarrow{a_2}  
  &(f_1,\{s_2,f_2\},\{s_3,f_3\},S_4, S_5,..., S_k)\\
  \xrightarrow{c}  
   &(f_1,\{s_2\},\{s_3,f_3\},S_4c,S_5c,..., S_kc) \\\xrightarrow{v_c(S_4,S_5,\ldots,S_k)} 
 & (f_1,\{s_2\},\{s_3,f_3\},S_4,S_5,..., S_k), 
 \end{align*}
 where~$v_b(S_4,S_5,\ldots,S_k)$ and~$v_c(S_4,S_5,\ldots,S_k)$
 are strings given by Claim~\ref{claim}(\ref{c2}) and~(\ref{c3}),
 respectively.
 Finally, we have
\begin{align*}
    &(f_1,\{s_2\},\{s_3\},S_4,S_5,..., S_k)\xrightarrow{a_2}  
    (f_1,\{s_2,f_2\},\{s_3\},S_4,S_5,..., S_k),
\end{align*}
  which proves the reachability of all valid states~$(f_1,S_2,S_3,\ldots,S_k)$. 
   Thus, all valid states are reachable.
 
 Now we prove distinguishability.
 Recall that
 for~$j=2,3\ldots,k$,
 \, $\sigma_j\in\{b,c\}$ 
 is the symbol that performs the contraction in the DFA~$A_j$
 and the identity in the DFA~$A_{j-1}$.
 Consider two distinct valid states~$p=(S_1,S_2,S_3,\ldots,S_k)$ and~$q=(T_1,T_2,T_3,\ldots,T_k)$.
  First, let~$S_k\ne T_k$.
 If~$f_k\in S_k\setminus T_k$,
 then~$p$ is final, while~$q$ is non-final. Otherwise, the string~$a_{k-1}$
 is accepted from~$p$ and rejected
 from~$q$.

 Now, let~$S_i\ne T_i$
 for some~$i\in\{1,2,\ldots,k-1\}$
 and~$S_j=T_j$ for~$j=i+1,i+2,\ldots,k$.
 Let us show that there is a string 
 that sends~$p$ and~$q$
 to two states which differ in
 their~$(i+1)$st component.
 Then, distinguishability
 follows by induction.
 
 If~$i=1$, then~$p=(\{f_1\},S_2,S_3,\ldots,S_k)$
 and~$q=(\{s_1\},S_2,S_3,\ldots,S_k)$, 
 so~$S_1$ and~$T_1$ differ in~$f_1$.
 Since~$s_2\in S_2$, the symbol~$\sigma_2$
 sends~$p$ and~$q$
 to~$(f_1,\{s_2\},S_3',\ldots,S'_k)$
 and~$(s_1,\{s_2\},S_3', \ldots,S_k')$,
 respectively, and then~$a_2$
 sends the resulting states
 to~$(f_1,\{s_2,f_2\},S_3'',\ldots,S_k'')$
 and~$(s_1,\{f_2\},S_3'', \ldots,S_k'')$
 which differ in~$s_2$.
 
 Let~$i\ge2$
 and~$s\in S_i\setminus T_i$. Then the string~$a_i^{2-s}$
 sends~$p$ and~$q$
 to states~$p'$ and~$q'$
 which differ in~$f_i$.
 If the resulting states 
 differ in their~$(i+1)$st
 components,
 we have the desired result.
 Otherwise, the string~$\sigma_{i+1}$
 sends~$p'$ and~$q'$ to two states which 
 have~$\{s_{i+1}\}$
 in their~$(i+1)$st components.
 Then, the resulting states
 are sent to states which differ
 in~$s_{i+1}$
 by~$a_{i+1}$ if~$i\le k-2$,
 and by~$a_{k-1}\sigma_{k}a_{k-1}$
 if~$i=k-1$; remind that~$\sigma_k$
 sends the~$(k-2)$th component to~$\{s_{k-2}\}$. 
 
 This proves distinguishability, and concludes the proof.
\end{proof}
 
\section{Binary and Ternary Languages}

In this section, we examine the state complexity of multiple concatenation
on binary and ternary languages.
Our aim is to show that in the binary case,
the resulting complexity is still exponential in $n_2,n_3,\ldots,n_k$,
and in the ternary case, it is the same as in the general case,
up to a multiplicative constant
depending on~$k$.
Let us start with the following example.
 
\begin{example}\rm
\label{ex:binary}
 Let $n\ge3$ and $N$ be the NFA shown in Figure~\ref{fig:Abinary} 
 that  recognizes the language of strings over~$\{a,b\}$
 which have an~$a$ in the~$(n-1)$st position from the end.
 
  Let us show that
 each sub\-set~$S\subseteq [1,n]$ with $1\in S$
 is reachable in the subset automaton $\cD(N)$. 
 The proof is by induction on~$|S|$.
 The basis, with~$|S|=1$, holds true since~$\{1\}$ is 
 the initial state. 
 Let~$|S|\ge2$ and~$1\in S$.
 Let~$m=\min(S\setminus\{1\})$.
 Set~$S'=ab^{m-2}(S\setminus\{1,m\})$.
 Then~$S'\subseteq[2,n-m+1]$ and~$|S'|=|S|-2$.
 We have
 $$
 \{1\}\cup S' 
 \xrightarrow{a} \{1,2\}\cup b^{s-2}(S\setminus \{1,s\})
 \xrightarrow{b^{s-2}} \{1,s\} \cup (S\setminus \{1,s\})=S,
 $$
 where the leftmost set of size~$|S|-1$ is reachable by induction. 
\qed\end{example}

\begin{figure}[h!]
\centering
\tikzset{every state/.style={minimum size=27}}
\begin{tikzpicture}[>=stealth', initial text={~},shorten >=1pt,auto,node distance=2cm]

\node[state,initial,initial text={$N$}](p0)[label=center:{$1$}]{};
\node[state] (p1)[right of=p0,label=center:{$2$}]{};
\node[state] (p2)  [right of=p1,label=center:{$3$}]{};
\node[state,draw=none] (p3)  [right of=p2,label=center:{$\ldots$}]{};
\node[state] (p4)  [right of=p3,label=center:{$n {-}1$}]{}; 
\node[state,accepting](p5)[right of=p4,label=center:{$n$}]{}; 

\draw[->]  (p0) to node{$a$} (p1);
\draw[->]  (p1) to node{$a,b$} (p2);
\draw[->]  (p2) to node{$a,b$} (p3);
\draw[->]  (p3) to node{$a,b$} (p4);
\draw[->]  (p4) to node{$a,b$} (p5);
\draw[->](p0)to[loop above]node{$a,b$}(p0); 
\end{tikzpicture}
 \caption{A binary 
 NFA $N$ such that every set $\{1\}\cup S$
 is reachable in $\cD(N)$.}
\label{fig:Abinary}
\end{figure}
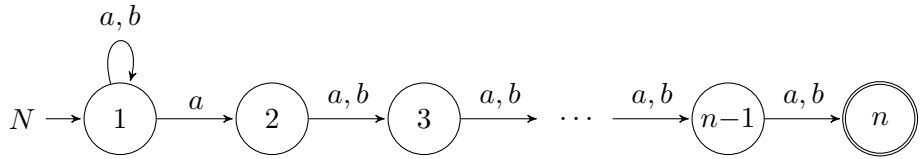

We now use the result from the above example
to get a lower bound on the state complexity
of multiple concatenation on binary languages.
The idea is to describe binary DFAs
in such a way that the NFA for their concatenation
would accept, except for a finite set,
the set of strings having an~$a$
in an appropriate position from the end.

\begin{theorem}
\label{thm:binary}
 Let $k\ge3$, $n_1\ge3$, $n_2\ge4$, and $n_i\ge3$ for $i=3,4,\ldots,k$. 
 Let~$A_1, A_2, \ldots,A_k$ be the binary DFAs shown in Figure~\ref{fig:binary}.
 Then every DFA for the language~$L(A_1) L(A_2)  \cdots L(A_k)$ 
 has at least $ n_1-1 + (1/2^{2k-2})\, 2^{n_2+n_3+\cdots+n_k}$ states.
\end{theorem}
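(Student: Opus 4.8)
The plan is to exhibit an explicit family $\mathcal{F}$ of valid states of $\cD(N)$ that are all reachable and pairwise distinguishable; since the number of reachable and pairwise distinguishable states of $\cD(N)$ is a lower bound on the size of any DFA for $L(A_1)L(A_2)\cdots L(A_k)$, this suffices. I would take $\mathcal{F}$ to consist of two groups. The \emph{tail} states $(j,\emp,\emp,\ldots,\emp)$ with $1\le j\le n_1-1$ give $n_1-1$ states. The \emph{rich} states are those of the form $(s_1,S_2,S_3,\ldots,S_k)$ with $1\in S_i$ and $n_i\notin S_i$ for every $i=2,3,\ldots,k$, the membership of the middle states $2,3,\ldots,n_i-1$ in each $S_i$ being arbitrary; there are $\prod_{i=2}^{k}2^{n_i-2}$ of them. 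Every rich state is valid because $f_i=n_i\notin S_i$ makes the third validity condition vacuous and every $S_i$ is non-empty. The total count is $(n_1-1)+2^{\sum_{i=2}^{k}(n_i-2)}=n_1-1+(1/2^{2k-2})\,2^{n_2+n_3+\cdots+n_k}$, which is exactly the claimed bound.

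For reachability I would lift the construction of Example~\ref{ex:binary} from a single automaton to the whole chain. Fixing the first component at $s_1$ provides a stable anchor: state $s_1$ carries loops on both input symbols, so while the later components are being filled the first component is never disturbed and never spawns $s_2$ spuriously. Keeping $1\in S_i$ and avoiding $n_i=f_i$ in every component means that no jump into the next automaton is triggered while we work, exactly as in the ``$f_i$ is not visited'' property of Lemma~\ref{le:Ai-1Ai}. I would then build up the rich states by induction on the total size $|S_2|+\cdots+|S_k|$, at each step using a short word of the form $a\,b^{m-2}$ (as in Example~\ref{ex:binary}) that inserts one new middle state into the appropriate component while leaving the already-constructed parts of all components untouched. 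The tail states are reached trivially from the initial state by $a^{j-1}$.

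For distinguishability I would use Lemma~\ref{le:dist} and Corollary~\ref{cor}. It suffices to show that for each state $q$ of $N$ lying in one of the components, the singleton $\{q\}$ is co-reachable, i.e.\ reachable in the reverse NFA $N^R$; then any two members of $\mathcal{F}$ that differ in the membership of some such $q$ are separated. Tracing backwards through $N^R$ from $\{f_k\}$ along the chain, each singleton of $A_k$ is reached by powers of the letter advancing $A_k$, each $\{f_{i}\}$ is reached from $\{s_{i+1}\}$ by the symbol realizing the spawning transition, and each singleton of $A_i$ is then reached from $\{f_i\}$ inside $A_i$; the hypotheses $n_1\ge3$, $n_2\ge4$ and $n_i\ge3$ guarantee that these single steps exist. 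The tail states, which all have empty later components, are distinguished from one another by words $a^{t}$ carrying exactly one of them to acceptance, and are distinguished from the rich states by their empty second component.

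The main obstacle is the reachability argument: unlike the $(k+1)$- and $k$-letter constructions, here only two letters act simultaneously on all $k$ components, so one cannot manipulate a single component in isolation. The work is to arrange the witness transitions of Figure~\ref{fig:binary} so that the single-component insertion move $a\,b^{m-2}$ of Example~\ref{ex:binary} acts as the identity on every component other than the one currently being enlarged, and to fix an order of filling the components (inserting the middle states so that earlier components are not perturbed) under which the induction on total size goes through. Verifying this non-interference, together with checking that the hypothesis $n_2\ge4$ is precisely what the second-component step needs, is the delicate part.
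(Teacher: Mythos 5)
Your counting is right and your overall skeleton (explicit family, reachability by induction, distinguishability via co-reachable singletons) is the standard one, but the family you chose is not reachable, and this is fatal. In the automata of Figure~\ref{fig:binary} the two-letter loops sit at $f_1=n_1$, \emph{not} at $s_1$: the initial state of $A_1$ advances on both letters, and the only transitions of $N$ that spawn $s_2$ are $(f_1{-}1,b,s_2)$, $(f_1,a,s_2)$ and $(f_1,b,s_2)$. Since $f_1$ is a trap, every reachable state of $\cD(N)$ with a non-empty second component has first component $f_1$; none of your ``rich'' states $(s_1,S_2,\ldots,S_k)$ with $S_i\ne\emp$ is reachable. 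Nor can the anchor-at-$s_1$ idea be repaired by redesigning the automata: a binary DFA whose initial state loops on both letters recognizes $\emp$ or $\Sigma^*$, so the anchor must be $f_1$, and then the perpetual re-spawning forces $s_2$ into the second component of every such state. You also misread the figure: for $i\ge2$ the final state is $f_i=n_i-1$, while $n_i$ is a dead sink, and in $N$ the states $f_i$ and $n_i$ with $2\le i\le k-1$ and the state $n_k$ are dead, hence excluded from every reachable set. Once these exclusions are imposed, a family that additionally insists on $s_i\in S_i$ for \emph{all} $i$ (as yours does) has only about $2^{\sum_{i}(n_i-3)}$ members, a factor $2^{k-1}$ short of the bound; matching the bound requires leaving the initial states of $A_3,\ldots,A_k$ unconstrained.

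The second gap is the one you flag yourself and leave open: with a binary alphabet every component moves on every symbol, so there is no word $a\,b^{m-2}$ that inserts a state into one component ``while leaving the already-constructed parts of all components untouched''; Lemma~\ref{le:Ai-1Ai} has no two-letter analogue, and no ordering of the components makes this surgery work. The paper's proof avoids component-wise manipulation entirely. The automata are designed so that, once the first token is trapped at $f_1$ (re-spawning $s_2$ on every letter), the live states of $A_2,\ldots,A_k$ linked by the spawn transitions form one forward chain that is exactly the NFA of Example~\ref{ex:binary}: $s_2$ loops on $b$ and advances on $a$, every other chain state advances on both letters, and leaving the last live state of $A_i$ spawns $s_{i+1}$. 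Example~\ref{ex:binary} then gives, in one stroke, the reachability of all states $(f_1,\{s_2\}\cup S)$ with $S$ an arbitrary subset of the chain, i.e.\ $2^{(n_2-3)+(n_3-2)+\cdots+(n_{k-1}-2)+(n_k-1)}=(1/2^{2k-2})\,2^{n_2+\cdots+n_k}$ states; these plus the $n_1-1$ tail states are then separated much as you describe (co-reachability of singletons via $a^*$ for all components beyond the first, and $b^{f_1-j}$ for pairs of tail states). The missing idea is thus the global ``$a$ in a fixed position from the end'' view of the whole suffix, not a per-component induction.
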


 \begin{proof}
 Construct an NFA   for~$L(A_1)L(A_2)\cdots L(A_k)$
 from the DFAs~$A_1, A_2,\ldots,A_k$
 by adding the transitions~$(f_1{-}1,b,s_2)$,
 $(f_1,a,s_2),(f_1,b,s_2)$,
 and~$(f_i{-}1,\sigma,s_{i+1})$ for~$i=2,3,\ldots,k-1$ and~$\sigma\in\{a,b\}$,
 by making states~$f_1,f_2,\ldots,f_{k-1}$ non-final,
 and states~$s_2,s_3,\ldots,s_k$ non-initial.
 In this NFA, the  states 
 $f_i$ and~$f_i{+}1$ with~$2\le i\le k{-}1$,
 as well as the state~$f_k+1$ are dead, so we can omit them.
 Let~$N$ be the resulting~NFA;
 see Figure~\ref{fig:binaryNFA} for an illustration.
 
 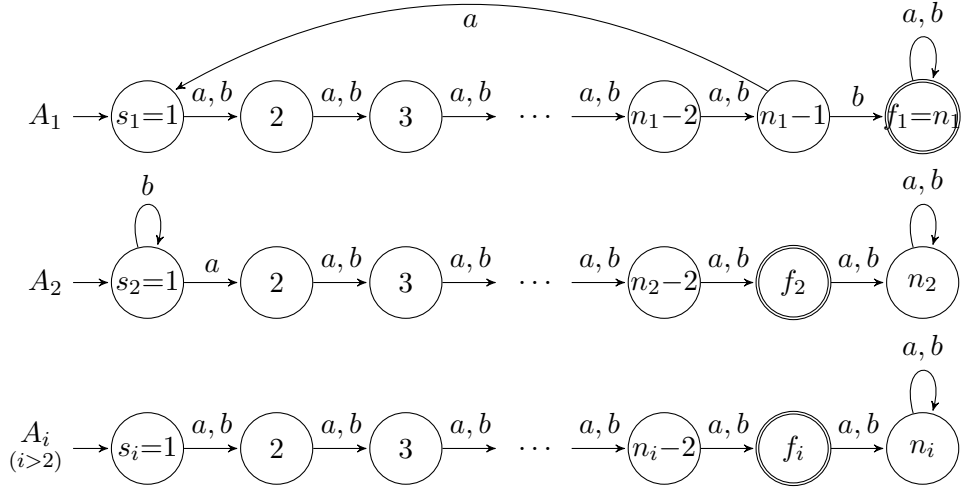
\begin{figure}[t]
 \centering
\tikzset{every state/.style={minimum size=27}}
\begin{tikzpicture}[>=stealth', initial text={~},shorten >=1pt,auto,node distance=1.7cm]

\node[state,initial,initial text={$A_1$}](p0)[label=center:{$s_1{=}1$}]{};
\node[state](p1)[right of=p0,label=center:{$2$}]{};
\node[state](p2)[right of=p1,label=center:{$3$}]{};
\node[state,draw=none](p3)[right of=p2,label=center:{$\ldots$}]{};
\node[state](p4)[right of=p3,label=center:{$n_1{-}2$}]{}; 
\node[state](p45)[right of=p4,label=center:{$n_1{-}1$}]{}; 
\node[state,accepting](p5)[right of=p45,label=center:{$f_1{=}n_1$}]{}; 

\draw[->]  (p0) to node{$a,b$} (p1);
\draw[->]  (p1) to node{$a,b$} (p2);
\draw[->]  (p2) to node{$a,b$} (p3);
\draw[->]  (p3) to node{$a,b$} (p4);
\draw[->]  (p4) to node{$a,b$} (p45);
\draw[->]  (p45) to node{$b$}   (p5);
\draw[->]  (p45.135) [bend right] to node{$a$}(p0.45);
\draw[->]  (p5)to[loop above]node{$a,b$}(p5);

\node[state,initial,initial text={$A_2$}](0) at (0,-2.2) [label=center:{$s_2{=}1$}]{};
\node[state](1)[right of=0,label=center:{$2$}]{};
\node[state](2)[right of=1,label=center:{$3$}]{};
\node[state,draw=none] (3)  [right of=2,label=center:{$\ldots$}]{};
\node[state](3a)[right of=3,label=center:{$n_2{-}2$}]{};
\node[state,accepting](4)[right of=3a,label=center:{$f_2$}]{}; 
\node[state](5)[right of=4,label=center:{$ n_2$}]{}; 

\draw[->] (0) to node{$a$}   (1);
\draw[->] (1) to node{$a,b$} (2);
\draw[->] (2) to node{$a,b$} (3);
\draw[->] (3) to node{$a,b$} (3a);
\draw[->] (3a)to node{$a,b$} (4);
\draw[->] (4) to node{$a,b$} (5);
\draw[->] (0) to[loop above]node{$b$}(0);
\draw[->] (5) to[loop above]node{$a,b$}(5);

\node[state,initial,initial text={$\underset{(i>2)}{A_i}$}] (q0) at (0,-4.4)[label=center:{$s_i{=}1$}]{};
\node[state]           (q1) [right of=q0,label=center:{$2$}]{};
\node[state]           (q12)[right of=q1,label=center:{$3$}]{};
\node[state,draw=none] (q2)  [right of=q12,label=center:{$\ldots$}]{};
\node[state]           (q3)  [right of=q2,label=center:{$n_i{-}2$}]{};
\node[state,accepting] (q4)  [right of=q3,label=center:{$f_i$}]{}; 
\node[state]           (q5)  [right of=q4,label=center:{$n_i$}]{}; 

\draw[->]  (q0) to node{$a,b$} (q1);
\draw[->]  (q1) to node{$a,b$} (q12);
\draw[->]  (q12)to node{$a,b$} (q2);
\draw[->]  (q2) to node{$a,b$} (q3);
\draw[->]  (q3) to node{$a,b$} (q4);
\draw[->]  (q4) to node{$a,b$} (q5); 
\draw[->]  (q5) to[loop above]node{$a,b$}(q5);
\end{tikzpicture}
\vskip5pt
 \caption{Binary DFAs $A_1,A_2,$ and  $A_i$ for $i=3,4,\ldots,k$
 	($n_2\ge4$ and~$n_i\ge3$ if~$i\ne2$)
         meeting the lower 
         bound $n_1-1 + 
        (1/2^{2k-1}) 2^{n_2+n_3+\cdots +n_k}$
        for multiple concatenation; ~$f_i=n_i-1$ for~$i\ge2$.}
 \label{fig:binary}
 \end{figure}

  \begin{figure}[h!]
 \centering
 \tikzset{every state/.style={minimum size=27}}
\begin{tikzpicture}[>=stealth', initial text={~},shorten >=1pt,auto,node distance=1.9cm]

\node[state,initial,initial text={$N$}](s1)[label=center:{$s_1{=}1$}]{};
\node[state](12)[right of=s1,label=center:{$2$}]{};
\node[state](13)[right of=12,label=center:{$3$}]{};
\node[state](f1)[right of=13,label=center:{$f_1{=}4$}]{}; 

\node[state](s2) at (0,-2.2)[label=center:{$s_2{=}1$}]{};
\node[state](22)[right of=s2,label=center:{$2$}]{};
\node[state](23)[right of=22,label=center:{$3$}]{};
\node[state](24)[right of=23,label=center:{$4$}]{};
\node[state,dashed](f2)[right of=24,label=center:{$f_2{=}5$}]{};
\node[state,dashed](d2)[right of=f2,label=center:{$6$}]{};

\node[state](s3) at (0,-4.4)[label=center:{$s_3{=}1$}]{};
\node[state](32)[right of=s3,label=center:{$2$}]{};
\node[state](33)[right of=32,label=center:{$3$}]{};
\node[state,dashed](f3)[right of=33,label=center:{$f_3{=}4$}]{};
\node[state,dashed](d3)[right of=f3,label=center:{$5$}]{};

\node[state](s4) at (0,-6.6)[label=center:{$s_4{=}1$}]{};
\node[state](42)[right of=s4,label=center:{$2$}]{};
\node[state](43)[right of=42,label=center:{$3$}]{};
\node[state,accepting](f4)[right of=43,label=center:{$f_4{=}4$}]{};
\node[state,dashed](d4)[right of=f4,label=center:{$5$}]{};

\draw[->](s1)to node{$a,b$}(12);
\draw[->](12)to node{$a,b$}(13);
\draw[->](13)to node{$b$}(f1);
\draw[->] (13) to [bend right=40] node  {$a$}(s1);
\draw[->](f1)to[loop above]node {$a,b$}(f1);

\draw[->](s2)to[loop above]node {$b$}(s2);
\draw[->](s2)to node{$a$}(22);
\draw[->](22)to node{$a,b$}(23);
\draw[->](23)to node{$a,b$}(24);
\draw[->,dashed](24)to node{$a,b$}(f2);
\draw[->,dashed](f2)to node{$a,b$}(d2);
\draw[->,dashed](d2)to[loop above]node {$a,b$}(d2);

\draw[->](s3)to node{$a,b$}(32);
\draw[->](32)to node{$a,b$}(33);
\draw[->,dashed](33)to node{$a,b$}(f3);
\draw[->,dashed](f3)to node{$a,b$}(d3);
\draw[->](d3)to[loop above]node {$a,b$}(d3);

\draw[->](s4)to node{$a,b$}(42);
\draw[->](42)to node{$a,b$}(43);
\draw[->](43)to node{$a,b$}(f4);
\draw[->,dashed](f4)to node{$a,b$}(d4);
\draw[->](d4)to[loop above]node {$a,b$}(d4);

\draw[->,red,dashed](13)to[above,pos=0.6] node{$b$}(s2.60);
\draw[->,red,dashed](f1.210)to[pos=0.4] node{$a,b$}(s2.45);
\draw[->,red,dashed](24.210)to node[pos=0.4]{$a,b$}(s3.45);
\draw[->,red,dashed](33)to node[pos=0.4]{$a,b$}(s4.45);

\end{tikzpicture}
\vskip10pt
 \caption{A binary NFA for $L(A_1)L(A_2)L(A_3)L(A_4)$
         where $n_1=4$, $n_2=6$, $n_3=n_4=5$.}
 \label{fig:binaryNFA}
 \end{figure}

 In the subset automaton~$\cD(N)$,
 each state~$(j,\emp,\emp,\ldots,\emp)$ 
 with $1\le j \le f_1-1$ 
 is reached from the initial state~$(s_1,\emp,\emp,\ldots,\emp)$
 by~$b^{j-1}$,
 and~$(f_1,\{s_2\},\emp,\emp,\ldots,\emp)$ is reached 
 from~$(f_1{-}1,\emp,\emp,\ldots,\emp)$ by~$b$.
 Starting in the state~$f_1$,
 the NFA $N$ accepts all strings having an $a$
 in    position~$n_2-2+n_{3}-2+
 \cdots+ n_{k-1}-2+n_{k}-1$
 from~the end. As shown in Example~\ref{ex:binary},
 every state~$(f_1 \{s_2\}\cup S_2,S_3,\ldots,S_k)$
 with~$S_2\subseteq \{2,3,\ldots,n_2-2\}$,
 $S_i\subseteq \{1,2,\ldots,n_i-2\}$ for~$i=3,4, \ldots,k-1$,  
 and~$S_k\subseteq\{1,2,\ldots,n_k-1\}$
 is~reachable.
 This gives 
 $$n_1-1+2^{n_2-3+n_3-2+n_4-2+\cdots+n_{k-1}-2+n_k-1}
 =n_1-1+  
 (1/2^{2k-2})2^{n_2+n_3+\cdots+n_k}$$
 reachable states.
 
 Moreover,
 each singleton set is co-reachable in~$N$  via a string in~$a^*$,
 except for~$\{q\}$ where~$q$ is a non-final  state of~$A_1$.
 By Lemma~\ref{le:dist},   
 the reachable states~$(i, S_2, S_3,\ldots,S_k)$ and~$(j, T_2, T_3,\ldots,T_k)$ 
 are distinguishable if they differ in a state of~$A_i$ with~$i\ge2$ or in~$f_1$.
 Next, the states~$(i,S_2,S_3,\ldots,S_k)$ and~$(j,S_2,S_3,\ldots,S_k)$
 with~$1\le i < j < f_1$ are sent to states that differ in~$f_1$ 
 by~$b^{f_1-j}$.
\end{proof}
 
Our next result shows
that a trivial upper bound~$n_12^{n_2+n_3+\cdots+n_k}$
can be met, 
up to a multiplicative constant
depending on~$k$,
by the concatenation of~$k$ ternary languages.
Thus, this trivial upper bound
is asymptotically tight
in the ternary case.

\begin{theorem}
 Let $k\ge2$, $n_1\ge3$, $n_2\ge4$, and $n_i\ge3$ for $i=3,4,\ldots,k$. 
 There exist ternary DFAs~$A_1, A_2,\ldots,A_k$ 
 such that  every DFA recognizing 
 the concatenation~$L(A_1) L(A_2)  \cdots L(A_k)$ 
 has at least $(1/2^{2k-2})\, n_1  2^{n_2+n_3+\cdots +n_k}$ states.
\end{theorem}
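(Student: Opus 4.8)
The plan is to adapt the binary construction from Theorem~\ref{thm:binary} by adding a third letter that breaks the permutation on the first component, thereby recovering the full factor of $n_1$ rather than only $n_1-1$. In the binary case the state $f_1$ was reached from $f_1-1$ via $b$, and all the states $(j,\emp,\ldots,\emp)$ with $1\le j\le f_1-1$ collapsed the first automaton into essentially a counter that contributed only $n_1-1$ distinguishable prefixes. With a third symbol $c$ I would make $A_1$ behave as a genuine $n_1$-cycle on $\{a,b\}$ (or on one of the letters) so that each of the $n_1$ states of $Q_1$ remains reachable and distinguishable as the first component, while still routing the computation into $A_2$ through a designated transition.

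The key steps, in order, would be: first, define ternary DFAs $A_1,\ldots,A_k$ where $A_2,\ldots,A_k$ keep exactly the transitions on $a,b$ from Figure~\ref{fig:binary} (so that the reachability argument of Example~\ref{ex:binary} applies verbatim to the ``tail'' components $S_2,\ldots,S_k$), and $c$ acts as the identity on $Q_2,\ldots,Q_k$. Second, design $A_1$ so that $c$ performs the circular shift $(1,2,\ldots,n_1)$ on $Q_1$ while $a,b$ behave as before; this makes all $n_1$ states of $A_1$ reachable as first components via $c^{j-1}$, and distinguishable because $c$ permutes $Q_1$ freely. Third, I would reach an arbitrary valid state $(j,\{s_2\}\cup S_2,S_3,\ldots,S_k)$ by first loading the tail via the string from Example~\ref{ex:binary} (using only $a,b$, which fix $Q_1$'s first component at $f_1$ during the loading phase) and then applying $c^{j-1}$ to rotate the first component into place without disturbing the already-loaded tail sets, since $c$ is the identity on $Q_2,\ldots,Q_k$. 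Fourth, distinguishability of the tail sets follows from Lemma~\ref{le:dist} exactly as before (each singleton of $Q_i$, $i\ge2$, is co-reachable via a string in $a^*$), and distinguishability in the first component follows because the $n_1$ rotations under $c$ send differing first components to states detectably differing in the tail.

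Counting the reachable states, the tail contributes the same $2^{n_2-3+n_3-2+\cdots+n_{k-1}-2+n_k-1}=(1/2^{2k-2})2^{n_2+\cdots+n_k}$ choices as in Theorem~\ref{thm:binary}, and now each of these comes with $n_1$ distinguishable first components rather than the degenerate count, giving at least $(1/2^{2k-2})\,n_1\,2^{n_2+n_3+\cdots+n_k}$ states, as claimed.

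The main obstacle I expect is ensuring that rotating the first component by $c$ genuinely preserves distinguishability of the $n_1$ copies of each tail configuration, and more subtly, that adding the permutation letter $c$ to $A_1$ does not create unwanted collapses — for instance, that states $(i,S)$ and $(j,S)$ with the same tail but different first components are not accidentally equivalent. In Theorem~\ref{thm:binary} this was handled for consecutive indices by sending them to states differing in $f_1$ via $b^{f_1-j}$; here I must verify that the interplay between the $c$-permutation on $Q_1$ and the entry transition into $A_2$ still lets me produce a distinguishing string for every pair of distinct first components, which requires checking that the transition feeding $s_2$ is triggered at a $c$-rotated position that differs between the two states. This verification, together with confirming that $c$ does not destroy the careful positioning argument for the ``$a$ in a fixed position from the end'' language, is where the real work lies.
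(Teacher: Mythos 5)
Your construction is precisely the paper's: the paper also adds $c$ acting as the circular shift $(1,2,\ldots,n_1)$ on $Q_1$ and, in effect, as the identity on all other components (formally it sets $c\colon (f_i\to f_i{+}1)$ in $A_i$ for $2\le i\le k-1$, but those states are dead, so this coincides with your choice), and it also proves reachability exactly as you propose: reach $(f_1,S_2,\ldots,S_k)$ with the binary strings of Theorem~\ref{thm:binary}, then rotate the first component by a power of $c$, which cannot disturb the tail because the only $c$-transition feeding $s_2$ leaves $f_1{-}1$ and the rotation never reads $c$ there. So the construction and the count $(1/2^{2k-2})\,n_1\,2^{n_2+\cdots+n_k}$ are correct and identical to the paper's.

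The step you defer---distinguishing $(i,S_2,\ldots,S_k)$ from $(j,S_2,\ldots,S_k)$ with $i\ne j$---is also the one place where your stated criterion would not work as written. You cannot in general arrange that ``the transition feeding $s_2$ is triggered at a $c$-rotated position that differs between the two states'': rotating until one component reaches $f_1$ may force the other component through $f_1{-}1$ as well (e.g.\ first components $1$ and $2$ with $n_1=4$), so both states gain $s_2$; and if $s_2$ already lies in the common tail, gaining it changes nothing, leaving two states that again agree in the tail. The paper's completion is cleaner and finishes everything at once: adding $c$ makes every singleton $\{q\}$ with $q\in Q_1$ co-reachable, since $\{f_1\}$ is co-reachable via a string in $a^*$ exactly as in the binary case and reverse $c$-transitions then walk $\{f_1\}$ through all of $Q_1$; hence every singleton of $N$ is co-reachable via a string in $a^*c^*$, and Corollary~\ref{cor} gives pairwise distinguishability of all reachable states. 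If you prefer your forward formulation, the correct fix is: rotate so that one state has first component $f_1$ and the other does not (if the tails now differ, Lemma~\ref{le:dist} already applies), then read a single $a$; since the only $a$-transition into $s_2$ leaves $f_1$ and no state of $A_2$ maps into $s_2$ under $a$, exactly one of the two resulting states contains $s_2$, and Lemma~\ref{le:dist} concludes.
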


\begin{proof}
 Let us add the transitions on symbol~$c$
 to the binary automata shown in Figure~\ref{fig:binary} 
 as follows: $c\colon (1,2,\ldots,n_1)$ in $A_1$,
 $c\colon (f_i\to f_i+1)$ in~$A_i$ with~$2 \le i \le k-1$,
 and $c\colon (1)$ in~$A_k$.
 Construct the NFA~$N$ for~$L(A_1) L(A_2)  \cdots L(A_k)$ with
 omitted dead states as in the binary case; see Figure~\ref{fig:NFA_ternary}
 for an illustration.
  As shown in the proof of Theorem~\ref{thm:binary},
 the subset automaton $\cD(N)$ has~$(1/2^{2k-2})2^{n_2+n_3+\cdots+n_k}$
 reachable states of the form~$(f_1,S_2,S_3,\ldots,S_k)$.
 Each such state is sent to the state~$(j,S_2,S_3,\ldots,S_k)$  
 with~$1\le j \le f_1-1$   by the string~$c^{j}$. 
 Moreover, in the NFA $N$,   
 each singleton set is co-reachable via a string in $a^*c^*$.
 By Corollary~\ref{cor}, all states of $\cD(N)$
 are pairwise distinguishable.
 This gives the desired lower bound.
\end{proof}  

  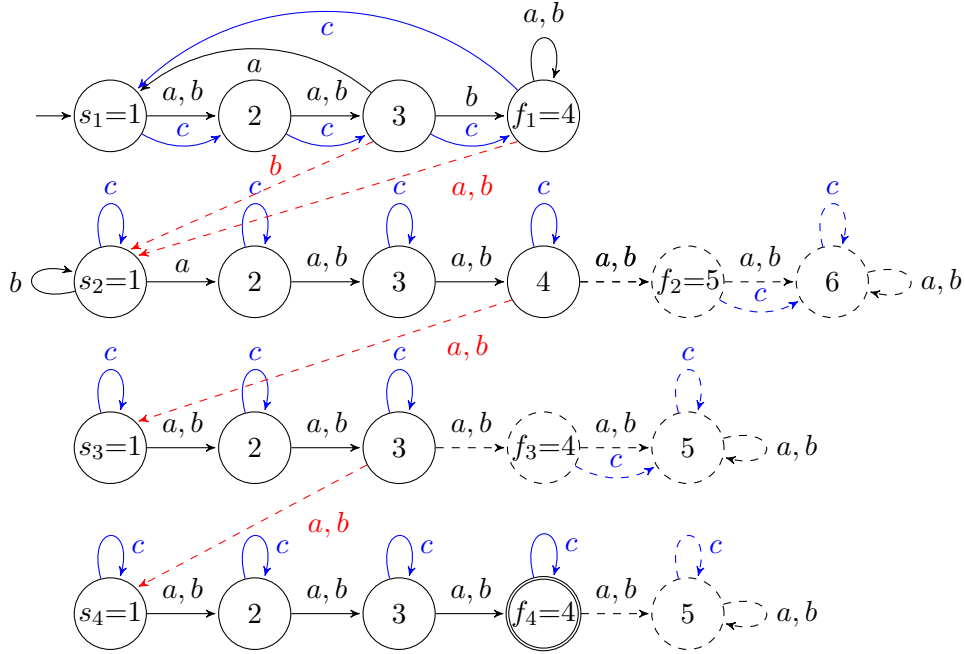
\begin{figure}[h!]
 \centering 
 \scalebox{1}{\tikzset{every state/.style={minimum size=27}}
\begin{tikzpicture}[>=stealth', initial text={~},shorten >=1pt,auto,node distance=1.9cm]

\node[state,initial,initial text={}](s1)[label=center:{$s_1{=}1$}]{};
\node[state](12)[right of=s1,label=center:{$2$}]{};
\node[state](13)[right of=12,label=center:{$3$}]{};
\node[state](f1)[right of=13,label=center:{$f_1{=}4$}]{}; 

\node[state](s2) at (0,-2.2)[label=center:{$s_2{=}1$}]{};
\node[state](22)[right of=s2,label=center:{$2$}]{};
\node[state](23)[right of=22,label=center:{$3$}]{};
\node[state](24)[right of=23,label=center:{$4$}]{};
\node[state,dashed](f2)[right of=24,label=center:{$f_2{=}5$}]{};
\node[state,dashed](d2)[right of=f2,label=center:{$6$}]{};

\node[state](s3) at (0,-4.4)[label=center:{$s_3{=}1$}]{};
\node[state](32)[right of=s3,label=center:{$2$}]{};
\node[state](33)[right of=32,label=center:{$3$}]{};
\node[state,dashed](f3)[right of=33,label=center:{$f_3{=}4$}]{};
\node[state,dashed](d3)[right of=f3,label=center:{$5$}]{};

\node[state](s4) at (0,-6.6)[label=center:{$s_4{=}1$}]{};
\node[state](42)[right of=s4,label=center:{$2$}]{};
\node[state](43)[right of=42,label=center:{$3$}]{};
\node[state,accepting](f4)[right of=43,label=center:{$f_4{=}4$}]{};
\node[state,dashed](d4)[right of=f4,label=center:{$5$}]{};

\draw[->](s1)to node{$a,b$}(12);
\draw[->](12)to node{$a,b$}(13);
\draw[->](13)to node{$b$}(f1);
\draw[->] (13) to [bend right=40] node  {$a$}(s1);
\draw[->](f1)to[loop above]node {$a,b$}(f1);


\draw[->,blue](s1)to [bend right] node{$c$}(12);
\draw[->,blue](12)to[bend right] node{$c$}(13);
\draw[->,blue](13)to[bend right] node{$c$}(f1);
\draw[->,blue] (f1) to [bend right=45] node  {$c$}(s1);

\draw[->,blue](s2)to[loop above]node {$c$}(s2);
\draw[->](s2)to[loop left, ]node {$b$}(s2);
\draw[->](s2)to node{$a$}(22);
\draw[->](22)to node{$a,b$}(23);
\draw[->](23)to node{$a,b$}(24);
\draw[->,dashed](24)to node{$a,b$}(f2);
\draw[->,dashed](f2)to node{$a,b$}(d2);

\draw[->,blue](22)to[loop above] node{$c$}(22);
\draw[->,blue](23)to [loop above] node{$c$}(23);
\draw[->,blue](24)to[loop above] node{$c$}(24);
\draw[->,dashed](24)to node{$a,b$}(f2);
\draw[->,blue,dashed](f2)to[bend right] node{$c$}(d2);
\draw[->,dashed](d2)to[loop right]node {$a,b$}(d2);
\draw[->,blue,dashed](d2)to[loop above]node {$c$}(d2);

\draw[->](s3)to node{$a,b$}(32);
\draw[->](32)to node{$a,b$}(33);
\draw[->,dashed](33)to node{$a,b$}(f3);
\draw[->,dashed](f3)to node{$a,b$}(d3);
\draw[->,dashed](d3)to[loop  right]node {$a,b$}(d3);

\draw[->,blue](s3)to[loop  above]node {$c$}(s3);
\draw[->,blue](32)to[loop  above]node {$c$}(32);
\draw[->,blue](33)to[loop  above]node {$c$}(33);
\draw[->,dashed,blue](f3)to[bend right] node{$c$}(d3);
\draw[->,dashed,blue](d3)to[loop  above]node {$c$}(d3);

\draw[->](s4)to node{$a,b$}(42);
\draw[->](42)to node{$a,b$}(43);
\draw[->](43)to node{$a,b$}(f4);
\draw[->,dashed](f4)to node{$a,b$}(d4);
\draw[->,dashed](d4)to[loop right]node {$a,b$}(d4);

\draw[->,blue](s4)to[loop above,right, near end]node {$c$}(s4);
\draw[->,blue](42)to[loop above,right, near end]node {$c$}(42);
\draw[->,blue](43)to[loop above,right, near end]node {$c$}(43);
\draw[->,blue](f4)to[loop above,right, near end]node {$c$}(f4);
\draw[->,blue,dashed](d4)to[loop above,right, near end]node {$c$}(d4);

\draw[->,red,dashed](13.225)to[above,pos=0.4] node{$b$}(s2.60);
\draw[->,red,dashed](f1.225)to[pos=0.2] node{$a,b$}(s2.45);
\draw[->,red,dashed](24.210)to node[pos=0.2]{$a,b$}(s3.45);
\draw[->,red,dashed](33)to node[pos=0.3]{$a,b$}(s4.45);

\end{tikzpicture}}
\tikzset{every state/.style={minimum size=30}}
\vskip10pt
 \caption{A ternary NFA for $L(A_1)L(A_2)L(A_3)L(A_4)$
         where $n_1=4$, $n_2=6$, $n_3=n_4=5$.}
 \label{fig:NFA_ternary}
 \end{figure}

\section{Unary Languages}   
 
The upper bound on the state complexity of concatenation of two unary languages  
is~$n_1n_2$,
and this upper bound can be met by cyclic unary languages 
if~$\gcd(n_1,n_2)=1$~\cite[Theorems~5.4 and~5.5]{yzs94}.
This gives a trivial upper bound~$n_1n_2\cdots n_k$ 
for concatenation of~$k$ unary languages.
Here we show that a tight upper bound for concatenation 
of~$k$ cyclic unary languages is much smaller.
Then we continue our study by investigating
the concatenation of languages of the form~$a^{\mu_i}Y_i$
where~$Y_i$ is a~$\lambda_i$-cyclic.
In both cases, we provide tight upper bounds.
Finally, we consider the case,
when automata may have final states in their tails.

Recall that the state set of a unary automaton
of size~$(\lambda,\mu)$ consists of 
a tail~$q_0,q_1,\ldots,q_{\mu-1}$
and a cycle~$p_0,p_1,\ldots,p_{\lambda-1}$
(with~$p_0=q_0$ if~$\mu=0$),
and its transitions are~$q_0\to q_1\to\cdots\to q_{\mu-1}\to p_0
\to p_1 \to\cdots\to p_{\lambda-1}\to p_0$; cf.~\cite{ps02}.

Let~$n_1,n_2,\ldots,n_k$ be positive integers 
with~$\gcd(n_1,n_2,\ldots,n_k)=1$.
Then~$g(n_1,n_2,\ldots,n_k)$
denotes the Frobenius number, that is, 
the largest integer that cannot be expressed 
as $x_1n_1+x_2n_2+\cdots+x_kn_k$
for some non-negative integers~$x_1,x_2,\ldots,x_k$.
Let us start with the following observation. 
 
\begin{lemma}
	\label{le:largest_cannot}
	  Let~$n_1,n_2,\ldots,n_k$ be positive integers
        with~$\gcd(n_1,n_2,\ldots,n_k)=d$.
	Then each number of the form~$x_1 n_1 + x_2 n_2 + \cdots + x_k n_k$,
	with $x_1,x_2,\ldots,x_k\ge0$,
	is~a~multiple of $d$.
	Furthermore, the largest multiple of $d$
	that cannot be represented as $x_1 n_1 + x_2 n_2 + \cdots + x_k n_k$,
	with~$x_1,x_2,\ldots,x_k\ge0$,
	is~$d\cdot g(\frac{n_1}{d},\frac{n_2}{d},\ldots,\frac{n_k}{d})$.
\end{lemma}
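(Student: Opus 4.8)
The plan is to reduce the statement to the definition of the Frobenius number by a simple scaling argument. The first assertion is immediate: since $d \mid n_i$ for every $i$, any integer of the form $x_1 n_1 + x_2 n_2 + \cdots + x_k n_k$ with $x_1,x_2,\ldots,x_k\ge0$ is divisible by $d$, so every representable number is a multiple of $d$.

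For the second assertion, I would set $m_i = n_i / d$ for $i = 1, 2, \ldots, k$. Dividing out the greatest common divisor gives $\gcd(m_1, m_2, \ldots, m_k) = 1$, so the Frobenius number $g(m_1, m_2, \ldots, m_k)$ is well defined. The key observation is the scaling bijection between representations: for a nonnegative integer $t$,
\[
  td = x_1 n_1 + x_2 n_2 + \cdots + x_k n_k
  \iff
  t = x_1 m_1 + x_2 m_2 + \cdots + x_k m_k,
\]
because $n_i = d m_i$ and the positive factor $d$ may be cancelled on both sides. Hence a multiple $td$ of $d$ is representable by the $n_i$ with nonnegative coefficients if and only if $t$ itself is representable by the $m_i$.

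It then follows that the multiples of $d$ which cannot be represented by the $n_i$ are exactly the numbers $td$ for which $t$ cannot be represented by the $m_i$. By the definition of the Frobenius number, the largest nonrepresentable such $t$ is $g(m_1, m_2, \ldots, m_k)$, and therefore the largest nonrepresentable multiple of $d$ is $d \cdot g(m_1, m_2, \ldots, m_k) = d \cdot g(\frac{n_1}{d}, \frac{n_2}{d}, \ldots, \frac{n_k}{d})$, as claimed.

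There is no serious obstacle here; the only point requiring a little care is to note that after dividing by $d$ the arguments of $g$ are coprime, which is precisely what guarantees that $g(m_1, m_2, \ldots, m_k)$ exists and that the set of nonrepresentable integers is finite with a well-defined maximum. Establishing the displayed equivalence cleanly is the crux, and everything else is routine bookkeeping.
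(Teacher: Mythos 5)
Your proof is correct and follows essentially the same route as the paper's: divide through by $d$ to reduce to coprime arguments $n_i/d$, invoke the definition of the Frobenius number $g(\frac{n_1}{d},\ldots,\frac{n_k}{d})$, and scale back by $d$. Your version merely spells out the representability equivalence $td = \sum_i x_i n_i \iff t = \sum_i x_i (n_i/d)$ more explicitly than the paper does.
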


\begin{proof}
	The first claim follows from the fact that each $n_i$
	is~a multiple of $d$.
	Since~$\gcd(\frac{n_1}{d}, \ldots,\frac{n_k}{d})=1$,
	the largest integer that cannot be represented
	as~$x_1 \frac{n_1}{d}+x_2 \frac{n_2}{d}+\cdots+x_k\frac{n_k}{d}$,
	with~$x_1,x_2,\ldots,x_k\ge0$, 
        is~$g(\frac{n_1}{d},\frac{n_2}{d},\ldots,\frac{n_k}{d})$.
		Multiplying by $d$, we get the second claim. 
\end{proof}
 
Let~$f(n_1, n_2,\ldots,n_k)=g(n_1,n_2, \ldots,n_k)+n_1+ n_2+\cdots+n_k$
be the modified Frobenius number, that is, the largest integer
which is  not representable by positive integer linear combinations.
Using this notation, we have the following result.

\begin{theorem}  
	\label{a:thm:cyclic_unary}
	Let~$A_1, A_2,\ldots,A_k$ be unary cyclic  automata
	with~$n_1, n_2,\ldots,n_k$ states, respectively.
	Let~$d=\gcd(n_1,n_2, \ldots,n_k)$.
	Then~$L(A_1) L(A_2) \cdots L(A_k)$
	is recognized by a DFA 
    of size~$(\lambda,\mu)$,
    where~$\lambda=d$ and~$\mu=d\cdot f(\frac{n_1}{d}, \frac{n_2}{d},\ldots,\frac{n_k}{d})-k+1$, and this upper bound is tight.    
\end{theorem}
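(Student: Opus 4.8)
The plan is to work with the length sets rather than with the automata directly. For $i=1,2,\ldots,k$ put $S_i=\{m\ge 0\mid a^m\in L(A_i)\}$ and $S=\{N\mid a^N\in L(A_1)L(A_2)\cdots L(A_k)\}$, so that $S=S_1+S_2+\cdots+S_k$ is a sumset. Since each $A_i$ is cyclic (its tail is empty), $S_i$ is a union of full residue classes modulo $n_i$ and satisfies $S_i+n_i\subseteq S_i$; I may assume each $L(A_i)\ne\emptyset$, as otherwise the concatenation is empty. The size $(\lambda,\mu)$ of the minimal unary DFA is governed by the ultimate periodicity of $S$, where $\lambda$ is the minimal eventual period and $\mu$ the first index from which that period holds. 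The whole theorem thus reduces to two claims about $S$: that $S$ is eventually periodic with period dividing $d$ starting no later than position $d\,f(n_1/d,\ldots,n_k/d)-k+1$, and that for a suitable choice of the $A_i$ both the period and this starting index are attained exactly.

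For the upper bound I would decompose $S$ into single-class sumsets. For each tuple $r=(r_1,\ldots,r_k)$ with $r_i\in S_i\cap[0,n_i)$, set $S^{(r)}=\sum_i r_i+R$, where $R=\{\sum_i y_i n_i\mid y_i\ge 0\}$; since $S_i$ is a union of residue classes, $S=\bigcup_r S^{(r)}$ is a finite union. By Lemma~\ref{le:largest_cannot}, every multiple of $d$ exceeding $d\,g(n_1/d,\ldots,n_k/d)$ lies in $R$, so each $S^{(r)}$ is $d$-periodic from position $\sum_i r_i+d\,g(n_1/d,\ldots,n_k/d)+1$. Because $r_i\le n_i-1$, this threshold is at most $\sum_i(n_i-1)+d\,g(n_1/d,\ldots,n_k/d)+1=d\,f(n_1/d,\ldots,n_k/d)-k+1=\mu$, using the identity $d\,f=d\,g+\sum_i n_i$. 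A finite union of sets that are $d$-periodic from position $\mu$ is again $d$-periodic from $\mu$, so $S$ is recognized by a DFA of size $(d,\mu)$; this gives the upper bound, the true period merely dividing $d$ while a cycle of length $d$ always suffices.

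For tightness I would take $A_i$ with its single final state at position $n_i-1$, i.e.\ $S_i=\{m\mid m\equiv n_i-1\pmod{n_i}\}$. Then $S=\{M-k\mid M=\sum_i z_i n_i,\ z_i\ge 1\}$, so $S$ lies in the single residue class $-k$ modulo $d$, and an integer $N\equiv -k\pmod d$ belongs to $S$ iff $N+k$ is a \emph{positive} integer combination of the $n_i$. By the modified Frobenius form of Lemma~\ref{le:largest_cannot}, the largest multiple of $d$ not so representable is $d\,f(n_1/d,\ldots,n_k/d)$, whence the largest $N\notin S$ in this class is $d\,f(n_1/d,\ldots,n_k/d)-k$. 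Consequently the state at position $\mu-1$ (non-final) and the state at position $\mu-1+d$ (final) are distinguished by the empty word, forcing the tail to be at least $\mu$; with the upper bound the tail is exactly $\mu$. Finally, in the cycle of length $d$ exactly one state, the one congruent to $-k$, is final, so the minimal period cannot be a proper divisor of $d$ and $\lambda=d$ exactly.

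The main obstacle is the upper bound for \emph{arbitrary} cyclic automata, where each $S_i$ may be a union of several residue classes and the sumset could a priori stabilize late or with a subtle period. The single-class decomposition is what tames this, reducing everything to the classical Frobenius estimate of Lemma~\ref{le:largest_cannot} applied to the sparsest admissible case $r_i=n_i-1$; once that reduction is in place the remaining steps are routine arithmetic. I would also take care to argue that the minimal period is exactly $d$ and not a proper divisor, which is immediate from the single final state in the witness cycle but relies on the explicit residue bookkeeping above.
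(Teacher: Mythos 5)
Your proposal is correct and follows essentially the same route as the paper: the upper bound via splitting each cyclic language into residues plus non-negative multiples (your sumsets $S^{(r)}$ mirror the paper's factorization $L_i=Z_i(a^{n_i})^*$) and then invoking Lemma~\ref{le:largest_cannot}, and tightness via the very same witnesses $a^{n_i-1}(a^{n_i})^*$ with the same two minimality facts (a single final state on the cycle of length $d$, and $a^{\mu-1}\notin L$ while $a^{\mu-1+d}\in L$). The only difference is presentational: you phrase the upper bound as $d$-periodicity of a finite union of shifted copies of the representable set, whereas the paper argues the biconditional $a^m\in L\Leftrightarrow a^{m+d}\in L$ directly on representations.
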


\begin{proof}
 Denote~$L_i=L(A_i)$ and~$L=L_1L_2\cdots L_k$.
 We show that~$L$ is recognized by a unary DFA of size~$(\lambda,\mu)$.
 By \cite[Theorem~2]{ps02}, it is enough to show that for 
 every~$m\ge d\cdot f(\frac{n_1}{d}, \frac{n_2}{d}, \ldots,\frac{n_k}{d})-k+1$,
 we have $a^m\in L$ if and only if $a^{m+d}\in L$.	
	
 We can write each language~$L_i$ as $L_i=Z_i(a^{n_i})^*$ 
 where $Z_i = L_i \cap \{a^x\mid 0\le x < n_i\}$; 
 cf.~\cite[Proof of Theorem~8]{ps02}.
 Let $m\ge d \cdot f(\frac{n_1}{d},\frac{n_2}{d},  \ldots,\frac{n_k}{d})-k+1$.

 If~$a^m\in L$,
 then~$m=z_1+x_1 n_1 + z_2+x_2 n_2+\cdots + z_k + x_k n_k$
 where~$a^{z_i}\in Z_i$ and~$x_i\ge0$.
 Since~$m\ge d \cdot f(\frac{n_1}{d}, \frac{n_2}{d},  \ldots,\frac{n_k}{d})-k+1$, we get
\begin{align*}
 &x_1 n_1+ x_2 n_2+\cdots +x_k n_k \ge 
 d \cdot  f(\frac{n_1}{d},\frac{n_2}{d},\ldots,\frac{n_k}{d})-k+1-z_1-z_2-\cdots- z_k\ge\\ 
 & d\cdot f(\frac{n_1}{d},\frac{n_2}{d},\ldots,\frac{n_k}{d})-k+1-(n_1-1)-(n_2-1)- 
       \cdots -(n_k-1)=    
   d\cdot g(\frac{n_1}{d},\frac{n_2}{d}, \ldots,\frac{n_k}{d}) +1.
\end{align*}
 Since~$x_1 n_1+x_2 n_2+\cdots +x_k n_k $ is a multiple of~$d$,
 it follows from  Lemma~\ref{le:largest_cannot}
 that we have $x_1 n_1+x_2 n_2+\cdots +x_k n_k +d = x_1' n_1 + x_2' n_2+\cdots + x_k' n_k$
 for some $x_1',x_2',\ldots,x_k'\ge0$.
 Therefore
 $
    m+d = z_1 +x_1'n_1+ z_2 + x_2' n_2 + \cdots + z_k + x_k' n_k,
 $
 so~$a^{m+d} \in L$.
	
 Conversely, 
 if~$a^{m+d}\in L$, then $m+d=z_1+x_1 n_1 + z_2+x_2 n_2+\cdots + z_k + x_k n_k$
 where~$a^{z_i}\in Z_i$ and $x_i\ge0$. 
 Since~$m\ge d \cdot f(\frac{n_1}{d}, \frac{n_2}{d},  \ldots,\frac{n_k}{d})-k+1$, 
 similarly as in the previous paragraph, we get
\begin{align*}
 x_1 n_1+x_2 n_2+\cdots +x_k n_k-d 
 \ge d\cdot g(\frac{n_1}{d}, \frac{n_2}{d},\ldots,\frac{n_k}{d}) +1,
\end{align*}
 and therefore~$ x_1 n_1+x_2 n_2+\cdots+x_k n_k-d = x_1' n_1+x_2'n_2+\cdots+x_k'n_k$
 for some~$x_1', x_2',\ldots,x_k'\ge~0$.
 Thus~$m=z_1 +x_1'n_1+ z_2 + x_2' n_2 + \cdots + z_k + x_k' n_k$
 and~$a^m\in L$.
	
 To get tightness, consider unary cyclic languages~$L_i=a^{n_i-1}(a^{n_i})^*$
 recognized by unary cyclic~$n_i$-state automata. Let $L=L_1L_2\cdots L_k$.
 As shown above, the language~$L$ is recognized by a unary DFA~$A$
 with a tail of length~$d\cdot f(\frac{n_1}{d},\ldots,\frac{n_k}{d})-k+1$
 and a cycle of size $d$.
 Next, we have~$a^m\in L$ 
 if and only if
 $
  m=(n_1-1)+(n_2-1)+\cdots+(n_k-1)+ x_1 n_1 +x_2 n_2 +\cdots+x_k n_k 
 $
 for some~$x_1,x_2,\ldots,x_k\ge0$. 
 Since~$x_1 n_1 +x_2 n_2 +\cdots+x_k n_k$ is a multiple of $d$,
 the cycle of size $d$ has exactly one final state,
 and therefore it is minimal.
 Furthermore, 
 a string~$a^{d\cdot f(\frac{n_1}{d},\frac{n_2}{d},\ldots,\frac{n_k}{d})-k+\ell d}$ is in $L$ 
 if and only if 
 \[
  d\cdot f(\frac{n_1}{d},\frac{n_2}{d},\ldots,\frac{n_k}{d})-k+\ell d=
  (n_1-1)+(n_2-1)+\cdots+(n_k-1)+ x_1 n_1 +   \cdots+x_k n_k
  \]
 for some~$x_1,x_2,\ldots,x_k\ge0$, 
 which holds if and only if
 \[
   d\cdot g(\frac{n_1}{d},\frac{n_2}{d},\ldots,\frac{n_k}{d})
   +\ell d=x_1n_1+x_2n_2+\cdots+x_kn_k.
 \]
 By Lemma~\ref{le:largest_cannot}, 
 it follows that~$a^{d\cdot f(\frac{n_1}{d},\frac{n_2}{d},\ldots,\frac{n_k}{d})-k}\notin L$,
 while $a^{d\cdot f(\frac{n_1}{d},\frac{n_2}{d},\ldots,\frac{n_k}{d})-k+d}\in L$.
 Hence $A$ is minimal.
\end{proof}

By~\cite[Proposition~2.2]{ga14},
if~$n_1\le n_2\le  \cdots \le n_k$,
then~$g(n_1,n_2,\ldots,n_k)\le n_1n_k$. 
This gives an upper bound $n_1n_k/d +n_1+\cdots+n_k-k+1+d$
for  concatenation of $k$ cyclic languages  
where $n_1\le n_2 \le \cdots\le n_k$ and $d=\gcd(n_1,n_2,\ldots,n_k)$.
The result of the previous theorem can be generalized as follows. 

\begin{corollary}
\label{cor_unary}
    For~$i=1,2,\ldots,L_k$,
    let~$L_i=a^{\mu_i}Y_i$
     where~$Y_i$
    is~$\lambda_i$-cyclic be a language
    recognized by a DFA of size~$(\lambda_i,\mu_i)$.
    Let~$d=\gcd(\lambda_1,\lambda_2,\ldots,\lambda_k)$.
    Then the language~$L_1L_2\cdots L_k$
    is recognized by a DFA of size~$(\lambda,\mu)$
    where~$\lambda=d$
    and~$\mu=\mu_1+\mu_2+\cdots+\mu_k+d\cdot f(\frac{\lambda_1}{d},\frac{\lambda_2}{d},\ldots,\frac{\lambda_k}{d})-k+1$,
    and this upper bound is tight.
\end{corollary}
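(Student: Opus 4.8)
The plan is to reduce the claim directly to Theorem~\ref{a:thm:cyclic_unary}. Since all languages are unary, concatenation is commutative, so writing $M=\mu_1+\mu_2+\cdots+\mu_k$ I would first rewrite
\[
 L_1L_2\cdots L_k=a^{\mu_1}Y_1\,a^{\mu_2}Y_2\cdots a^{\mu_k}Y_k=a^{M}(Y_1Y_2\cdots Y_k).
\]
Each $Y_i$ is $\lambda_i$-cyclic, hence recognized by a purely cyclic $\lambda_i$-state automaton, so Theorem~\ref{a:thm:cyclic_unary} applies to the product $Y_1Y_2\cdots Y_k$ with $\lambda_i$ playing the role of $n_i$. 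It yields a DFA $B$ for $Y_1Y_2\cdots Y_k$ of size $(d,\mu_B)$ with $d=\gcd(\lambda_1,\ldots,\lambda_k)$ and $\mu_B=d\cdot f(\frac{\lambda_1}{d},\ldots,\frac{\lambda_k}{d})-k+1$.

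For the upper bound I would then prepend a chain of $M$ fresh non-final tail states to $B$, the last of them leading into the initial state of $B$; the resulting automaton recognizes $a^{M}L(B)$ and has size $(d,M+\mu_B)=(\lambda,\mu)$, exactly the claimed bound. Equivalently one invokes the characterization of \cite[Theorem~2]{ps02}: for $m\ge M$ we have $a^m\in a^ML(B)\iff a^{m-M}\in L(B)$ and $a^m\notin a^ML(B)$ for $m<M$, so since $L(B)$ satisfies $a^j\in L(B)\iff a^{j+d}\in L(B)$ for all $j\ge\mu_B$, the language $a^ML(B)$ satisfies $a^m\in a^ML(B)\iff a^{m+d}\in a^ML(B)$ for all $m\ge M+\mu_B$.

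For tightness I would reuse the witnesses of the theorem, taking $Y_i=a^{\lambda_i-1}(a^{\lambda_i})^*$ together with arbitrary tail lengths $\mu_i$, so that $L_i=a^{\mu_i}Y_i$ indeed has a minimal DFA of size $(\lambda_i,\mu_i)$; the theorem already guarantees that $B$ is minimal, i.e.\ its period is exactly $d$ and its preperiod is exactly $\mu_B$. It then remains to show that prepending $a^M$ neither shortens the period nor collapses the new tail. The minimal period of $a^ML(B)$ cannot drop below $d$, because its eventual membership pattern is that of $L(B)$ shifted by $M$, whose least period is $d$ (the size-$d$ cycle of $B$ carries exactly one final state). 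The preperiod equals $M+\mu_B$: minimality of $\mu_B$ gives $a^{\mu_B-1}\in L(B)\not\iff a^{\mu_B-1+d}\in L(B)$, and shifting by $M$ turns this into $a^{M+\mu_B-1}\in a^ML(B)\not\iff a^{M+\mu_B-1+d}\in a^ML(B)$, so the behaviour is still non-periodic one step before $M+\mu_B$. Hence the minimal DFA of $L_1L_2\cdots L_k$ has size $(d,M+\mu_B)=(\lambda,\mu)$.

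The main obstacle I anticipate is precisely this last tightness step: one must argue that prefixing by the fixed block $a^M$ does not allow the automaton to be compressed, i.e.\ that the preperiod increases by exactly $M$ while the period stays $d$. Everything else is the commutativity rewriting plus a direct citation of Theorem~\ref{a:thm:cyclic_unary}; the only place requiring genuine care is verifying that both invariants of $B$'s minimality — its cycle length $d$ and its preperiod $\mu_B$ — are inherited by $a^ML(B)$, which I would handle cleanly through the eventual-periodicity characterization of \cite{ps02} rather than by tracking individual states.
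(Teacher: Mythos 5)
Your proposal is correct and takes essentially the same route as the paper: rewrite $L_1L_2\cdots L_k=a^{\mu_1+\cdots+\mu_k}\,Y_1Y_2\cdots Y_k$ by commutativity of unary concatenation, apply Theorem~\ref{a:thm:cyclic_unary} to the cyclic part, attach a tail of length $\mu_1+\cdots+\mu_k$ (the paper phrases this as merging the final state of a DFA for the singleton $a^{\mu_1+\cdots+\mu_k}$ with the initial state of the DFA for $Y_1Y_2\cdots Y_k$, citing~\cite[Theorem~6]{ps02}), and use the identical witnesses $L_i=a^{\mu_i+\lambda_i-1}(a^{\lambda_i})^*$. Your explicit check that the shift preserves the period $d$ and increases the preperiod by exactly $\mu_1+\cdots+\mu_k$ is a detail the paper leaves implicit, and it is sound (up to the degenerate case $d\cdot f(\frac{\lambda_1}{d},\ldots,\frac{\lambda_k}{d})-k+1=0$, where the non-periodicity point lies at the tail boundary itself rather than inside $L(B)$'s pattern).
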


\begin{proof}
    The language~$L_1L_2\cdots L_k$
    is a concatenation of the singleton
    language~$a^{\mu_1+\mu_2+\cdots+\mu_k}$
    recognized by a DFA of size~$(1,\mu_1+\mu_2+\cdots+\mu_k+1)$
    and the concatenation of cyclic languages~$Y_1Y_2\cdots Y_k$.
    Now the result follows from the previous theorem
    since we can  merge the final state of the  automaton
    for the singleton language
    with the initial state of the DFA for~$Y_1Y_2\cdots Y_k$; cf.~\cite[Theorem~6]{ps02}.
    The upper bound is met by languages~$L_i=a^{\mu_i+\lambda_i-1}(a^{\lambda_i})^*$.
\end{proof} 
 
In the case of concatenation of two languages,
the length of the resulting cycle
may be equal to the least common multiple of the lengths of  cycles
in given automata providing that they have final states
in their tails~\cite[Theorems~10 and~11]{ps02}.
The next example shows that in some cases
this is the optimal way how to get the maximum complexity 
of concatenation of languages recognized by~$m$-state and~$n$-state
unary DFAs, respectively.
 
\begin{example}\rm
   Given an~$m$-state and~$n$-state unary DFA,
   their concatenation requires~$mn$ states if~$\gcd(m,n)=1$.
   If~$\gcd(m,n)>1$, then we may try to take DFAs
   with smaller cycles of sizes~$m-i$ and~$n-j$, 
   and inspect the complexity
   of concatenation of languages
   recognized by automata of sizes~$(m-i,i)$
   and~$(n-j,j)$.
   
   As shown in~\cite[Theorem~11]{ps02}
   the minimal DFA for concatenation of the unary
   languages~$\{\eps\}\cup a^{m-1}(a^{m-2})^*$
   and~$\{\eps\}\cup a^{n-1}(a^{n-2})^*$,
   that are recognized by automata of 
   sizes~$(m-2,2)$ and~$(n-2,2)$,
   with the set of final states~$\{0,m-1\}$ and~$\{0,n-1\}$, respectively,
   has~$2\lcm(m-2,n-2)+3$ states.
   By our computations,
   the smallest~$m$ and~$n$, 
   for which such automata provide the maximum
   complexity among all automata
   of sizes~$(m-i,i)$ and~$(n-j,j)$,
   are~$m=137\,712$ and~$n=127\,206$.

   Nevertheless, it looks like sometimes
   it could be helpful to decrease the lengths
   of cycles not by two, but just by one,
   and setting the   final state sets
   to~$\{0,m-2\}$ and~$\{0,n-2\}$, respectively;
   our aim is to have a state in both tails,
   and then, to get minimal DFAs,
   the states~$m-1$ and~$n-1$
   have to be  non-final.
   Then, similarly as in the proof of \cite[Theorem~11]{ps02}
   we show that the minimal DFA
   recognizing the concatenation of these two languages has~$2\lcm(m-1,n-1)-1$ states
   provided that~$\gcd(m-1,n-1)>1$ and
   neither~$m-1$ nor~$n-1$ is a multiple of the other.

   Our next goal is to find~$m$ and~$n$
   such that the maximum of complexities of concatenation
   of languages recognized by all automata
   of sizes~$(m-i,i)$ and~$(n-j,j)$
   is achieved if~$i=j=1$ 
   and~$\gcd(m-1,n-1)=2$ by the  languages mentioned in the previous paragraph. In such a case, we have~$2\lcm(m-1,n-1)-1=(m-1)(n-1)-1$.

   By~\cite[Theorems~10 and~12]{ps02},
   the complexity of concatenation of languages
   recognized by automata of sizes~$(m-i,i)$
   and~$(n-j,j)$ is
   at most~$(m-i)(n-j)+i+j$
   if~$\gcd(m-i,n-j)=1$, and at most~$2\lcm(m-i,n-j)+i+j-1$
     if~$\gcd(m-i,n-j)>1$.
   In both cases, the resulting complexity is at most~$(m-i)(n-j)+i+j$.
    Denote this number by~$c_{i,j}=(m-i)(n-j)+i+j$.
  The reader may verify that
\begin{align*}
    c_{i,j} &< (m-1)(n-1)-1 \text{ for all~$i,j\ge1$ and~$(i,j)\ne(1,1)$}, \\
    c_{0,j} & <(m-1)(n-1)-1 \text{ if $j\ge2$ and~$ n+2<m$},\\   
    c_{i,0}  & <(m-1)(n-1)-1  \text{ if $i\ge3$ and~$m<2n-3$}.   
\end{align*}
 If follows that 
  the complexity~$(m-1)(n-1)-1$ could possibly
  be exceeded only by automata of sizes~$(m-i,i)$
  and~$(n-j,j)$ where~$(i,j)\in \{(0,0), (0,1), (1,0), (2,0)\}$.
   Assume that in all of this cases,
   we have~$\gcd(m-i,n-j)\ge3$.
   Then, providing that~$m,n \ge8$, the
   complexity of the  corresponding
   concatenations in these four cases is at most
     $$2 \lcm (m-i,n-j)+i+j-1 < \frac{2}{3}(m-i)(n-j) +i+j
  \le\frac{2}{3} mn +3 <(m-1)(n-1)-1.$$

 Now, let~$m=471$ and~$n=315$. 
 Then~$\gcd(m-1,n-1)=2$ and~$n+2 < m < 2n -3$.
 Moreover, \, $\gcd(471,315)=3$, $\gcd(471,314)=157$,
 $\gcd(470,315)=5$, and~$\gcd(469,315)=7$.
 This means that the maximum complexity of concatenation of a 471-state and 315-state
 unary DFA is achieved by automata
 of sizes~$(470,1)$ and~$(314,1)$
 recognizing languages~$\{\eps\}\cup a^{469}(a^{470})^*$ and~$\{\eps\}\cup a^{313}(a^{314})^*$,
 that is, by automata that have a final state
 in their tails.
\qed   
\end{example}

Motivated by our previous examples,
we finally  
consider the state complexity of the concatenation of~$k$ lan\-guages
recognized by unary automata
that  have
final states in their tails.
While in our previous two theorems,
the length of the resulting cycle 
was equal to the greatest common divisor
of the lengths of cycles in the given automata,
here, similarly to the case of concatenation of two languages (cf.~\cite[Theorems~10,~11]{ps02}),
it may be equal to their least common multiple.
We cannot obtain~a~tight~upper~bound here,
nevertheless, we provide an example that meets our upper bound.
 
\begin{theorem}
\label{thm:unary_final_tails}
    For~$i=1,2,\ldots,k$,
    let~$A_i$ be a 
    unary DFA of size~$(\lambda_i,\mu_i)$.
    For each non-empty set~$I=\{i_1,i_2,\ldots,i_\ell\}\subseteq\{1,2,\ldots,k\}$,
    let~
    \begin{align*}        d_I&=\gcd(\lambda_{i_1},\lambda_{i_2},\ldots,\lambda_{i_\ell}),
    \\
    f(I)&=f(\frac{\lambda_{i_1}}{d_I},\frac{\lambda_{i_2}}{d_I},\ldots,\frac{\lambda_{i_\ell}}{d_I}),
    \end{align*}
    and set~$d_\emp=1$ and~$f(\emp)=0$.
    Then the language~$L(A_1)L(A_2)\cdots L(A_k)$
    is recognized by a DFA of size~$(\lambda,\mu)$
    where
    \begin{align*}         \lambda&=\lcm(\lambda_1,\lambda_2,\ldots,\lambda_k)
    \\
    \mu &= \max\{\mu_1+\mu_2+\cdots+\mu_k - k+1+d_I\cdot f(I)
    \mid I\subseteq\{1,2,\ldots,k\}\}.
    \end{align*}	
\end{theorem}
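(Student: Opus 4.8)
The plan is to apply the criterion of \cite[Theorem~2]{ps02}, exactly as in the proof of Theorem~\ref{a:thm:cyclic_unary}: writing $L=L(A_1)L(A_2)\cdots L(A_k)$, it suffices to find a size $(\lambda,\mu)$ for which $a^m\in L\iff a^{m+\lambda}\in L$ holds for every $m\ge\mu$, and then take $\lambda$ and $\mu$ as in the statement. Let $M_i=\{x\mid a^x\in L(A_i)\}$; since $A_i$ has size $(\lambda_i,\mu_i)$, for every $x\ge\mu_i$ we have $x\in M_i\iff x+\lambda_i\in M_i$. Then $a^m\in L$ exactly when $m=m_1+m_2+\cdots+m_k$ with each $m_i\in M_i$, and in such a decomposition I would call an index $i$ \emph{cyclic} if $m_i\ge\mu_i$ (so $m_i$ lies in the periodic part of $M_i$, and $m_i\pm\lambda_i$ remains in $M_i$ whenever it is $\ge\mu_i$) and \emph{tail} if $0\le m_i\le\mu_i-1$.

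The forward implication is easy. Since $\mu\ge\mu_1+\mu_2+\cdots+\mu_k-k+1$ (this is the term $I=\emp$, because $d_\emp f(\emp)=0$), any $m\ge\mu$ with $a^m\in L$ admits a decomposition $m=\sum_i m_i$ in which not all indices are tail; otherwise $m\le\sum_i(\mu_i-1)=\mu_1+\cdots+\mu_k-k<\mu$. Choosing a cyclic index $i_0$ and replacing $m_{i_0}$ by $m_{i_0}+\lambda$, which stays in $M_{i_0}$ since $\lambda_{i_0}\mid\lambda$, yields $a^{m+\lambda}\in L$.

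The backward implication is the main obstacle and is where the Frobenius machinery of Lemma~\ref{le:largest_cannot} enters. Given $a^{m+\lambda}\in L$ with $m\ge\mu$, I would fix a decomposition $m+\lambda=\sum_i m_i$, let $I=\{i_1,\ldots,i_\ell\}$ be its set of cyclic indices, and put $e_i=m_i-\mu_i\ge0$ for $i\in I$. Bounding the tail contributions by $\sum_{j\notin I}m_j\le\sum_{j\notin I}(\mu_j-1)$ and using $m\ge\mu\ge\mu_1+\cdots+\mu_k-k+1+d_I f(I)$, one derives $\sum_{i\in I}e_i\ge\lambda+d_I f(I)+1-|I|$. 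Writing $e_i=r_i+\lambda_i t_i$ with $r_i=e_i\bmod\lambda_i$ and $t_i\ge0$, and setting $T=\sum_{i\in I}\lambda_i t_i$, this gives $T-\lambda\ge d_I f(I)+1-\sum_{i\in I}\lambda_i=d_I\,g(\frac{\lambda_{i_1}}{d_I},\ldots,\frac{\lambda_{i_\ell}}{d_I})+1$, using $f(I)=g(I)+\sum_{i\in I}\lambda_i/d_I$. As $T-\lambda$ is a multiple of $d_I$ strictly exceeding the largest non-representable multiple, Lemma~\ref{le:largest_cannot} writes it as $T-\lambda=\sum_{i\in I}\lambda_i t_i'$ with all $t_i'\ge0$. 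Then $m_i'=\mu_i+r_i+\lambda_i t_i'\in M_i$ for $i\in I$, together with the unchanged tail parts, is a decomposition of $m$, so $a^m\in L$.

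The only delicate points I expect are the edge conventions $d_\emp=1$, $f(\emp)=0$, and the cases where a cycle or a tail of some $A_i$ carries no final state so that $M_i$ has an empty periodic or empty tail part; but each is handled automatically, since an index with empty periodic part is never cyclic, an index with empty tail part is forced to be cyclic, and the degenerate Frobenius values (for instance $g=-1$ when some $\lambda_i/d_I=1$) stay consistent with Lemma~\ref{le:largest_cannot}. Combining the two implications, \cite[Theorem~2]{ps02} produces a DFA of size $(\lambda,\mu)$, as claimed.
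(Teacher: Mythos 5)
Your proposal is correct, but it proves the theorem by a genuinely different route than the paper. The paper argues compositionally: it splits each $L_i$ as $X_i\cup a^{\mu_i}Y_i$ (finite tail part, shifted cyclic part), expands the product into the union $\bigcup_{I}\prod_{j\notin I}X_j\prod_{i\in I}a^{\mu_i}Y_i$ over all $I\subseteq\{1,\ldots,k\}$, handles each cyclic product via Corollary~\ref{cor_unary}, and then assembles the size $(\lambda,\mu)$ using the cited closure results of \cite{ps02} for concatenation with a finite language (their Theorem~6) and for union (their Theorem~4); the maximum over $I$ arises as the longest tail among the terms of the union, and $\lambda$ as the lcm of the cycle lengths $d_I$. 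You instead verify the ultimate-periodicity criterion of \cite[Theorem~2]{ps02} for $L$ directly: the forward implication by bumping one cyclic summand by $\lambda$, and the backward implication by isolating the cyclic indices $I$ of a decomposition of $m+\lambda$, reducing modulo the $\lambda_i$, and invoking Lemma~\ref{le:largest_cannot} to re-represent $T-\lambda$ (your arithmetic here checks out, including the degenerate cases $g=-1$ and $I$ forced non-empty, and the needed fact $d_I\mid\lambda$ since $d_I\mid\lambda_{i_1}\mid\lambda$). What the paper's route buys is brevity and reuse: given Corollary~\ref{cor_unary} and the two closure theorems, the proof is a few lines and the formula for $\mu$ is structurally transparent. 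What your route buys is self-containedness and a sharper explanation of where the maximum over $I$ is actually needed — it covers precisely the ignorance of which indices of a decomposition fall in their cycles — and in effect your argument subsumes Theorem~\ref{a:thm:cyclic_unary} and Corollary~\ref{cor_unary} as the special case $\mu_i=0$ (all indices cyclic), at the cost of redoing the Frobenius manipulation globally rather than inheriting it.
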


\begin{proof}
    Let~$L_i=L(A_i)$ and~$L=L(A_1)L(A_2)\cdots L(A_k)$.
    We have~$L_i=X_i \cup a^{\mu_i}Y_i$
    where~$X_i=L(A_i)\cap\{a^x\mid 0\le x <\mu_i\}$
    and~$Y_i=\{a^x \mid a^{\mu_i+x}\in L(A)\}$.
    Then
    \[
       L= \bigcup_{I\subseteq\{1,2,\ldots,k\}}\prod_{j\notin I}X_j
       \prod_{i\in I}a^{\mu_i}Y_i.    
    \]
    
    For each subset~$I$, 
    the language~$\prod_{j\notin I}X_j$
    is a finite language
    recognized by a unary DFA of 
    size~$(1,1+\sum_{j\notin I}(\mu_j-1))$,
    and by Corollary~\ref{cor_unary},
    the language~$\prod_{i\in I}a^{\mu_i}Y_i$
    is recognized by a DFA 
    of size~$(d_I,1+d_I\cdot F(I)+ \sum_{i\in I}(\mu_i-1))$. 
    The concatenation of these two languages
    is recognized by a DFA of 
    size~$(d_I,\mu_1+\mu_2+\cdots+\mu_k-k+1+d_I\cdot f(I))$; cf.~\cite[Theorem~6]{ps02}.
    Then, the union of these concatenations
    is recognized by a DFA
    of size~$(\lambda,\mu)$
    by~\cite[Theorem~4]{ps02}.
    \end{proof}

\begin{example}\rm
    Consider unary DFAs~$A_1,A_2,A_3$
    of sizes~$(12,2)$, $(20,2)$, and~$(30,2)$,
     respectively, 
    with~$F_1=\{0,13\}$, $F_2=\{0,21\}$,
    and~$F_3=\{0,31\}$.  
    
    We have~$\lcm(12,20,30)=60$,
    $4\cdot f(3,5)=6\cdot f(2,5)=10\cdot f(2,3)=60$,
    and $2\cdot f(6,10,15)=2\cdot 2\cdot f(3,5,15)=2\cdot2\cdot 5\cdot f(3,1,3)=
     2\cdot2\cdot 5\cdot3\cdot  f(1,1,1)= 
   2\cdot2\cdot 5\cdot3\cdot 2=120$.
     
     The size of the minimal automaton recognizing the language~$L(A_1)L(A_2)L(A_3)$
     is~$(60,124)$ where~$124=2+2+2-3+1+\max\{60,120\}$.
\qed
\end{example}  

The above example shows that our upper bound
given by Theorem~\ref{thm:unary_final_tails}
is met by unary automata of sizes~$(12,2),(20,2),(30,2)$.
The tightness of this upper bound
in a general case remains open. 
  
\section{Conclusions}
\label{a:conclusions}

We examined in detail the state complexity of the multiple concatenation of~$k$ languages.
First, we described witness DFAs~$A_1,A_2,\ldots,A_k~$ over the~$(k+1)$-letter 
alphabet~$\{b,a_1,a_2,\ldots,a_k\}$,
in which each~$a_i$ performs the circular shift in~$A_i$ and the identity in the other automata,
while~$b$ performs a contraction.
Using symbols~$a_1, a_2,\ldots,a_k$,
we proved the reachability of all valid states  in the subset automaton for the concatenation
by carefully setting the~$i$th component without changing the already set~$(i+1)$th component. 
The transitions on~$b$ guaranteed the co-reachability of all singleton sets
in the NFA for concatenation, and therefore we obtained the proof of distinguishability 
of all states in the corresponding subset automaton for free.
However, to get co-reachability of singletons,
our witness automata were required to have at least three states.
Nevertheless, we described witness automata over a~$(k+1)$-letter
alphabet also in the case where some of them  have only two states.

Then we provided special
binary witnesses for the concatenation of two languages.
Using our results concerning witnesses over a~$(k+1)$-letter alphabet,
as well as the results for the special binary automata,
we described witnesses for the concatenation of~$k$ languages
over a~$k$-letter alphabet. This solves an open problem stated in \cite{clp18}.
For~$k=3$, we proved that the ternary alphabet is optimal in the sense
that the upper bound for the concatenation of three languages
cannot be met by any binary languages.
This provides a partial answer to the second open problem from  \cite{clp18}.
 
We also considered multiple concatenation on binary and ternary languages,
and obtained lower bounds~$n_1-1+(1/2^{2k-2}) 2^{n_2+n_3+\cdots+n_k}$
and~$(1/2^{2k-2})n_1 2^{n_2+n_3+\cdots+n_k}$, respectively.
This shows that the state complexity of multiple concatenation
remains exponential in~$n_2,n_3,\ldots,n_k$ in the binary case,
and that a trivial upper bound can be met,
up to a multiplicative constant depending on~$k$,
by ternary languages.

Finally, we investigated multiple concatenation on unary 
languages.
We obtained a tight upper bound for cyclic languages,
and we showed that for~$k\ge3$, 
it is much smaller than a trivial upper bound~$n_1 n_2 \cdots n_k$,
which is met by cyclic unary languages if~$k=2$ and~$\gcd(n_1,n_2)=1$ \cite[Theorem~5.4]{yzs94}.
We also provided a tight upper bound
for languages recognized by automata
that do not have final states in their tails.

Some problems remain open. First,  our~$k$-letter witnesses  require~$n_i\ge3$ for~$i=2,3,\ldots,k-1$,
while the~$(2k-1)$-letter witnesses in \cite[Theorem~5]{gy09} work  with~$n_i\ge2$.
Is~it~possible to define~$k$-letter  witnesses also in such a case? We can   do this using~$k+1$ letters,
or with~$k$ letters
if \emph{all} automata
have two states.
 
We proved the optimality of a ternary alphabet 
for the concatenation of three languages.
However, we cannot see any generalization of the proof. 
Is a~$k$-letter alphabet
for describing witnesses for the concatenation of~$k$ languages optimal?

Next, we provided upper bounds in the case where exactly one
automaton has one state, and using a binary alphabet 
we proved that they are tight if~$k=2$.
What is the state complexity of multiple concatenation
if some languages may be equal to~$\Sigma^*$?

Finally, in the unary case,
we obtained an upper bound for multiple concatenation
of languages recognized by unary automata 
that may have final states in their tails.
The tightness of this upper bound remains open. 
 
\bibliographystyle{splncs04}
\bibliography{catenation}

\end{document}